\tikzstyle{medium rectangle}=[fill=white, draw=black, shape=rectangle, minimum width=0.75 cm, minimum height=1 cm]
\tikzstyle{red}=[fill=red, draw=black, shape=circle]
\tikzstyle{system_label}=[fill=none, draw=none, shape=circle]
\tikzstyle{small square}=[fill=white, draw=black, shape=rectangle]
\tikzstyle{big rectangle}=[fill=white, draw=black, shape=rectangle, minimum width=1.2 cm, minimum height=1.5 cm]
\tikzstyle{tall rectangle}=[fill=white, draw=black, shape=rectangle, minimum width=1.2 cm, minimum height=2.1 cm]
\tikzstyle{medium square}=[fill=white, draw=black, shape=rectangle, minimum width=0.6 cm, minimum height=0.6 cm]
\tikzstyle{huge rectangle}=[fill=white, draw=black, shape=rectangle, minimum width=2 cm, minimum height=4.5 cm]
\tikzstyle{yellow small}=[fill={rgb,255: red,255; green,252; blue,144}, draw=black, shape=rectangle]
\tikzstyle{violet node}=[fill={rgb,255: red,195; green,187; blue,255}, draw=black, shape=rectangle, minimum width=1.2 cm, minimum height=1.5 cm]
\tikzstyle{pink node}=[fill={rgb,255: red,238; green,174; blue,255}, draw=black, shape=rectangle, minimum width=1.2 cm, minimum height=1.5 cm]
\tikzstyle{pompelmo node}=[fill={rgb,255: red,255; green,187; blue,166}, draw=black, shape=rectangle, minimum width=1.2 cm, minimum height=1.5 cm]
\tikzstyle{ottano node}=[fill={rgb,255: red,0; green,195; blue,195}, draw=black, shape=rectangle, minimum width=0.8 cm, minimum height=0.5 cm]
\tikzstyle{blue}=[fill={rgb,255: red,214; green,201; blue,255}, draw=black, shape=rectangle]
\tikzstyle{vert-rectangle}=[fill=white, draw=black, shape=rectangle, minimum width=1.8 cm, minimum height=0.8 cm]
\tikzstyle{ott-small}=[fill={rgb,255: red,0; green,195; blue,195}, draw=black, shape=rectangle]
\tikzstyle{ott-wide}=[fill={rgb,255: red,0; green,195; blue,195}, draw=black, shape=rectangle, minimum width=2.8 cm]
\tikzstyle{lambda}=[-, draw={rgb,255: red,191; green,191; blue,191}]
\tikzstyle{state}=[<-]
\tikzstyle{bluette}=[-, fill={rgb,255: red,214; green,201; blue,255}, draw={rgb,255: red,192; green,181; blue,229}]
\tikzstyle{greenish}=[-, fill={rgb,255: red,160; green,217; blue,255}, draw={rgb,255: red,137; green,188; blue,219}]
\tikzstyle{white}=[-, fill=white]
\tikzstyle{reddish}=[-, fill={rgb,255: red,255; green,143; blue,145}, dashed]
\tikzstyle{yellowish}=[-, fill={rgb,255: red,255; green,252; blue,144}, dashed, line width=1 pt]
\tikzstyle{red non-dashed}=[-, fill={rgb,255: red,255; green,143; blue,145}]
\tikzstyle{violet}=[-, dashed, fill={rgb,255: red,195; green,187; blue,255}]
\tikzstyle{green non-dashed}=[-, fill={rgb,255: red,160; green,217; blue,255}]
\tikzstyle{ontological}=[-, fill={Red!20}, draw={Red!40}]
\tikzstyle{pink}=[-, fill={rgb,255: red,238; green,174; blue,255}, dashed]
\tikzstyle{blue non-dashed}=[-, fill={rgb,255: red,178; green,255; blue,246}]
\tikzstyle{yell non-dashed}=[-, fill={rgb,255: red,255; green,248; blue,137}]
\tikzstyle{ottano}=[-, fill={rgb,255: red,164; green,184; blue,255}, draw={rgb,255: red,147; green,168; blue,229}]
\tikzstyle{pompelmo}=[-, fill={rgb,255: red,255; green,187; blue,166}, draw={rgb,255: red,213; green,156; blue,139}]
\tikzstyle{dark blue}=[-, fill={rgb,255: red,0; green,195; blue,195}, draw={rgb,255: red,0; green,98; blue,98}]
\tikzstyle{d.blue non-dashed}=[-, fill={rgb,255: red,237; green,148; blue,112}]
\tikzstyle{dashed edge}=[-, loosely dashed]
\tikzstyle{ottano non-drawn}=[-, fill={rgb,255: red,164; green,184; blue,255}]
\tikzstyle{norm_dotted}=[-, dotted, thick]
\tikzstyle{grey_densely_dashed}=[-, dashed, draw={rgb,255: red,156; green,156; blue,156}]
\tikzstyle{dot_grey}=[-, loosely dotted, draw={rgb,255: red,156; green,156; blue,156}, thick]
\tikzstyle{q-wire}=[-, draw={rgb,255: red,79; green,4; blue,134}, line width=1 pt]
\tikzstyle{m-wire}=[-, draw={rgb,255: red,139; green,185; blue,0}, line width=1.5 pt]
\tikzstyle{ont-wire}=[-, draw={black!40}, line width=0.9 pt]
\tikzstyle{xor}=[circuit ee IEC, bulb,fill=white,rotate=45]
\tikzstyle{merge}=[circuit ee IEC, bulb,fill=white]
\tikzstyle{dmerge}=[circuit ee IEC, bulb,fill=black!20]
\tikzset{->-/.style={decoration={
  markings,
  mark=at position .5 with {\arrow{>}}},postaction={decorate}}}
\tikzset{-<-/.style={decoration={
  markings,
  mark=at position .5 with {\arrow{<}}},postaction={decorate}}}
\tikzstyle{bwSpider}=[
 \tikzstyle{wbSpider}=[
\tikzstyle{epiCopoint}=[regular polygon,regular polygon sides=3,draw,scale=0.75,inner sep=-0.5pt,minimum width=5mm,fill=white,regular polygon rotate=0,line width=1pt]
\tikzstyle{epiPoint}=[regular polygon,regular polygon sides=3,draw,scale=0.75,inner sep=-0.5pt,minimum width=5mm,fill=white,regular polygon rotate=180,line width=1pt]
\tikzstyle{epiPointWide}=[regular polygon,regular polygon sides=3,draw,scale=0.75,inner sep=-0.5pt,minimum width=8mm,fill=white,regular polygon rotate=180,line width=1pt]
\tikzstyle{epiBox}=[fill=white,draw, line width = 1pt,inner sep=0.6mm,font=\footnotesize,minimum height=3mm,minimum width=3mm]
\tikzstyle{epiBoxWide}=[fill=white,draw, line width = 1pt,inner sep=0.6mm,font=\footnotesize,minimum height=3mm,minimum width=5mm]
\tikzstyle{epiBoxVeryWide}=[fill=white,draw, line width = 1pt,inner sep=0.6mm,font=\footnotesize,minimum height=3mm,minimum width=7mm]
\tikzstyle{bWire}=[line width = 1.1pt, color=violet]
\tikzstyle{oWire}=[line width = .5pt, color=black!50, double, double distance = 1pt]
\tikzstyle{osWire}=[line width = .9pt, color=black!40]
\tikzstyle{qWire}=[line width = 1pt, color=black]
\tikzstyle{cWire}=[color=gray,line width = .75pt]
\tikzstyle{CqWire}=[color=gray,line width = .75pt,->-]
\tikzstyle{CcWire}=[color=gray,line width = .75pt,->-]
\tikzstyle{RqWire}=[line width = 1pt, color=black,-<-]
\tikzstyle{RcWire}=[color=gray,line width = .75pt,-<-]
\tikzstyle{env}=[copoint,regular polygon rotate=0,minimum width=0.2cm, fill=black]
\tikzstyle{probs}=[shape=semicircle,fill=white,draw=black,shape border rotate=180,minimum width=1.2cm]
\tikzstyle{every picture}=[baseline=-0.25em,scale=0.5]
\tikzstyle{dotpic}=[] 
\tikzstyle{diredges}=[every to/.style={diredge}]
\tikzstyle{math matrix}=[matrix of math nodes,left delimiter=(,right delimiter=),inner sep=2pt,column sep=1em,row sep=0.5em,nodes={inner sep=0pt},text height=1.5ex, text depth=0.25ex]
\tikzstyle{inline text}=[text height=1.5ex, text depth=0.25ex,yshift=0.5mm]
\tikzstyle{label}=[font=\footnotesize,text height=1.5ex, text depth=0.25ex,yshift=0.5mm]
\tikzstyle{left label}=[label,anchor=east,xshift=1.5mm]
\tikzstyle{right label}=[label,anchor=west,xshift=-1mm]
\tikzstyle{up label}=[label,anchor=south,yshift=-1mm]
\tikzstyle{braceedge}=[decorate,decoration={brace,amplitude=2mm,raise=-1mm}]
\tikzstyle{small braceedge}=[decorate,decoration={brace,amplitude=1mm,raise=-1mm}]
\tikzstyle{doubled}=[line width=1.6pt] 
\tikzstyle{boldedge}=[doubled,shorten <=-0.17mm,shorten >=-0.17mm]
\tikzstyle{boldedgegray}=[doubled,gray,shorten <=-0.17mm,shorten >=-0.17mm]
\tikzstyle{singleedgegray}=[gray]
\tikzstyle{semidoubled}=[line width=1.4pt] 
\tikzstyle{semiboldedgegray}=[semidoubled,gray,shorten <=-0.17mm,shorten >=-0.17mm]
\tikzstyle{boxedge}=[semiboldedgegray]
\tikzstyle{boldedgedashed}=[very thick,dashed,shorten <=-0.17mm,shorten >=-0.17mm]
\tikzstyle{vboldedgedashed}=[doubled,dashed,shorten <=-0.17mm,shorten >=-0.17mm]
\tikzstyle{left hook arrow}=[left hook-latex]
\tikzstyle{right hook arrow}=[right hook-latex]
\tikzstyle{sembracket}=[line width=0.5pt,shorten <=-0.07mm,shorten >=-0.07mm]
\tikzstyle{causal edge}=[->,thick,gray]
\tikzstyle{causal nondir}=[thick,gray]
\tikzstyle{timeline}=[thick,gray, dashed]
\tikzstyle{cedge}=[<->,thick,gray!70!white]
\tikzstyle{empty diagram}=[draw=gray!40!white,dashed,shape=rectangle,minimum width=1cm,minimum height=1cm]
\tikzstyle{empty diagram small}=[draw=gray!50!white,dashed,shape=rectangle,minimum width=0.6cm,minimum height=0.5cm]
\tikzstyle{dot}=[inner sep=0mm,minimum width=2mm,minimum height=2mm,draw,shape=circle]
\tikzstyle{bigdot}=[inner sep=0mm,minimum width=5mm,minimum height=5mm,draw,shape=circle]
\tikzstyle{leak}=[white dot, shape=regular polygon, minimum size=3.3 mm, regular polygon sides=3, outer sep=-0.2mm, regular polygon rotate=270]
\tikzstyle{proj}=[regular polygon,regular polygon sides=4,draw,scale=0.75,inner sep=-0.5pt,minimum width=6mm,fill=white]
\tikzstyle{projOut}=[regular polygon,regular polygon sides=3,draw,scale=0.75,inner sep=-0.5pt,minimum width=7.5mm,fill=white,regular polygon rotate=180]
\tikzstyle{projIn}=[regular polygon,regular polygon sides=3,draw,scale=0.75,inner sep=-0.5pt,minimum width=7.5mm,fill=white]
\tikzstyle{Vleak}=[white dot, shape=regular polygon, minimum size=3.3 mm, regular polygon sides=3, outer sep=-0.2mm, regular polygon rotate=90]
\tikzstyle{dleak}=[white dot, line width=1.6pt, shape=regular polygon, minimum size=3.3 mm, regular polygon sides=3, outer sep=-0.2mm, regular polygon rotate=270]
\tikzstyle{Wsquare}=[white dot, shape=regular polygon, rounded corners=0.8 mm, minimum size=3.3 mm, regular polygon sides=3, outer sep=-0.2mm]
\tikzstyle{Wsquareadj}=[white dot, shape=regular polygon, rounded corners=0.8 mm, minimum size=3.3 mm, regular polygon sides=3, outer sep=-0.2mm, regular polygon rotate=180]
\tikzstyle{ddot}=[inner sep=0mm, doubled, minimum width=2.5mm,minimum height=2.5mm,draw,shape=circle]
\tikzstyle{clear dot}=[dot,fill=none,text depth=-0.2mm,draw=gray, line width = .75pt]
\tikzstyle{tall clear dot}=[dot,fill=none,text depth=-0.2mm,draw=gray, line width = .75pt,shape=ellipse, minimum height=5mm]
\tikzstyle{wide clear dot}=[dot,fill=none,text depth=-0.2mm,draw=gray, line width = .75pt, shape=ellipse, minimum width = 5mm]
\tikzstyle{very wide clear dot}=[dot,fill=none,text depth=-0.2mm,draw=gray, line width = .75pt, shape=ellipse, minimum width = 7mm ]
\tikzstyle{black dot}=[dot,fill=black]
\tikzstyle{white dot}=[dot,fill=white,,text depth=-0.2mm]
\tikzstyle{white Wsquare}=[Wsquare,fill=gray,,text depth=-0.2mm]
\tikzstyle{white Wsquareadj}=[Wsquareadj,fill=white,,text depth=-0.2mm]
\tikzstyle{green dot}=[white dot] 
\tikzstyle{gray dot}=[dot,fill=gray!40!white,,text depth=-0.2mm]
\tikzstyle{red dot}=[gray dot] 
\tikzstyle{black ddot}=[ddot,fill=black]
\tikzstyle{white ddot}=[ddot,fill=white]
\tikzstyle{gray ddot}=[ddot,fill=gray!40!white]
\tikzstyle{gray edge}=[gray!60!white]
\tikzstyle{small dot}=[inner sep=0.2mm,minimum width=0pt,minimum height=0pt,draw,shape=circle]
\tikzstyle{small black dot}=[small dot,fill=black]
\tikzstyle{small white dot}=[small dot,fill=white]
\tikzstyle{small gray dot}=[small dot,fill=gray,draw=gray]
\tikzstyle{causal dot}=[inner sep=0.4mm,minimum width=0pt,minimum height=0pt,draw=white,shape=circle,fill=gray!40!white]
\tikzstyle{phase dimensions}=[minimum size=5mm,font=\footnotesize,rectangle,rounded corners=2.5mm,inner sep=0.2mm,outer sep=-2mm]
\tikzstyle{dphase dimensions}=[minimum size=5mm,font=\footnotesize,rectangle,rounded corners=2.5mm,inner sep=0.2mm,outer sep=-2mm]
\tikzstyle{white phase dot}=[dot,fill=white,phase dimensions]
\tikzstyle{white phase ddot}=[ddot,fill=white,dphase dimensions]
\tikzstyle{white rect ddot}=[draw=black,fill=white,doubled,minimum size=5mm,font=\footnotesize,rectangle,rounded corners=2.5mm,inner sep=0.2mm]
\tikzstyle{gray rect ddot}=[draw=black,fill=gray!40!white,doubled,minimum size=6mm,font=\footnotesize,rectangle,rounded corners=3mm]
\tikzstyle{gray phase dot}=[dot,fill=gray!40!white,phase dimensions]
\tikzstyle{gray phase ddot}=[ddot,fill=gray!40!white,dphase dimensions]
\tikzstyle{grey phase dot}=[gray phase dot]
\tikzstyle{grey phase ddot}=[gray phase ddot]
\tikzstyle{small phase dimensions}=[minimum size=4mm,font=\tiny,rectangle,rounded corners=2mm,inner sep=0.2mm,outer sep=-2mm]
\tikzstyle{small dphase dimensions}=[minimum size=4mm,font=\tiny,rectangle,rounded corners=2mm,inner sep=0.2mm,outer sep=-2mm]
\tikzstyle{small gray phase dot}=[dot,fill=gray!40!white,small phase dimensions]
\tikzstyle{small gray phase ddot}=[ddot,fill=gray!40!white,small dphase dimensions]
\tikzstyle{small map}=[draw,shape=rectangle,minimum height=4mm,minimum width=4mm,fill=white]
\tikzstyle{cnot}=[fill=white,shape=circle,inner sep=-1.4pt]
\tikzstyle{asym hadamard}=[fill=white,draw,shape=NEbox,inner sep=0.6mm,font=\footnotesize,minimum height=4mm]
\tikzstyle{asym hadamard conj}=[fill=white,draw,shape=NWbox,inner sep=0.6mm,font=\footnotesize,minimum height=4mm]
\tikzstyle{asym hadamard dag}=[fill=white,draw,shape=SEbox,inner sep=0.6mm,font=\footnotesize,minimum height=4mm]
\tikzstyle{hadamard}=[fill=white,draw,inner sep=0.6mm,font=\footnotesize,minimum height=4mm,minimum width=4mm]
\tikzstyle{small hadamard}=[fill=white,draw,inner sep=0.6mm,minimum height=1.5mm,minimum width=1.5mm]
\tikzstyle{small hadamard rotate}=[small hadamard,rotate=45]
\tikzstyle{dhadamard}=[hadamard,doubled]
\tikzstyle{small dhadamard}=[small hadamard,doubled]
\tikzstyle{small dhadamard rotate}=[small hadamard rotate,doubled]
\tikzstyle{antipode}=[white dot,inner sep=0.3mm,font=\footnotesize]
\tikzstyle{scalar}=[diamond,draw,inner sep=0.5pt,font=\small]
\tikzstyle{dscalar}=[diamond,doubled, draw,inner sep=0.5pt,font=\small]
\tikzstyle{small box}=[rectangle,inline text,fill=white,draw,minimum height=5mm,yshift=-0.5mm,minimum width=5mm,font=\small]
\tikzstyle{small gray box}=[small box,fill=gray!30]
\tikzstyle{medium box}=[rectangle,inline text,fill=white,draw,minimum height=5mm,yshift=-0.5mm,minimum width=10mm,font=\small]
\tikzstyle{square box}=[small box] 
\tikzstyle{medium gray box}=[small box,fill=gray!30]
\tikzstyle{semilarge box}=[rectangle,inline text,fill=white,draw,minimum height=5mm,yshift=-0.5mm,minimum width=12.5mm,font=\small]
\tikzstyle{large box}=[rectangle,inline text,fill=white,draw,minimum height=5mm,yshift=-0.5mm,minimum width=15mm,font=\small]
\tikzstyle{large gray box}=[small box,fill=gray!30]
\tikzstyle{Bayes box}=[rectangle,fill=black,draw, minimum height=3mm, minimum width=3mm]
\tikzstyle{gray square point}=[small box,fill=gray!50]
\tikzstyle{dphase box white}=[dhadamard]
\tikzstyle{dphase box gray}=[dhadamard,fill=gray!50!white]
\tikzstyle{phase box white}=[hadamard]
\tikzstyle{phase box gray}=[hadamard,fill=gray!50!white]
\tikzstyle{point}=[regular polygon,regular polygon sides=3,draw,scale=0.75,inner sep=-0.5pt,minimum width=9mm,fill=white,regular polygon rotate=180]
\tikzstyle{infpoint}=[regular polygon,regular polygon sides=3,draw,scale=0.75,inner sep=-0.5pt,minimum width=9mm,fill=white,regular polygon rotate=90]
\tikzstyle{point nosep}=[regular polygon,regular polygon sides=3,draw,scale=0.75,inner sep=-2pt,minimum width=9mm,fill=white,regular polygon rotate=180]
\tikzstyle{infcopoint}=[regular polygon,regular polygon sides=3,draw,scale=0.75,inner sep=-0.5pt,minimum width=9mm,fill=white,regular polygon rotate=270]
\tikzstyle{copoint}=[regular polygon,regular polygon sides=3,draw,scale=0.75,inner sep=-0.5pt,minimum width=9mm,fill=white]
\tikzstyle{dpoint}=[point,doubled]
\tikzstyle{dcopoint}=[copoint,doubled]
\tikzstyle{pointgrow}=[shape=cornerpoint,kpoint common,scale=0.75,inner sep=3pt]
\tikzstyle{pointgrow dag}=[shape=cornercopoint,kpoint common,scale=0.75,inner sep=3pt]
\tikzstyle{wide copoint}=[fill=white,draw,shape=isosceles triangle,shape border rotate=90,isosceles triangle stretches=true,inner sep=0pt,minimum width=1.5cm,minimum height=6.12mm]
\tikzstyle{wide point}=[fill=white,draw,shape=isosceles triangle,shape border rotate=-90,isosceles triangle stretches=true,inner sep=0pt,minimum width=1.5cm,minimum height=6.12mm,yshift=-0.0mm]
\tikzstyle{wide point plus}=[fill=white,draw,shape=isosceles triangle,shape border rotate=-90,isosceles triangle stretches=true,inner sep=0pt,minimum width=1.74cm,minimum height=7mm,yshift=-0.0mm]
\tikzstyle{wide dpoint}=[fill=white,doubled,draw,shape=isosceles triangle,shape border rotate=-90,isosceles triangle stretches=true,inner sep=0pt,minimum width=1.5cm,minimum height=6.12mm,yshift=-0.0mm]
\tikzstyle{tinypoint}=[regular polygon,regular polygon sides=3,draw,scale=0.55,inner sep=-0.15pt,minimum width=6mm,fill=white,regular polygon rotate=180]
\tikzstyle{white point}=[point]
\tikzstyle{white dpoint}=[dpoint]
\tikzstyle{green point}=[white point] 
\tikzstyle{white copoint}=[copoint]
\tikzstyle{gray point}=[point,fill=gray!40!white]
\tikzstyle{gray dpoint}=[gray point,doubled]
\tikzstyle{red point}=[gray point] 
\tikzstyle{gray copoint}=[copoint,fill=gray!40!white]
\tikzstyle{gray dcopoint}=[gray copoint,doubled]
\tikzstyle{white point guide}=[regular polygon,regular polygon sides=3,font=\scriptsize,draw,scale=0.65,inner sep=-0.5pt,minimum width=9mm,fill=white,regular polygon rotate=180]
\tikzstyle{black point}=[point,fill=black,font=\color{white}]
\tikzstyle{black copoint}=[copoint,fill=black,font=\color{white}]
\tikzstyle{tiny gray point}=[tinypoint,fill=gray!40!white]
\tikzstyle{diredge}=[->]
\tikzstyle{ddiredge}=[<->]
\tikzstyle{rdiredge}=[<-]
\tikzstyle{thickdiredge}=[->, very thick]
\tikzstyle{pointer edge}=[->,very thick,gray]
\tikzstyle{pointer edge part}=[very thick,gray]
\tikzstyle{dashed edge}=[dashed]
\tikzstyle{thick dashed edge}=[very thick,dashed]
\tikzstyle{thick gray dashed edge}=[thick dashed edge,gray!40]
\tikzstyle{thick map edge}=[very thick,|->]
\newcommand{\boxshape}[3]{%
\pgfdeclareshape{#1}{
\inheritsavedanchors[from=rectangle] 
\inheritanchorborder[from=rectangle]
\inheritanchor[from=rectangle]{center}
\inheritanchor[from=rectangle]{north}
\inheritanchor[from=rectangle]{south}
\inheritanchor[from=rectangle]{west}
\inheritanchor[from=rectangle]{east}
\backgroundpath{
\southwest \pgf@xa=\pgf@x \pgf@ya=\pgf@y
\northeast \pgf@xb=\pgf@x \pgf@yb=\pgf@y

\@tempdima=#2
\@tempdimb=#3

\pgfpathmoveto{\pgfpoint{\pgf@xa - 5pt + \@tempdima}{\pgf@ya}}
\pgfpathlineto{\pgfpoint{\pgf@xa - 5pt - \@tempdima}{\pgf@yb}}
\pgfpathlineto{\pgfpoint{\pgf@xb + 5pt + \@tempdimb}{\pgf@yb}}
\pgfpathlineto{\pgfpoint{\pgf@xb + 5pt - \@tempdimb}{\pgf@ya}}
\pgfpathlineto{\pgfpoint{\pgf@xa - 5pt + \@tempdima}{\pgf@ya}}
\pgfpathclose
}
}}
\tikzstyle{cloud}=[shape=cloud,draw,minimum width=1.5cm,minimum height=1.5cm]
\tikzstyle{map}=[draw,shape=NEbox,inner sep=1pt,minimum height=4mm,fill=white]
\tikzstyle{dashedmap}=[draw,dashed,shape=NEbox,inner sep=2pt,minimum height=6mm,fill=white]
\tikzstyle{mapdag}=[draw,shape=SEbox,inner sep=1pt,minimum height=4mm,fill=white]
\tikzstyle{mapadj}=[draw,shape=SEbox,inner sep=2pt,minimum height=6mm,fill=white]
\tikzstyle{maptrans}=[draw,shape=SWbox,inner sep=2pt,minimum height=6mm,fill=white]
\tikzstyle{mapconj}=[draw,shape=NWbox,inner sep=2pt,minimum height=6mm,fill=white]
\tikzstyle{medium map}=[draw,shape=NEbox,inner sep=2pt,minimum height=6mm,fill=white,minimum width=7mm]
\tikzstyle{medium map dag}=[draw,shape=SEbox,inner sep=2pt,minimum height=6mm,fill=white,minimum width=7mm]
\tikzstyle{medium map adj}=[draw,shape=SEbox,inner sep=2pt,minimum height=6mm,fill=white,minimum width=7mm]
\tikzstyle{medium map trans}=[draw,shape=SWbox,inner sep=2pt,minimum height=6mm,fill=white,minimum width=7mm]
\tikzstyle{medium map conj}=[draw,shape=NWbox,inner sep=2pt,minimum height=6mm,fill=white,minimum width=7mm]
\tikzstyle{semilarge map}=[draw,shape=NEbox,inner sep=2pt,minimum height=6mm,fill=white,minimum width=9.5mm]
\tikzstyle{semilarge map trans}=[draw,shape=SWbox,inner sep=2pt,minimum height=6mm,fill=white,minimum width=9.5mm]
\tikzstyle{semilarge map adj}=[draw,shape=SEbox,inner sep=2pt,minimum height=6mm,fill=white,minimum width=9.5mm]
\tikzstyle{semilarge map dag}=[draw,shape=SEbox,inner sep=2pt,minimum height=6mm,fill=white,minimum width=9.5mm]
\tikzstyle{semilarge map conj}=[draw,shape=NWbox,inner sep=2pt,minimum height=6mm,fill=white,minimum width=9.5mm]
\tikzstyle{large map}=[draw,shape=NEbox,inner sep=2pt,minimum height=6mm,fill=white,minimum width=12mm]
\tikzstyle{large map conj}=[draw,shape=NWbox,inner sep=2pt,minimum height=6mm,fill=white,minimum width=12mm]
\tikzstyle{very large map}=[draw,shape=NEbox,inner sep=2pt,minimum height=6mm,fill=white,minimum width=17mm]
\tikzstyle{medium dmap}=[draw,doubled,shape=NEbox,inner sep=2pt,minimum height=6mm,fill=white,minimum width=7mm]
\tikzstyle{medium dmap dag}=[draw,doubled,shape=SEbox,inner sep=2pt,minimum height=6mm,fill=white,minimum width=7mm]
\tikzstyle{medium dmap adj}=[draw,doubled,shape=SEbox,inner sep=2pt,minimum height=6mm,fill=white,minimum width=7mm]
\tikzstyle{medium dmap trans}=[draw,doubled,shape=SWbox,inner sep=2pt,minimum height=6mm,fill=white,minimum width=7mm]
\tikzstyle{medium dmap conj}=[draw,doubled,shape=NWbox,inner sep=2pt,minimum height=6mm,fill=white,minimum width=7mm]
\tikzstyle{semilarge dmap}=[draw,doubled,shape=NEbox,inner sep=2pt,minimum height=6mm,fill=white,minimum width=9.5mm]
\tikzstyle{semilarge dmap trans}=[draw,doubled,shape=SWbox,inner sep=2pt,minimum height=6mm,fill=white,minimum width=9.5mm]
\tikzstyle{semilarge dmap adj}=[draw,doubled,shape=SEbox,inner sep=2pt,minimum height=6mm,fill=white,minimum width=9.5mm]
\tikzstyle{semilarge dmap dag}=[draw,doubled,shape=SEbox,inner sep=2pt,minimum height=6mm,fill=white,minimum width=9.5mm]
\tikzstyle{semilarge dmap conj}=[draw,doubled,shape=NWbox,inner sep=2pt,minimum height=6mm,fill=white,minimum width=9.5mm]
\tikzstyle{large dmap}=[draw,doubled,shape=NEbox,inner sep=2pt,minimum height=6mm,fill=white,minimum width=12mm]
\tikzstyle{large dmap conj}=[draw,doubled,shape=NWbox,inner sep=2pt,minimum height=6mm,fill=white,minimum width=12mm]
\tikzstyle{large dmap trans}=[draw,doubled,shape=SWbox,inner sep=2pt,minimum height=6mm,fill=white,minimum width=12mm]
\tikzstyle{large dmap adj}=[draw,doubled,shape=SEbox,inner sep=2pt,minimum height=6mm,fill=white,minimum width=12mm]
\tikzstyle{large dmap dag}=[draw,doubled,shape=SEbox,inner sep=2pt,minimum height=6mm,fill=white,minimum width=12mm]
\tikzstyle{very large dmap}=[draw,doubled,shape=NEbox,inner sep=2pt,minimum height=6mm,fill=white,minimum width=19.5mm]
\tikzstyle{muxbox}=[draw,shape=rectangle,minimum height=3mm,minimum width=3mm,fill=white]
\tikzstyle{dmuxbox}=[muxbox,doubled]
\tikzstyle{box}=[draw,shape=rectangle,inner sep=2pt,minimum height=6mm,minimum width=6mm,fill=white]
\tikzstyle{dbox}=[draw,doubled,shape=rectangle,inner sep=2pt,minimum height=6mm,minimum width=6mm,fill=white]
\tikzstyle{dmap}=[draw,doubled,shape=NEbox,inner sep=2pt,minimum height=6mm,fill=white]
\tikzstyle{dmapdag}=[draw,doubled,shape=SEbox,inner sep=2pt,minimum height=6mm,fill=white]
\tikzstyle{dmapadj}=[draw,doubled,shape=SEbox,inner sep=2pt,minimum height=6mm,fill=white]
\tikzstyle{dmaptrans}=[draw,doubled,shape=SWbox,inner sep=2pt,minimum height=6mm,fill=white]
\tikzstyle{dmapconj}=[draw,doubled,shape=NWbox,inner sep=2pt,minimum height=6mm,fill=white]
\tikzstyle{ddmap}=[draw,doubled,dashed,shape=NEbox,inner sep=2pt,minimum height=6mm,fill=white]
\tikzstyle{ddmapdag}=[draw,doubled,dashed,shape=SEbox,inner sep=2pt,minimum height=6mm,fill=white]
\tikzstyle{ddmapadj}=[draw,doubled,dashed,shape=SEbox,inner sep=2pt,minimum height=6mm,fill=white]
\tikzstyle{ddmaptrans}=[draw,doubled,dashed,shape=SWbox,inner sep=2pt,minimum height=6mm,fill=white]
\tikzstyle{ddmapconj}=[draw,doubled,dashed,shape=NWbox,inner sep=2pt,minimum height=6mm,fill=white]
\tikzstyle{smap}=[draw,shape=sNEbox,fill=white]
\tikzstyle{smapdag}=[draw,shape=sSEbox,fill=white]
\tikzstyle{smapadj}=[draw,shape=sSEbox,fill=white]
\tikzstyle{smaptrans}=[draw,shape=sSWbox,fill=white]
\tikzstyle{smapconj}=[draw,shape=sNWbox,fill=white]
\tikzstyle{dsmap}=[draw,dashed,shape=sNEbox,fill=white]
\tikzstyle{dsmapdag}=[draw,dashed,shape=sSEbox,fill=white]
\tikzstyle{dsmaptrans}=[draw,dashed,shape=sSWbox,fill=white]
\tikzstyle{dsmapconj}=[draw,dashed,shape=sNWbox,fill=white]
\tikzstyle{mmap}=[draw,shape=mNEbox]
\tikzstyle{mmapdag}=[draw,shape=mSEbox]
\tikzstyle{mmaptrans}=[draw,shape=mSWbox]
\tikzstyle{mmapconj}=[draw,shape=mNWbox]
\tikzstyle{mmapgray}=[draw,fill=gray!40!white,shape=mNEbox]
\tikzstyle{smapgray}=[draw,fill=gray!40!white,shape=sNEbox]
\pgfmathsetmacro{\pgf@shorten@left}{\pgfkeysvalueof{/tikz/shorten left}}
\pgfmathsetmacro{\pgf@shorten@right}{\pgfkeysvalueof{/tikz/shorten right}}
\pgfmathsetmacro{\pgf@shorten@left}{\pgfkeysvalueof{/tikz/shorten left}}
\pgfmathsetmacro{\pgf@shorten@right}{\pgfkeysvalueof{/tikz/shorten right}}
\tikzstyle{kpoint common}=[draw,fill=white,inner sep=1pt,minimum height=4mm]
\tikzstyle{kpoint sc}=[shape=cornerpoint,kpoint common]
\tikzstyle{kpoint adjoint sc}=[shape=cornercopoint,kpoint common]
\tikzstyle{kpoint}=[shape=cornerpoint,shorten left=5pt,kpoint common]
\tikzstyle{kpoint adjoint}=[shape=cornercopoint,shorten left=5pt,kpoint common]
\tikzstyle{kpoint conjugate}=[shape=cornerpoint,shorten right=5pt,kpoint common]
\tikzstyle{kpoint transpose}=[shape=cornercopoint,shorten right=5pt,kpoint common]
\tikzstyle{kpoint symm}=[shape=cornerpoint,shorten left=5pt,shorten right=5pt,kpoint common]
\tikzstyle{wide kpoint sc}=[shape=cornerpoint,kpoint common, minimum width=1 cm]
\tikzstyle{wide kpointdag sc}=[shape=cornercopoint,kpoint common, minimum width=1 cm]
\tikzstyle{black kpoint}=[shape=cornerpoint,shorten left=5pt,kpoint common,fill=black,font=\color{white}]
\tikzstyle{black kpoint sm}=[shape=cornerpoint,shorten left=5pt,kpoint common,fill=black,font=\color{white},scale=0.75]
\tikzstyle{black kpoint adjoint}=[shape=cornercopoint,shorten left=5pt,kpoint common,fill=black,font=\color{white}]
\tikzstyle{black kpointadj}=[shape=cornercopoint,shorten left=5pt,kpoint common,fill=black,font=\color{white}]
\tikzstyle{black kpointadj sm}=[shape=cornercopoint,shorten left=5pt,kpoint common,fill=black,font=\color{white},scale=0.75]
\tikzstyle{black dkpoint}=[shape=cornerpoint,shorten left=5pt,kpoint common,fill=black, doubled,font=\color{white}]
\tikzstyle{black dkpoint adjoint}=[shape=cornercopoint,shorten left=5pt,kpoint common,fill=black, doubled,font=\color{white}]
\tikzstyle{black dkpointadj}=[shape=cornercopoint,shorten left=5pt,kpoint common,fill=black, doubled,font=\color{white}]
\tikzstyle{black dkpoint sm}=[shape=cornerpoint,shorten left=5pt,kpoint common,fill=black, doubled,font=\color{white},scale=0.75]
\tikzstyle{black dkpointadj sm}=[shape=cornercopoint,shorten left=5pt,kpoint common,fill=black, doubled,font=\color{white},scale=0.75]
\tikzstyle{kpointdag}=[kpoint adjoint]
\tikzstyle{kpointadj}=[kpoint adjoint]
\tikzstyle{kpointconj}=[kpoint conjugate]
\tikzstyle{kpointtrans}=[kpoint transpose]
\tikzstyle{big kpoint}=[kpoint, minimum width=1.2 cm, minimum height=8mm, inner sep=4pt, text depth=3mm]
\tikzstyle{wide kpoint}=[kpoint, minimum width=1 cm, inner sep=2pt]
\tikzstyle{wide kpointdag}=[kpointdag, minimum width=1 cm, inner sep=2pt]
\tikzstyle{wide kpointconj}=[kpointconj, minimum width=1 cm, inner sep=2pt]
\tikzstyle{wide kpointtrans}=[kpointtrans, minimum width=1 cm, inner sep=2pt]
\tikzstyle{wider kpoint}=[kpoint, minimum width=1.25 cm, inner sep=2pt]
\tikzstyle{wider kpointdag}=[kpointdag, minimum width=1.25 cm, inner sep=2pt]
\tikzstyle{wider kpointconj}=[kpointconj, minimum width=1.25 cm, inner sep=2pt]
\tikzstyle{wider kpointtrans}=[kpointtrans, minimum width=1.25 cm, inner sep=2pt]
\tikzstyle{gray kpoint}=[kpoint,fill=gray!50!white]
\tikzstyle{gray kpointdag}=[kpointdag,fill=gray!50!white]
\tikzstyle{gray kpointadj}=[kpointadj,fill=gray!50!white]
\tikzstyle{gray kpointconj}=[kpointconj,fill=gray!50!white]
\tikzstyle{gray kpointtrans}=[kpointtrans,fill=gray!50!white]
\tikzstyle{gray dkpoint}=[kpoint,fill=gray!50!white,doubled]
\tikzstyle{gray dkpointdag}=[kpointdag,fill=gray!50!white,doubled]
\tikzstyle{gray dkpointadj}=[kpointadj,fill=gray!50!white,doubled]
\tikzstyle{gray dkpointconj}=[kpointconj,fill=gray!50!white,doubled]
\tikzstyle{gray dkpointtrans}=[kpointtrans,fill=gray!50!white,doubled]
\tikzstyle{white label}=[draw,fill=white,rectangle,inner sep=0.7 mm]
\tikzstyle{gray label}=[draw,fill=gray!50!white,rectangle,inner sep=0.7 mm]
\tikzstyle{black label}=[draw,fill=black,rectangle,inner sep=0.7 mm]
\tikzstyle{dkpoint}=[kpoint,doubled]
\tikzstyle{wide dkpoint}=[wide kpoint,doubled]
\tikzstyle{dkpointdag}=[kpoint adjoint,doubled]
\tikzstyle{wide dkpointdag}=[wide kpointdag,doubled]
\tikzstyle{dkcopoint}=[kpoint adjoint,doubled]
\tikzstyle{dkpointadj}=[kpoint adjoint,doubled]
\tikzstyle{dkpointconj}=[kpoint conjugate,doubled]
\tikzstyle{dkpointtrans}=[kpoint transpose,doubled]
\tikzstyle{kscalar}=[kpoint common, shape=EBox, inner xsep=-1pt, inner ysep=3pt,font=\small]
\tikzstyle{kscalarconj}=[kpoint common, shape=WBox, inner xsep=-1pt, inner ysep=3pt,font=\small]
\tikzstyle{spekpoint}=[kpoint sc,minimum height=5mm,inner sep=3pt]
\tikzstyle{spekcopoint}=[kpoint adjoint sc,minimum height=5mm,inner sep=3pt]
\tikzstyle{dspekpoint}=[spekpoint,doubled]
\tikzstyle{dspekcopoint}=[spekcopoint,doubled]
 \tikzstyle{upground}=[circuit ee IEC,thick,ground,rotate=90,scale=2.5]
 \tikzstyle{downground}=[circuit ee IEC,thick,ground,rotate=-90,scale=2.5]
 \tikzstyle{infupground}=[circuit ee IEC,thick,ground,rotate=0,scale=2.5]
 \tikzstyle{infdownground}=[circuit ee IEC,thick,ground,rotate=180,scale=2.5]
 \tikzstyle{bigground}=[regular polygon,regular polygon sides=3,draw=gray,scale=0.50,inner sep=-0.5pt,minimum width=10mm,fill=gray]
\tikzstyle{arrs}=[-latex,font=\small,auto]
\tikzstyle{arrow plain}=[arrs]
\tikzstyle{arrow dashed}=[dashed,arrs]
\tikzstyle{arrow bold}=[very thick,arrs]
\tikzstyle{arrow hide}=[draw=white!0,-]
\tikzstyle{arrow reverse}=[latex-]
\tikzstyle{cdnode}=[]
\tikzstyle{tilde}=[draw=blue]
\tikzstyle{tildelabel}=[text=blue]
\let\olddagger\dagger
\renewcommand{\dagger}{\ensuremath{\olddagger}\xspace}
\theoremstyle{plain}
\newtheorem*{main theorem}{Main Theorem}
\newtheorem{theorem}{Theorem}[section]
\newtheorem{proposition}[theorem]{Proposition}
\newtheorem{definition}[theorem]{Definition}
\newtheorem{example*}[theorem]{Example*}
\newtheorem{examples*}[theorem]{Examples*}
\newtheorem{remark}[theorem]{Remark}
\newtheorem{remark*}[theorem]{Remark*}
\newtheorem*{search problem}{Search Problem}
\def\bR{\begin{color}{red}}
\def\bB{\begin{color}{blue}}
\def\bM{\begin{color}{magenta}}
\def\bC{\begin{color}{cyan}}
\def\bW{\begin{color}{white}}
\def\bBl{\begin{color}{black}}
\def\bG{\begin{color}{green}}
\def\bY{\begin{color}{yellow}}
\def\e{\end{color}\xspace}
\newcommand{\bit}{\begin{itemize}}
\newcommand{\eit}{\end{itemize}\par\noindent}
\newcommand{\ben}{\begin{enumerate}}
\newcommand{\een}{\end{enumerate}\par\noindent}
\newcommand{\beq}{\begin{equation}}
\newcommand{\eeq}{\end{equation}\par\noindent}
\newcommand{\beqa}{\begin{eqnarray*}}
\newcommand{\eeqa}{\end{eqnarray*}\par\noindent}
\newcommand{\beqn}{\begin{eqnarray}}
\newcommand{\eeqn}{\end{eqnarray}\par\noindent}
\def\jR{\begin{color}{black}}
\def\jB{\begin{color}{black}}
\def\jM{\begin{color}{magenta}}
\def\jC{\begin{color}{cyan}}
\def\jW{\begin{color}{white}}
\def\jBl{\begin{color}{black}}
\def\jG{\begin{color}{green}}
\def\jY{\begin{color}{yellow}}
\newcommand{\xiNC}{\xi_{\rm{\kern -0.8pt n \kern -0.7pt c}}}
\newcommand\stoptoc{\let\addcontentsline\nocontentsline}
\newcommand\resumetoc{\let\addcontentsline\origcontentsline}
\let\origsubsubsection\subsubsection
\renewcommand{\subsubsection}[1]{%
	\begingroup
	\let\addcontentsline\@gobblethree 
	\origsubsubsection{#1}
	\endgroup
}
\begin{document}

\title{\fontsize{19pt}{0pt}\selectfont\hspace{-0.4mm}Decoupling local classicality from classical explainability:\ \newline A noncontextual model for bilocal classical theory and a~locally-classical but contextual theory}

\author{Sina Soltani}
\email{sina.soltani@phdstud.ug.edu.pl}
\affiliation{International Centre for Theory of Quantum Technologies, Uniwersytet Gdański, ul.~Jana Bażyńskiego 1A, 80-309 Gdańsk, Polska}
\author{Marco Erba~\raisebox{0.47ex}{\scalebox{2.1}{\orcidlink{0000-0002-2172-592X}}}}
\affiliation{International Centre for Theory of Quantum Technologies, Uniwersytet Gdański, ul.~Jana Bażyńskiego 1A, 80-309 Gdańsk, Polska}
\orcid{0000-0002-2172-592X}
\email{recobrama@gmail.com}
\author{David Schmid}
\affiliation{Perimeter Institute for Theoretical Physics, 31 Caroline Street North, Waterloo, Ontario Canada N2L 2Y5}
\author{John H.~Selby}
\affiliation{International Centre for Theory of Quantum Technologies, Uniwersytet Gdański, ul.~Jana Bażyńskiego 1A, 80-309 Gdańsk, Polska}
\affiliation{Theoretical Sciences Visiting Program, Okinawa Institute of Science and Technology Graduate University, Onna, 9040495, Japan}
\email{john.h.selby@gmail.com}

\maketitle

\begin{abstract}
We construct an ontological model for the theory known as bilocal classical theory~\cite{d2020classicality}. To our knowledge, this is only the second time that an ontological model has been constructed for an entire theory, rather than just for some particular scenarios within a theory. This result refutes a conjecture from Ref.~\cite{d2020classicality} which suggested that there might be no local-realist ontological model for bilocal classical theory. Moreover, it is the first time that an ontological model has been constructed for a theory that fails to be locally tomographic, showing that the assumption of local tomography underpinning the structure theorem in Ref.~\cite{schmid2024structuretheorem} is a genuine limitation of the theorem. This demonstrates that in general there is no tension between failures of local tomography and classical explainability (i.e., generalised noncontextuality). In fact, bilocal classical theory is in many ways more simply understood via the underlying ontological model than it is within its original formulation (much as how odd-dimensional stabiliser subtheories can be more simply understood via Spekkens' toy theory). Furthermore, this result naturally leads to the question, does every locally-classical theory admit of an ontological model? By constructing a concrete counterexample, we show that this is not the case. Our findings demonstrate that there is no straightforward relationship between theories being locally-classical, and them being classically-explainable. This shows that the fundamental status of compositional properties (such as local tomography) is not a technical side-issue, but a central and unavoidable question for a coherent understanding even of classicality itself.
\end{abstract}

	\tableofcontents

\section{Introduction}

The formalism of operational probabilistic theories (OPTs)~\cite{chiribella2010probabilistic,d2017quantum,Rolino_2025,soltani2025noncontextualontologicalmodelsoperational,erba2025categorical} (closely related to the one of generalised probabilistic theories, or GPTs for short~\cite{GPT_Barrett,PhysRevA.87.052131,Janotta_2014,lami2018nonclassicalcorrelationsquantummechanics,PLAVALA20231}) is a broad framework encompassing essentially all theories that admit of an operational characterisation. In this framework, classical probability theory (CT) has traditionally been singled out as the unique ``classical'' theory~\cite{GPT_Barrett,PhysRevA.87.052131,Janotta_2014,werner2014comment,PLAVALA20231,PhysRevLett.128.160402,Schmid2024reviewreformulation}. From an operational standpoint, this conception of classicality essentially boils down to three characteristic traits of systems in CT: \emph{joint perfect discriminability of pure states}---implying that every state space is geometrically described by a simplex; the  \emph{distinguishability of any two bipartite states by using the statistics of measurements on the individual components}---entailing, for simplicial systems, a composition rule that can be described by the Cartesian product; and that \emph{all mathematically consistent processes are physically possible} (``everything that is not forbidden is allowed''~\cite{d2017quantum}). The second feature operationally boils down to the \emph{local tomography}~\cite{Wooter_1990,hardy2012limited,centeno2024twirled,lismer2025experimentaltestprincipletomographic,baldijão2026tomographicallynonlocalentanglement} or \emph{local discriminability}~\cite{d2017quantum} principles, while the third is known as the \emph{no-restriction hypothesis}~\cite{chiribella2010probabilistic,PhysRevA.87.052131}, which are important properties also satisfied by quantum theory. Note that \emph{joint perfect discriminability of pure states} is a single-system property, while \emph{local tomography} is a multi-system, compositional notion. 
In brief, CT is the operational probabilistic theory that contains all and only processes consistent with simpliciality and local tomography.

One can also ask which theories are {\em classically-explainable}, in the sense that they can be reasonably explained by CT. This question was first answered in Refs.~\cite{SchmidGPT,schmid2024structuretheorem,schmid2020unscrambling,muller2023testing}, which argued that one GPT can explain the statistical, convex, and compositional structures of another if and only if there is a linear and diagram preserving map between the two. In particular, then, a GPT is classically-explainable if it can be {\em embedded} into CT in this manner. Such an embedding is called an ontological model of a GPT. This conception of classical explainability was subsequently explored in a number of works~\cite{selby2023accessible,selbylinear,soltani2025noncontextualontologicalmodelsoperational,Schmid2025shadowssubsystemsof,rossi2025typicalcontextuality}. 

This notion of classical explainability  coincides with the existence of a gen\-er\-al\-ised-non\-con\-tex\-tual~\cite{gencontext} explanation (for the operational theory which leads to the GPT~\cite{schmid2024structuretheorem}) within an ontological model~\cite{harrigan2010einstein}. The latter approach can also be motivated by a methodological version of Leibniz's principle~\cite{Leibniz}, by its equivalence to the existence of some positive quasiprobabilistic representation~\cite{negativity,schmid2024structuretheorem}, and by its equivalence to the above notion of classical explainability. Failures of generalised noncontextuality are strong manifestations of failures of classical explainability, and have been found (and often leveraged as resources) in the realms of quantum computation~\cite{Schmid2022Stabilizer,shahandeh2021quantum}, state discrimination~\cite{schmid2018contextual,flatt2021contextual,mukherjee2021discriminating,Shin2021}, interference~\cite{Catani2023whyinterference,catani2022reply,catani2023aspects,giordani2023experimental}, compatibility~\cite{selby2023incompatibility,selby2023accessible,PhysRevA.109.022239,PhysRevResearch.2.013011}, uncertainty relations~\cite{catani2022nonclassical}, metrology~\cite{contextmetrology}, thermodynamics~\cite{contextmetrology,comar2024contextuality,lostaglio2018}, weak values~\cite{AWV, KLP19}, coherence~\cite{rossi2023contextuality,Wagner2024coherence, wagner2024inequalities}, quantum Darwinism~\cite{baldijao2021noncontextuality}, information processing and communication~\cite{POM,RAC,RAC2,Saha_2019,Yadavalli2020,PhysRevLett.119.220402,fonseca2024robustness}, cloning~\cite{cloningcontext}, broadcasting~\cite{jokinen2024nobroadcasting}, pre- and post-selection paradoxes~\cite{PP1}, randomness certification~\cite{Roch2021}, psi-epistemicity~\cite{Leifer}, and Bell~\cite{Wright2023invertible,schmid2020unscrambling} and Kochen-Specker scenarios~\cite{operationalks,kunjwal2018from,Kunjwal16,Kunjwal19,Kunjwal20,specker,Gonda2018almostquantum}. 

Relatedly, some recent works~\cite{d2020classical,d2020classicality,PhysRevA.109.022239,Rolino_2025} on OPTs have explored the idea, and modifications thereof, of \emph{locally-classical} theories---that is, those theories where all systems are classical in the strict sense of CT (i.e., containing all and only processes consistent with simpliciality), while the system-composition rule may differ from that of standard CT (thus generally violating local tomography)~\cite{d2020classical,d2020classicality,erba2024compositionrulequantumsystems,centeno2024twirled}.\footnote{Note that, since our approach is \emph{compositional}, composite systems can be always taken to be themselves components of larger composites. Motivated by this observation, ``locally-classical'' here means that every system, be it elementary or composite, is classical---while, globally, the theory may differ from the standard classical theory, particularly in its composition rule. Such a terminology therefore slightly differs from the one used, e.g., in Refs.~\cite{PhysRevLett.104.140401,PhysRevLett.109.090403}, where the analysis was carried out for multipartite correlation scenarios and, accordingly, ``locally quantum'' refers to settings where the elementary components are quantum, while the global systems need not necessarily be quantum.\label{fn:locally}} Such theories can only differ from CT in their composition rule, which may not obey local tomography. The first example of a fully-fledged locally-classical theory that violates local tomography is known as \emph{bilocal classical theory}~\cite{d2020classicality}, which we introduce in Sec.~\ref{Sec:BCT}. 

In this paper, we further explore the connections between these different concepts relating to classicality, that is, between \emph{local classicality} and \emph{classical explainability} (in the precise senses described above). Our first result is to show that bilocal classical theory is classically-explainable. Hence, modifications to the composition rule of classical theory (as an OPT) are not necessarily incompatible with the existence of an underlying ontology which composes following the standard ``classical'' composition rule (the Cartesian product). On the other hand, we construct alternative examples of locally-classical theories (following the construction of \emph{latent theories} introduced in Ref.~\cite{erba2024compositionrulequantumsystems}) which we prove are {\em not} classically-explainable. Consequently, locally-classical theories that fail to be tomographically local are {\em sometimes, but not always,} classically-explainable. An important direction for future work would be to find some way to classify such theories, for example, by finding some physical principle which singles out which are classically-explainable.

\bigskip

 In the remainder of the introduction (and in more detail in App.~\ref{App:Prelim}), we provide a concise overview of the framework of operational probabilistic theories, tailored to the needs of our discussion.  In particular, defining \emph{ontological models} of OPTs~\cite{soltani2025noncontextualontologicalmodelsoperational}, and what it means for such a model to be \emph{classically-explainable}. In section 2 of the paper, we construct such a model for bilocal classical theory and show that the constructed model is a consistent one. Finally, in section 3, we conclude the paper with a no-go theorem concerning another type of locally-classical theories, termed \emph{latent classical theories}, for which we prove that no ontological model can be constructed.

\subsection{Operational probabilistic theories}\label{Sec:OPTs}

In this section, we briefly review the framework of operational probabilistic theories (OPTs). For a more detailed introduction, we refer readers to App.~\ref{app:OPTs-introduction} and references therein. The primitives in the OPT framework are systems, transformations,
and probabilities.\footnote{At least, these are the primitives of relevance for this paper.} A system is representative of a physical object under study in a laboratory, such as a particle, atom, or field. A transformation describes the evolution a system undergoes. The OPT framework prescribes the probabilities of different outcomes one would obtain in a measurement, given the preparations and evolutions that have occurred before the measurement, taking into account how these primitives are composed.

We denote the class of systems for a generic OPT $\Theta $ as $\textbf{Sys}(\Theta)$. For single systems, we use Latin characters $A, B, C, \cdots \in \textbf{Sys}(\Theta)$. We denote the class of transformations from system $A$ to $B$ as $\textbf{Transf}(A\to B)$, and we depict a transformation $t \in \textbf{Transf}(A\to B)$ as
\begin{equation}
	\tikzfig{Diagramss/36_event_t}\;.
\end{equation}

When not describing any physical system, we use the trivial system $I\in\textbf{Sys}(\Theta)$, which we represent it as a blank space. Thus, for a particular class of transformations $\rho\in \textbf{Transf}(I\to A)$ and  $a\in \textbf{Transf}(A\to I) $, we can depict them as
\begin{equation}
\tikzfig{Diagramss/40_preobs}\;.
\end{equation}
These are respectively called \emph{states} and \emph{effects}, where for a system $A$, we denote them as $\textbf{St}(A)$ and $\textbf{Eff}(A)$ respectively. Sometimes, we denote a state for a system $A$ as $\vert \rho )_A$, and an effect $a$ as $( a\vert_{A}$. We also denote the sequential composition of a state $\vert \rho )_A$ and an effect $( a\vert_{A}$ as $(a\vert \rho)_{A}$.

In the OPT framework, a transformation $p$ with both trivial input and output is called a \textit{scalar}, which we demand to be a probability, i.e., $p \in [0,1]$, which we depict as
\begin{equation}
    \tikzfig{Diagramss/62_prob2}\;.
\end{equation}
Thus, closed circuits in the OPT framework represent the probability that a given closed circuit occurs among all other possibilities for that particular closed circuit in a laboratory (see~\eqref{probability} and App.~\ref{app:OPTs-introduction} for a more detailed description of probabilities in OPTs).

 An OPT is given by composition rules, i.e. a sequential composition rule for transformations, and a parallel composition rule when transformations are combined in parallel to form another transformation. These are given by associative bilinear maps and must satisfy additional compatibility conditions to ensure well-defined composition in a given OPT~\cite{d2020classicality}. We denote sequential composition map as $\circ$ and parallel composition map as $\boxtimes$. Likewise, single systems can be composed together to form multipartite systems. 

It is well known~\cite{chiribella2010probabilistic,Chiribella_2014,chiribella2016quantum,d2017quantum} that one can define a complete class of linearly independent vectors in $\textbf{St}(A)$ that span a real vector space, denoted by $\textbf{St}_{\mathbb{R}}(A) := \text{Span}_{\mathbb{R}}\;\textbf{St}(A)$. Similarly, the space $\textbf{Eff}(A)$ spans, in general, a subspace of its dual vector space, defined as $\textbf{Eff}_{\mathbb{R}}(A) := \text{Span}_{\mathbb{R}}\;\textbf{Eff}(A)$. 
Similarly, transformations belong to a vector space, which allows us to define $\textbf{Transf}_{\mathbb{R}}(A \to B) := \text{Span}_{\mathbb{R}}\,\textbf{Transf}(A \to B)$. 
 A transformation $t \in \textbf{Transf}(A \to B)$ is called \textit{atomic} if, whenever one has $t = t_1 + t_2$ for some $t_1, t_2 \in \textbf{Transf}(A \to B)$, it follows that $t_1 \propto t_2$. That is, atomic transformations are those that, up to a scalar factor, cannot be further refined into other transformations. In the theories that we consider here, atomic transformations are important because every other transformation can be represented as a positive conic combination of atomic transformations. 

An OPT is said to satisfy \textit{local tomography}, if and only if, for any two distinct bipartite state $\rho\in \textbf{St}(AB)$ and $\rho'\in \textbf{St}(AB)$, there exist local effects  $ a_1\in\textbf{Eff}(A)$ and $a_2\in\textbf{Eff}(B)$, such that
\begin{equation}
\tikzfig{Diagramss/65_tom}\neq\tikzfig{Diagramss/65_tom1}\;.
\end{equation}
In other words, two bipartite states can be distinguished solely through local measurements.

The condition that the probability distribution associated with a preparation of a set of states does not depend on the specific choice of the measurement applied at its output is commonly referred to as \textit{causality}. It is well known that this notion of
causality is equivalent to the principle of no-signalling from future and equivalent to the statement that each system possesses a \textit{unique deterministic effect}~\cite{d2017quantum}. In particular, an OPT is said to be causal if and only if every system $A$ admits a unique deterministic effect, represented as 
\begin{equation}
\tikzfig{Diagramss/50_discard}\;.
\end{equation}
A transformation $t \in \textbf{Transf}(A \to B)$ is called \textit{deterministic} if and only if
\begin{equation}
\label{eq:normalisation}
\tikzfig{Diagramss/50_discard2}=\tikzfig{Diagramss/50_discard}\;.
\end{equation}
This means that the transformation occurs with certainty, i.e, with probability $1$. For deterministic states, this condition reduces to the normalisation condition. 

\subsection{Classical theory}\label{Sec:CT}

Classical theory is an  OPT $\Delta$ where systems are labelled by natural numbers $n \in \mathds{N}$ representing the number of perfectly distinguishable states (See~\eqref{eq:perfectly-discriminable} in App.~\ref{app:OPTs-introduction} for more detail) for the system. The composite of a system $n$ with a system $m$ gives us the system $nm$ (i.e., given my multiplication of integers). It can be shown that this composition rule ensures that classical theory satisfies local tomography.

Transformations from system $n$ to system $m$ are represented by substochastic matrices $M \in \textbf{SubStoch}(n \to m)$, while deterministic transformations correspond to stochastic matrices $S \in \textbf{Stoch}(n \to m)$.
As a special case, states are (subnormalised) probability distributions, effects correspond to response functions, and scalars correspond to probabilities. Sequential composition is given by the composition of (sub)stochastic matrices, whereas parallel composition is given by the usual tensor product. 
We depict systems in classical theory using grey wires
\begin{equation}
\tikzfig{BCT-diagrams/17_CT}\;.
\end{equation}
 We represent a pure state of a system $n$ in classical theory as 
\begin{equation}
\tikzfig{BCT-diagrams/17_CT1}\; ,
\end{equation}
where $1\leq i \leq n$.  Composing two pure states of system $n$ and $m$, reults in another pure state of the composite system $nm$ which we depict as
\begin{equation}
\tikzfig{BCT-diagrams/17_CT2}\;,
\end{equation}
Where $1\leq j\leq m$. A pure effect for a system $n$ in classical theory is depicted as
\begin{equation}
\tikzfig{BCT-diagrams/17_CT13}\;,
\end{equation}
where one has
\begin{equation}
(i'\vert i )=\delta_{i,i'},
\end{equation}
for all $1\leq i \leq n$, and all $1\leq i' \leq n$.

An atomic transformation from a system $n$ to $m$ is given by
\begin{equation}
\tikzfig{BCT-diagrams/17_CT4}:=\lambda \cdot \tikzfig{BCT-diagrams/17_CT3}\;,
\end{equation}
for some $1\leq i_0 \leq n, 1\leq l\leq m$ and $\lambda\in [0,1]$. Similarly, for composite systems, atomic transformations are given by
\begin{equation}
\tikzfig{BCT-diagrams/17_CT5}:=\lambda\cdot\tikzfig{BCT-diagrams/17_CT6}\;,
\end{equation}
where for lower indices, one has
$1 \leq i_1 \leq n_1$, $1 \leq i_2 \leq n_2$, $\ldots$, $1 \leq i_p \leq n_p$, and for upper indices, one has
$1 \leq l_1 \leq m_1$, $1 \leq l_2 \leq m_2$, $\ldots$, $1 \leq l_q \leq m_q$, and, $\lambda\in [0,1]$. 

We call atomic transformations with $\lambda=1$ \textit{normalised atomic transformations} which we denote as $\mathscr{A}_{i_0}^l$. Every arbitrary transformation  $t$ in classical theory, from a system $n$ to $m$, can be written as the unique conical combination of normalised atomic transformations, that is
\begin{equation}
\tikzfig{BCT-diagrams/17_CT7}=\sum_{i_0,l}C(i_0,l)\tikzfig{BCT-diagrams/17_CT8}\;,
\end{equation}
where $C(i_0,l)$ denotes the conical coefficient.\footnote{Note that, since classical theory satisfies local tomography, there is no need to include ancillary systems (see~\eqref{eq:local-tomography-no-ancilla} and App.~\ref{app:OPTs-introduction}) in order to fully characterise transformations. Consequently, atomic transformations can be described as specific instances of general transformations without reference to any ancillary system, as we have done in this analysis.}

\subsection{Ontological models and classical explainability}\label{Sec:ONT} 
In this section, we are interested in a particular notion of a map between two OPTs that helps us formulate our notion of classical explainability of OPTs. For a more detailed discussion of the notion of a map between two OPTs, we refer readers to App.~\ref{app:GNC}.
\begin{definition}[Ontological models of OPTs]\label{def:ONT}
An ontological model of an OPT $\tikzfig{Diagramss/47_ontmodelunq}$, depicted as 
\begin{equation}
	\tilde{\xi}::\tikzfig{Diagramss/47_ontmodelunq1}\mapsto \tikzfig{Diagramss/47_ontmodelunq2}\;\;,
\end{equation}
is a linear map from an OPT $\Theta$ to a classical theory $\Delta$ that satisfies diagram preservation and determinacy preservation. See App.~\ref{app:GNC} for a discussion of these properties, and Remarks~\ref{rem:ont-models-structure} and~\ref{rem:ont-models-structure2} for a more detailed discussion of the structure of ontological models.
\end{definition} 
An ontological model of an OPT associates each transformation with a substochastic transformation in classical theory. Additionally, it assigns a classical state to every state and a classical effect to every effect. Each system $A$ in the OPT is mapped to a classical system whose dimension, denoted by $\Lambda_{A}$, specifies the size of the corresponding ontic space~\cite{harrigan2010einstein}. Note that an important property an ontological model must satisfy is empirical adequacy, namely, preservation of scalars. In the present setting, this follows immediately from linearity and determinism preservation. Indeed, determinism preservation implies $\tilde{\xi}(1)=1$. Hence, by linearity,
$\tilde{\xi}(p)=\tilde{\xi}(p\cdot 1)=p\tilde{\xi}(1)=p$, for every $p\in [0,1]$, which is pictorially represented as 
\begin{equation}
\tikzfig{Diagramss/59_pp1}=\tikzfig{Diagramss/59_pp4}\;.
\end{equation}

If the constructed model is consistent and qualifies as a valid ontological model, then, as explained in~\cite{schmid2024structuretheorem,soltani2025noncontextualontologicalmodelsoperational}, it confirms that the OPT of interest is classically-explainable and compatible with the notion of generalised noncontextuality. Note that as the ontological model is diagram preserving, that this ensures that the causal structure of the ontological model is the same as the causal structure of the OPT, and in particular, means that such ontological models are Bell local. I.e., if an OPT admits of an ontological model then it can never violate any Bell inequality in any causal scenario. 

In Sec.~\ref{Sec:BCT}, we carry out this construction for BCT, focusing in particular on how the model maps systems, pure states, pure effects, and normalised atomic transformations. We then demonstrate the model’s consistency, thereby confirming the existence of an ontological model for BCT, and thus its classical explainability. In Sec.~\ref{Sec:LQT} we show that no such construction can exist for LCT, and hence, LCT is not classically explainable.

\section{A locally-classical theory that is classically-explainable}\label{Sec:BCT}

\subsection{Bilocal classical theory}
Bilocal classical theory (BCT) was first introduced in~\cite{d2020classicality} as a consistent theory of locally-classical systems with a composition rule that differs from that of classical theory--but in such a way that the state and effect spaces of composite systems are also classical (identical to that of the simplicial theory). Despite the fact that all systems are themselves classical, the exotic composition rule in BCT leads it to exhibit several features absent in classical theory, such as entanglement, cloning, entanglement swapping, dense coding, additivity of classical capacities, non-monogamous entanglement, and hypersignaling. In this section, we briefly review the basic properties of BCT that are most relevant to our purposes. In the following sections, we aim to construct an ontological model for BCT. To that end, we will focus in particular on the composition rule, the characterisation of pure states and pure effects, and atomic transformations in BCT. We do not aim to prove all of the properties we mention here. Interested readers can find full proofs and more discussions in Ref.~\cite{d2020classicality}.

Bilocal classical theory (BCT) is  an OPT in which every system is classical. Hence, we represent each single system by a natural number $n$ which for the trivial system, one has $n=1$. The main difference from classical theory lies in the composition rule. For any two single systems $n$ and $m$, the composite system, denoted as $n\boxtimes m$, is given by
\begin{equation}
\label{dimension-rule-BCT}
 n\boxtimes m =
\begin{cases}
	2 n m, & \text{if } n \ne 1 \ne m, \\
	n, & \text{if } m = 1,
    \\
	 m, & \text{if } n = 1.
\end{cases}
\end{equation}
In the $n\neq 1\neq m$ case, pure states are given as the vertices of a $2nm$-simplex. In light of the composition rule, BCT does not exhibit local tomography; instead, it exhibits bilocal tomography, so that it requires bipartite measurements to fully characterise multipartite states (hence why it has been named as bilocal classical theory).
We represent a system $n$ in BCT pictorially as as  thick violet wire 
\begin{equation}
\tikzfig{BCT-diagrams/13_BCT34}\;.
\end{equation}
A pure states of a single system $n$ can be represented as
\begin{equation}
\tikzfig{BCT-diagrams/13_BCT}\; ,
\end{equation}
where $1\leq i\leq n$. For a composite system $2nm$, with subsystems $n$ and $m$, we represent a pure state as
\begin{equation}
\scalebox{1.0}{\tikzfig{BCT-diagrams/13_BCT1}}\; ,
\end{equation}
where $1 \leq i \leq n$ and $1 \leq j \leq m$, with $s$ being a binary variable $s \in \lbrace 0, 1 \rbrace$. We can also represent a bipartite pure state as $\left\vert (ij)_{s} \right)$. For all pure states $\vert i )$ of a system $n$, and all pure states $\vert j )$ of a system $m$, the parallel composition is given by
\begin{equation}
\tikzfig{BCT-diagrams/13_BCT2}=\frac{1}{2}\sum_{s=0,1}\tikzfig{BCT-diagrams/13_BCT1}\; .
\end{equation}
This composition rule for pure states implies that when two pure states are composed in parallel, the resulting state—unlike in classical theory—is no longer pure. However, despite this difference, composite systems in BCT remain classical systems. Consequently, every state in BCT can still be expressed as a unique subconvex combination of pure states.

Similarly, a pure effect of a single system $n$ can be represented as
\begin{equation}
\tikzfig{BCT-diagrams/13_BCT15}\;,
\end{equation}
where $1\leq i\leq n$. For a bipartite system $2nm$, with subsystems $n$ and $m$, a pure effect is denoted as
\begin{equation}
\tikzfig{BCT-diagrams/13_BCT17}\;,
\end{equation}
where $1\leq i\leq n$, $1\leq j\leq m$ and $s\in\lbrace 0,1\rbrace$, which we can also denote it as $( (ij)_{s}\vert$. For all possible pure states of a single system $n$ and a bipartite system $2nm$—that is, for all $1 \leq i \leq n$, $1 \leq j \leq m$, and $s \in \lbrace 0,1\rbrace$—one has
\begin{align}
(  i^{'} \vert i ) &= \delta_{i^{'},i}\; ,\nonumber\\
(( i^{'}j^{'} )_{s^{'}} \vert (ij)_{s})&=\delta_{i^{'},i}\delta_{j^{'},j}\delta_{s^{'},s}\; ,
\end{align}
for all possible pure effects, i.e., for all $1 \leq i' \leq n$, $1 \leq j' \leq m$, and $s' \in \lbrace 0,1\rbrace$.

For an arbitrary composite system consists of systems $n_1$,$n_2$,$n_3$,$\ldots$,$n_p $, we represent a pure state as
\begin{equation}
\tikzfig{BCT-diagrams/13_BCT11}\; ,
\end{equation}
where $1\leq i_1\leq n_1$, $1\leq i_2\leq n_2$, $\ldots$, $1\leq i_p\leq n_P$, and $s_1$, $s_2$, $\ldots$, $s_{p-1}\in\lbrace 0,1\rbrace$. For a composite pure effect, one has
\begin{equation}
\tikzfig{BCT-diagrams/13_BCT16}\;,
\end{equation}
where $1\leq i_1\leq n_1$, $1\leq i_2\leq n_2$, $\ldots$, $1\leq i_p\leq n_P$, and $s_1$, $s_2$, $\ldots$, $s_{p-1}\in\lbrace 0,1\rbrace$. For a discussion on how to relate the multi-system and single-system definitions of states, effects, and transformations see App.~\ref{App:MultiVsSingle}. 

A transformation $\mathscr{A}$ from system $n$ to $m$ is \emph{atomic} if and only if, for every system $E$ used as an ancillary system, one has 
\begin{equation}
\label{atomic-trans}
\tikzfig{BCT-diagrams/13_BCT7}=\lambda\delta_{i,i_{0}}\tikzfig{BCT-diagrams/13_BCT8}\; ,
\end{equation}
for some $\lambda\in\left[0,1\right]$, $1\leq i_{0}\leq n$, $1\leq l\leq m $ and $\tau\in\lbrace 0,1\rbrace$, with the summation taken modulo two (see Ref.~\cite[Prop.~3]{d2020classicality} for the proof). Hence, we can denote an atomic transformation from system $n$ to $m$ as
\begin{equation}
\tikzfig{BCT-diagrams/13_BCT9}\;.
\end{equation}
Similarly to classical theory, we call atomic transformations with non-trivial systems with $\lambda=1$ normalised atomic transformations, which we denote as
\begin{equation}
\tikzfig{BCT-diagrams/15_ea2}\;,
\end{equation}
where one has
\begin{equation}
\label{atomic-channels}
\tikzfig{BCT-diagrams/13_BCT9}=\lambda\cdot \tikzfig{BCT-diagrams/15_ea2}\;.
\end{equation} 
An atomic transformation, with a composite system input and a composite system output, is represented as
\begin{equation}
\label{atomic-comp}
\tikzfig{BCT-diagrams/13_BCT12}\; ,
\end{equation}
where for lower indices, one has
$1 \leq i_1 \leq n_1$, $1 \leq i_2 \leq n_2$, $\ldots$, $1 \leq i_p \leq n_p$,
and $s_1$, $s_2$, $\ldots$, $s_{p-1} \in \lbrace 0,1\rbrace$,
and for upper indices, one has
$1 \leq l_1 \leq m_1$, $1 \leq l_2 \leq m_2$, $\ldots$, $1 \leq l_q \leq m_q$,
and $t_1$, $t_2$, $\ldots$, $t_{q-1} \in \lbrace 0,1\rbrace$.
Also, one has $\lambda \in [0,1]$ and $\tau \in \lbrace 0,1\rbrace$. More details on how to characterise atomic transformations on composite systems can be found in App.~\ref{atomiccomp}. It can be shown that, similarly to classical theory, sequential composition of atomic transformation yields another atomic transformation (see~\eqref{seq-comp-atomic} in App.~\ref{App:BCT}), However, in the case of parallel composition, there is a stark contrast between BCT and classical theory, that when atomic transformations are composed in parallel, the resulting transformation is no longer atomic~\cite{d2020classicality}. 

Similarly to classical theory we have that: 
\begin{proposition}
\label{uniqueness}
Every transformation $t$ from system $n$ to $m$ in BCT can be uniquely expressed as a conical combination of normalised atomic transformations:
\begin{equation}
\tikzfig{BCT-diagrams/15_ea}=\sum_{i_0,l,\tau}C(i_0,l,\tau)\cdot \tikzfig{BCT-diagrams/15_ea2}\;.
\end{equation}
\end{proposition}
\begin{proof}
 See Ref.~\cite[Prop.~3]{d2020classicality} for existence and App.~\ref{app:uniqueness-proof} for uniqueness.
\end{proof}

As a category, BCT is consistent in the sense that it satisfies the required coherence conditions. Moreover, as an OPT, it exhibits a consistent probabilistic structure. These properties have been discussed and demonstrated in Ref.~\cite{d2020classicality}.  We have also included a very brief introduction to \textit{reversible} transformations of BCT in App.~\ref{App:rev-trans-bct}\footnote{Note that, since BCT does not satisfy local tomography, unlike classical theory it is necessary to consider ancillary systems $E$ in order to fully characterise transformations (see~\eqref{eq:local-tomography-no-ancilla} and App.~\ref{app:OPTs-introduction}), as we have done in our analysis of atomic transformations in~\eqref{atomic-trans} (and reversible transformations in~\eqref{reversible-transformations} and App.~\ref{App:rev-trans}).}. 

\subsection{An ontological model for bilocal classical theory}

In this section, we construct an ontological model for BCT. But first, it will be useful for our purpose to introduce some notations regarding even dimensional classical theory. In classical theory, we can depict a single system of dimension $2n$ as
\begin{equation}
\tikzfig{BCT-diagrams/4_2nsys}=\tikzfig{BCT-diagrams/1_SingleSys2}\;\;,
\end{equation}
wherein $n$ represents $n$ dimensional classical subspace, and $B$ represents a binary subspace. This representation of a single system enables us to represent a pure states of a $2n$ dimensional classical system in classical theory as
\begin{equation}
\tikzfig{BCT-diagrams/4_OntMapOfStJ2}\;\;,
\end{equation}
where $1\leq i \leq n$, and $s\in\lbrace 0,1\rbrace$. On the binary subspace, we denote the $t$-controlled NOT gate as
\begin{equation}
\tikzfig{BCT-diagrams/4_OntMapOfStJ3}=\tikzfig{BCT-diagrams/4_OntMapOfStJ4}=\tikzfig{BCT-diagrams/4_OntMapOfStJ26}\;\;, 
\end{equation} 
with the summation being modulo two. We use the symbol $\circ$ to denote the action of copying pure states. For instance, in the case of the binary subspace, applying the copying to the completely mixed state yields
\begin{equation}
\tikzfig{BCT-diagrams/4_OntMapOfStJ8}=\frac{1}{2}\left(\tikzfig{BCT-diagrams/4_OntMapOfStJ9}+\tikzfig{BCT-diagrams/4_OntMapOfStJ10}\right)\;\;.
\end{equation}
We can further generalize the copying operation on binary subspaces to accommodate multiple inputs and outputs as 
\beq
\tikzfig{BCT-diagrams/4_OntMapOfStJ13}:=\sum_s \tikzfig{BCT-diagrams/4_OntMapOfStJ14}
\eeq

Now, with these notations in place, we construct an ontological model for BCT as follows:

\begin{definition}[An ontological model for BCT]\label{BCT-ontological-model} The ontological map $\tikzfig{BCT-diagrams/2_OntMap}$ is the unique linear map specified by the following: 
\begin{itemize}
	\item $\tilde{\xi}$ assigns to each system $n$ an ontic space $\Lambda_{n}:=\tilde{\xi}{(n)}$, where for a non-trivial system $n\neq 1$, we define $\tilde{\xi}{(n)}$ as classical system with the dimension $2n$, and for the trivial system $n=1$, we define $\tilde{\xi}{(n)}$ to be the trivial system in classical theory as well, i.e., $\tilde{\xi}{(n)}=1$. For non-trivial systems, this is depicted as
	\begin{equation}
		\tikzfig{BCT-diagrams/1_SingleSys}:=\tikzfig{BCT-diagrams/1_SingleSys1}\;,
	\end{equation}
	
	And for every composite system in BCT with non-trivial subsystems $n_{1}$, $n_{2}$, $n_{3}$ $\ldots$ $n_{p-1}, $ $n_{p}$, one has
	\begin{equation}
		\tikzfig{BCT-diagrams/2_CompSys}:=\tikzfig{BCT-diagrams/2_CompSys1}\;.
	\end{equation}
	\item For a pure state of a single system $n$, one has
	\begin{equation}
		\tikzfig{BCT-diagrams/1_1ParSt2}:=\tikzfig{BCT-diagrams/4_OntMapOfStJ1}\;,
	\end{equation}
	and for a pure state of a composite system in BCT with subsystems $n_{1}$, $n_{2}$, $n_{3}$ $\ldots$ $n_{p-1}, $ $n_{p}$, using not and copy gates, one has
	\begin{equation}
		\label{ont-model-on-composite-state}
		\tikzfig{BCT-diagrams/4_OntMapOfSt5}:=\tikzfig{BCT-diagrams/4_OntMapOfStJ}\;.
	\end{equation}
	\item For a pure effect of a single system $n$, one has
	\begin{equation}
		\label{effects}
		\tikzfig{BCT-diagrams/1_1ParSt5}:=\tikzfig{BCT-diagrams/10_effects}\;,
	\end{equation}
	and for a pure effects of a composite system, one has
	\begin{equation}
		\label{ont-model-on-composite-effect}
		\tikzfig{BCT-diagrams/10_effects1}:=\tikzfig{BCT-diagrams/10_effects2}\;.
	\end{equation}
	\item For a normalised atomic transformation with a single system $n$ as input, and a single system $m$ as output, one has
	\begin{equation}
		\label{ont-model-on-single-atomics}
		\tikzfig{BCT-diagrams/2_OntMapOfAtomTrans1}:=\tikzfig{BCT-diagrams/2_OntMapOfAtomTrans4}\;,
	\end{equation}
	and for a normalised atomic transformation in BCT, with composite systems as input and output, one has
	\begin{equation}
		\label{ont-model-on-composite-atomics}
		\scalebox{0.85}{\tikzfig{BCT-diagrams/3_CompAtomicTrans5}}\;\coloneqq\;\scalebox{0.85}{\tikzfig{BCT-diagrams/3_CompAtomicTrans8}}\;.
	\end{equation}
	\item Finally, we define the ontological model of BCT to be deterministic scalar-preserving, i.e., the ontological model maps scalar $1$ in BCT to scalar $1$ in classical theory, that is
	\begin{equation}
		\tilde{\xi}{(1)}:=1.
	\end{equation}
\end{itemize}
\end{definition}

We have defined the ontological model on single system entities, namely pure states, pure effects, and normalised atomic transformations, as well as their multipartite counterparts.  In App.~\ref{App:MultiVsSingle} we will see how these two views are related, where as an example, in App.~\ref{App:rev-trans-ont}, we use the results established in App.~\ref{App:MultiVsSingle} to formalize reversible transformations of BCT in the ontological model.

\subsubsection{Consistency checks}

In this section, we examine whether the ontological model constructed for BCT is consistent. By~\ref{def:ONT}, consistency requires the model to be linear, diagram preserving, probability preserving, or empirically adequate, and determinacy preserving. The model we have constructed is defined as a linear map, with its action specified on systems, pure states, pure effects, normalised atomic transformations, and the scalar $1$. It remains to verify that this definition is compatible with linearity and that the model satisfies diagram preservation, probability preservation, and determinacy preservation. We show these properties in what follows.

\

\paragraph{Linearity preservation}
We have defined the ontological model to be linear, but a priori this could lead to inconsistencies. Namely, if we had two different linear decompositions of the same transformations in terms of normalised atomic transformations, then the ontological model could conceivably depend on which decomposition we used. This, however, is not a problem for BCT because all of the normalised atomic transformations are linearly independent, as we showed in Prop.~\ref{uniqueness}. 

\paragraph{Diagram preservation}
Diagram preservation implies that for the sequential composition of two arbitrary transformations (for clarity, we omit explicit notation of systems and transformations), one has
\begin{equation}
\label{seqdp}
\tikzfig{BCT-diagrams/14_dp18}=\tikzfig{BCT-diagrams/14_dp19}\;.
\end{equation}
And for the parallel composition of two arbitrary transformations, one has
\begin{equation}
\label{pardp}
\tikzfig{BCT-diagrams/9_GenDiag}=\tikzfig{BCT-diagrams/9_GenDiag5}\;.
\end{equation}
One also has both identity and swap preservation (See~\eqref{identity-swap} in App.~\ref{dpproof}). When all of these conditions are satisfied, diagram preservation holds for every arbitrary scenario (like the one in~\eqref{dpexample}). 

The ontological model of BCT, as we have defined it, indeed satisfies diagram preservation, as we prove in App.~\ref{dpproof}. In doing so, we first show that diagram preservation holds for the sequential composition of normalised atomic transformations. Then, using the linearity of the ontological model and the fact that every transformation is uniquely given as a conical combination of normalised atomic transformations (by Proposition~\ref{uniqueness}), we conclude that the model is diagram preserving for arbitrary sequential compositions.

For parallel composition, we reduce the preservation of an arbitrary pair of transformations to simpler conditions (see~\eqref{dp2} in App.~\ref{dpproof})—including identity and swap preservation—whose joint satisfaction implies diagram preservation. These conditions are all fulfilled by virtue of Proposition~\ref{uniqueness} and the linearity of the ontological model.

Note that, for the sake of clarity, we have reviewed diagram preservation in Figures~\eqref{seqdp} and~\eqref{pardp} only for the case where all systems are single and non-trivial. In App.~\ref{dpproof}, we explain why diagram preservation also holds when systems are multipartite, not limited to this specific setting. 

\paragraph{Determinacy preservation}
The ontological model must map every deterministic transformation in BCT to a deterministic transformation in classical theory. This requirement is satisfied, as demonstrated in App.~\ref{detproof}, where we prove that the ontological model preserves the unique deterministic effect. By diagram preservation, this implies that the ontological model maps every deterministic transformation in BCT to a deterministic transformation in classical theory (i.e., a stochastic matrix). 

\subsection{Bilocal classical theory admits an ontological model}
We have shown that the model in Definition~\ref{BCT-ontological-model} satisfies all the requirements of Definition~\ref{def:ONT}, and so is a valid ontological model for BCT. Hence, we obtain the following result:
\begin{theorem}\label{thm:BCT-NCOM}
There exists an ontological model for bilocal classical theory.
\end{theorem}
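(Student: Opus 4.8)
The plan is to take the explicit linear map $\tilde\xi$ of Definition~\ref{BCT-ontological-model} as the candidate ontological model and to verify, one at a time, the four defining properties of Definition~\ref{ontological-model}: well-definedness as a linear map, diagram preservation, probability preservation, and determinacy preservation (so that instruments map to valid classical instruments). Since all of these are spelled out in the consistency-checks subsection with details deferred to App.~\ref{App:Consistency}, the proof itself amounts to assembling those facts.

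First, well-definedness. The map $\tilde\xi$ is declared linear but specified only on systems, pure states, pure effects, normalised atomic transformations, and the scalar $1$, so one must check that the linear extension does not depend on how a given transformation is decomposed. This is immediate from Proposition~\ref{uniqueness}: every transformation in BCT has a \emph{unique} conical expansion into normalised atomic transformations, so the linear extension is forced and consistent. The same uniqueness is then the workhorse that reduces every later check to a statement about (sequential and parallel composites of) normalised atomic transformations, pure states, and pure effects.

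Second, and this is the substantive part, diagram preservation. For sequential composition I would verify that $\tilde\xi$ intertwines the BCT rule~\eqref{seq-comp-atomic} for normalised atomic transformations with ordinary matrix composition in classical theory: the ``which-branch'' part composes by the Kronecker-delta factor $\delta_{l,i'_0}$ exactly as substochastic matrices do, while the binary-subspace part composes by $\tau\oplus\tau'$, which is reproduced by composing the two controlled-NOT gadgets. Linearity together with Proposition~\ref{uniqueness} then extends this to arbitrary sequential composites. For parallel composition the subtlety is that in BCT the tensor of two pure states is not pure but the uniform mixture $\tfrac12\sum_s \lvert(ij)_s)$, so one must check that the copy-gate gadget in~\eqref{ont-model-on-composite-state} produces precisely a maximally mixed binary register correlated across the two ontic factors, and that the gadget in~\eqref{ont-model-on-composite-atomics} reproduces the parallel-composition rule for atomic transformations (which are themselves no longer atomic after $\boxtimes$). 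One then reduces parallel preservation to identity preservation, swap preservation~\eqref{identity-swap}, and this base case, invoking the coherence of BCT as a symmetric monoidal category and the unique system-decomposition property so that the multipartite NOT/copy gadgets are unambiguous (the content of App.~\ref{App:MultiVsSingle} and App.~\ref{dpproof}).

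Third and fourth are short. Probability preservation is forced: any scalar $p\in[0,1]$ equals $p\cdot 1$, so $\tilde\xi(p)=p\,\tilde\xi(1)=p$ by linearity and the stipulation $\tilde\xi(1)=1$. For determinacy preservation it suffices, since BCT is causal, to check that $\tilde\xi$ sends the unique deterministic effect of each system $n$ to the (stochastic) discard effect on the corresponding $2n$-dimensional classical system; combined with diagram preservation this forces every channel of BCT to map to a stochastic matrix, hence the full coarse-graining of any BCT instrument maps to a stochastic matrix, so instruments map to valid classical instruments. I expect the parallel-composition half of diagram preservation to be the main obstacle, since that is where the exotic BCT composition rule (the dimension doubling, the binary variables, the loss of purity and atomicity under $\boxtimes$) must be matched exactly by the copy-gate construction on the ontic side and reconciled with associativity and with the single-system definitions; everything else is bookkeeping once Proposition~\ref{uniqueness} is available.
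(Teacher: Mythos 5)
Your proposal is correct and follows essentially the same route as the paper: take the explicit map of Definition~\ref{BCT-ontological-model}, use the uniqueness of conical decompositions into normalised atomic transformations (Proposition~\ref{uniqueness}) to make the linear extension well-defined and to reduce all checks to atomic generators, verify sequential diagram preservation via the composition rule~\eqref{seq-comp-atomic}, reduce parallel diagram preservation to identity preservation, swap preservation, and the one-sided case, and dispatch probability and determinacy preservation exactly as in the paper's consistency-checks section and Appendix~\ref{App:Consistency}. You also correctly identify parallel composition (where the BCT dimension doubling and loss of purity under $\boxtimes$ must be matched by the copy-gate gadget) as the genuinely substantive step, which is where the paper spends its effort in Appendix~\ref{dpproof}.
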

Thus, BCT is classically explainable and compatible with the notion of generalised noncontextuality, in the sense discussed in~\cite{schmid2024structuretheorem,soltani2025noncontextualontologicalmodelsoperational}. This might lead one to suspect that every locally-classical OPT admits of an ontological model and is hence classically explainable. However, as we show in the next section, this is not the case. To demonstrate this, we leverage the general OPT construction presented in Ref.~\cite{erba2024compositionrulequantumsystems} to build a specific counterexample: a locally-classical theory that violates local tomography and does not admit an ontological model.

\section{A locally-classical theory that is not classically-explainable}\label{Sec:LQT}

\subsection{Latent classical theories}
Here, we aim to show that a class of locally-classical OPTs known as \emph{latent classical theories}—ob\-tain\-ed via the OPT construction introduced in Ref.~\cite{erba2024compositionrulequantumsystems}—unlike BCT, do not admit of any ontological model. This OPT construction was originally applied to quantum systems to define latent quantum theories (LQTs), namely, a family of OPTs that are locally quantum while violating local tomography.\footnote{``Locally quantum'' here means that every system, be it elementary or composite, is quantum---while the composition postulate may generally differ from the standard quantum one. See also footnote~\ref{fn:locally}.} By applying the same construction to classical systems, one can similarly define the following: 

\begin{definition}[Latent classical theories (LCTs)]
\emph{Latent classical theories} (LCTs) are the family of locally-classical OPTs defined by applying the construction of~\cite[Apps.~A-B]{erba2024compositionrulequantumsystems} to classical systems. Equivalently, LCTs may be defined as the restriction of LQTs~\cite{erba2024compositionrulequantumsystems} to their classical, namely, diagonal, sectors.

\end{definition}
The key feature of LCTs is that they generally differ from standard classical theory in the composition rule. Intuitively, the composition rule for latent classical system is the usual one up to extra degrees of freedom ``popping up'' in the form of factor spaces. This generalises the standard classical composition rule. More concretely, for every pair of elementary latent classical systems $C_1$ and $C_2$, there exists a classical space $L_{12}$ such that
\begin{equation*}
C_1C_2\coloneqq C_1\boxtimes C_2 = L_{12}\otimes C_1\otimes C_2.
\end{equation*}
The space $L_{12}$ is called the \emph{latent factor} associated with the composite $C_1C_2$. Moreover, there exists a normalised state $\left|\kappa\right)_{L_{12}}$, called the \emph{latent state} of $L_{12}$, such that the parallel composition of latent classical states reads as
\begin{equation}\label{eq:LCT-composite}
\left|\sigma\right)_{C_1}\boxtimes\left|\tau\right)_{C_2} = \left|\kappa\right)_{L_{12}}\otimes\left|\sigma\right)_{C_1}\otimes\left|\tau\right)_{C_2}\quad\forall
\left|\sigma\right)_{C_1}\in\textbf{St}(C_1),\left|\tau\right)_{C_2}\in\textbf{St}(C_2)
.
\end{equation}
Therefore, whenever a LCT has a latent factor that is non-trivial---i.e., that is not one-dimensional---the resulting LCT will be different from standard classical theory. With this, we can prove the following result.
\subsection{Latent classical theories do not admit ontological models}
\begin{theorem}\label{thm:counterexample}
There does not exist an ontological model for any latent classical theory  where at least one latent state is not full-rank.
\end{theorem}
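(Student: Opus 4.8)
The plan is a proof by contradiction: suppose $\tilde\xi$ is an ontological model of an LCT, and (without loss of generality) let $C_1,C_2$ be a pair of elementary systems for which the latent state $\left|\kappa\right)_{L_{12}}$ is not full-rank, so that there is a latent index $j^\ast$ outside the support of $\kappa$. First I would extract the general properties of $\tilde\xi$ that the argument needs. Since the singleton instrument $\{u_A\}$ is sent to a valid, hence deterministic, transformation of $\Delta$, and the deterministic effect of a classical system is unique, one gets $\tilde\xi(u_A)=u_{\Lambda_A}$; combined with $\tilde\xi(1)=1$ this shows $\tilde\xi$ maps normalised states to normalised states. Moreover $\tilde\xi$ preserves joint perfect discriminability: a discriminating observation instrument $\{(a_i|\}_i$ with $(a_i|\rho_{i'})=\delta_{i,i'}$ is mapped to an instrument of $\Delta$ with the same property on the images $\tilde\xi(\rho_i)$. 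Together with the elementary classical fact that normalised distributions admitting such a discriminating response-function instrument must have pairwise-disjoint supports (if $\sum_i(a_i|=u$ and $(a_i|\rho_{i'})=\delta_{i,i'}$, then $(a_i|$ equals $1$ on $\operatorname{supp}(\rho_i)$ and $0$ on each $\operatorname{supp}(\rho_{i'})$, $i'\neq i$), this gives: $\tilde\xi$ sends any jointly-perfectly-discriminable family of normalised states to normalised distributions with pairwise-disjoint, non-empty supports.

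Next I would apply this to the composite $C_1C_2=L_{12}\otimes C_1\otimes C_2$. Local classicality makes $C_1C_2$ a classical system whose pure states are the points $\left|j,i_1,i_2\right):=\left|j\right)_{L_{12}}\otimes\left|i_1\right)_{C_1}\otimes\left|i_2\right)_{C_2}$, with $j$ running over the latent index set, and these are jointly perfectly discriminable. By diagram preservation $\tilde\xi(C_1C_2)=\Lambda_{C_1}\otimes\Lambda_{C_2}$, so their images $\tilde\xi(\left|j,i_1,i_2\right))$ are normalised distributions on $\Lambda_{C_1}\otimes\Lambda_{C_2}$ with pairwise-disjoint, non-empty supports $\Sigma_{j,i_1,i_2}$. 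Fixing pure states $\left|i_1\right)$ of $C_1$ and $\left|i_2\right)$ of $C_2$ and writing $\alpha:=\tilde\xi(\left|i_1\right))$, $\beta:=\tilde\xi(\left|i_2\right))$, $A:=\operatorname{supp}(\alpha)$, $B:=\operatorname{supp}(\beta)$, diagram preservation yields $\tilde\xi(\left|i_1\right)\boxtimes\left|i_2\right))=\alpha\otimes\beta$, whose support is the rectangle $A\times B$. The LCT composition rule~\eqref{eq:LCT-composite} gives $\left|i_1\right)\boxtimes\left|i_2\right)=\left|\kappa\right)_{L_{12}}\otimes\left|i_1\right)\otimes\left|i_2\right)=\sum_{j\in\operatorname{supp}(\kappa)}\kappa_j\left|j,i_1,i_2\right)$, so by linearity and positivity of the $\kappa_j$ one obtains $A\times B=\bigsqcup_{j\in\operatorname{supp}(\kappa)}\Sigma_{j,i_1,i_2}$.

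The final step brings in $j^\ast$ via marginalisation. Discarding $C_2$ from $C_1C_2$ is the channel $I_{C_1}\boxtimes u_{C_2}$, which in the latent-theory construction traces out the latent factor — that is, it acts as $u_{L_{12}}\otimes I_{C_1}\otimes u_{C_2}$ — so it sends $\left|j,i_1,i_2\right)$ to $\left|i_1\right)$ for \emph{every} $j$, and symmetrically discarding $C_1$ sends it to $\left|i_2\right)$. Pushing this through $\tilde\xi$ (using $\tilde\xi(I_{C_1}\boxtimes u_{C_2})=I_{\Lambda_{C_1}}\otimes u_{\Lambda_{C_2}}$), the $\Lambda_{C_1}$- and $\Lambda_{C_2}$-marginals of $\tilde\xi(\left|j,i_1,i_2\right))$ equal $\alpha$ and $\beta$ for every $j$, so $\Sigma_{j,i_1,i_2}\subseteq A\times B$ for every $j$ — in particular for $j=j^\ast$. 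But $\Sigma_{j^\ast,i_1,i_2}\subseteq A\times B=\bigsqcup_{j\in\operatorname{supp}(\kappa)}\Sigma_{j,i_1,i_2}$ while being disjoint from each $\Sigma_{j,i_1,i_2}$ with $j\in\operatorname{supp}(\kappa)$, which forces $\Sigma_{j^\ast,i_1,i_2}=\emptyset$; this contradicts the normalisation, hence non-vanishing, of $\tilde\xi(\left|j^\ast,i_1,i_2\right))$. Therefore no ontological model exists.

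The combinatorial heart above is short, so I expect the real work to be in stating precisely the structural inputs about LCTs imported from~\cite{erba2024compositionrulequantumsystems}: that the pure states of $C_1C_2$ are exactly the $\left|j,i_1,i_2\right)$, and that discarding a single elementary factor traces out the latent factor (the map $I_{C_1}\boxtimes u_{C_2}$ equals $u_{L_{12}}\otimes I_{C_1}\otimes u_{C_2}$). Both should follow directly from that construction together with local classicality. I would also emphasise that the non-full-rank hypothesis is genuinely used: when $\left|\kappa\right)_{L_{12}}$ is full-rank — as it is for the binary latent factor of BCT — there is no index $j^\ast$ outside $\operatorname{supp}(\kappa)$ and the last step is vacuous, consistent with Theorem~\ref{thm:BCT-NCOM}.
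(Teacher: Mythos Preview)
Your proof is correct and takes a genuinely different route from the paper's. The paper argues via a non-null bipartite effect $b=(\kappa^\perp|\otimes(\text{Tr}|\otimes(\text{Tr}|$ that annihilates all product states, establishes a ``jellyfish'' identity showing that $b$ composed with any bipartite state $\beta$ through identity wires vanishes, and then exploits the Choi cups/caps available in the locally-tomographic target $\Delta$ to ``straighten'' the wires in the image and conclude $\tilde\xi((b|\beta))=0$, contradicting probability preservation for a suitably chosen $\beta$. Your argument is more combinatorial: you use preservation of joint perfect discriminability to get pairwise-disjoint supports $\Sigma_{j,i_1,i_2}$ in $\Lambda_{C_1}\times\Lambda_{C_2}$, identify the rectangle $A\times B$ with the disjoint union over $j\in\operatorname{supp}(\kappa)$ via the composition rule, and then use marginalisation to trap $\Sigma_{j^\ast,i_1,i_2}$ inside that rectangle while being disjoint from every piece. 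Both routes rely on a structural input from the latent-theory construction: the paper needs the jellyfish identity (which it cites to the appendix of~\cite{erba2024compositionrulequantumsystems}), you need $I_{C_1}\boxtimes u_{C_2}=u_{L_{12}}\otimes I_{C_1}\otimes u_{C_2}$, which you correctly flag. What the paper's approach buys is greater reach---the Choi argument shows non-embeddability into \emph{any} locally-tomographic target, not just $\Delta$ (a point the paper highlights). What your approach buys is that it is entirely elementary, using only supports of probability distributions, with no need for Choi vectors or the jellyfish identity; it also makes the role of the non-full-rank hypothesis (the existence of $j^\ast\notin\operatorname{supp}(\kappa)$) completely transparent.
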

\proof[Proof]
Let us consider any LCT where there exists a pair of elementary latent classical systems $C_1$ and $C_2$ whose associated latent factor $L_{12}$ has latent state $\left|\kappa\right)_{L_{12}}$ that is not full-rank. Note that this implies that the associated latent factor $L_{12}$ has dimension $d\geq2$, meaning that the LCTs considered here are all different from standard classical theory. Therefore, by construction, there also exists at least one non-null effect $\left(\kappa^{\perp}\right|_{L_{12}}$ such that $\left(\kappa^{\perp}|\kappa\right)_{L_{12}}=0$. Denoting the discarding of any system by $\left(\text{Tr}\right|$, we then define the composite effect $\left(b\right|_{C_1C_2}\coloneqq\left(\kappa^{\perp}\right|_{L_{12}}\otimes\left(\text{Tr}\right|_{C_1}\otimes\left(\text{Tr}\right|_{C_2}$ that, in spite of being non-null, annihilates every product state in the light of Eq.~\eqref{eq:LCT-composite}, i.e.,
\begin{equation}\label{eq:annihilating-effect}
\scalebox{0.9}{\tikzfig{Proof-latent-classical/locally-annihilating}}
\!=\ 
0
\qquad\forall
\left|\sigma\right)_{C_1}\in\textbf{St}(C_1),\left|\tau\right)_{C_2}\in\textbf{St}(C_2)
.
\end{equation}
The existence of such an effect is a non-trivial feature and is made possible by the violation of local tomography---as a necessary albeit not sufficient condition. Furthermore, one also has
\begin{equation}\label{eq:jellyfish}
\scalebox{0.9}{\tikzfig{Proof-latent-classical/jellyfish}}
\end{equation}
for all states $\left|\beta\right)_{C_1C_2}$ (where we recall that $\varepsilon$ denotes the null process). Eq.~\eqref{eq:jellyfish} can be easily verified by direct computation resorting to the general construction of Ref.~\cite[App.~B]{erba2024compositionrulequantumsystems} and using Eq.~\eqref{eq:annihilating-effect}. Let us define $\left|\bar{\kappa}\right)_{L_{12}}$ to be any state of $L_{12}$ such that $\left(\kappa^{\perp}|\bar{\kappa}\right)_{L_{12}}\neq0$: thus, the composite state $\left|\beta\right)_{C_1C_2}\coloneqq\left|\bar{\kappa}\right)_{L_{12}}\otimes\left|\rho_1\right)_{C_1}\otimes\left|\rho_2\right)_{C_2}$---where $\left|\rho_1\right)_{C_1}$ and $\left|\rho_2\right)_{C_2}$ are arbitrary non-null states---satisfies
\begin{equation}\label{eq:pairing}
\scalebox{0.9}{\tikzfig{Proof-latent-classical/pairing}}
\!\neq\ 
0
.
\end{equation}
Finally, we recall that the real-linear span of any finite-dimensional locally tomographic theory, including standard classical theory, contains a \emph{generalised Choi vector} $\left|\gamma_{\text{Ch}}\right)_{SS}\coloneqq\sum_i|ii)_{SS} \in\textbf{St}_{\mathbb{R}}(SS)$ and \emph{a generalised Choi covector} $\left(g_{\text{Ch}}\right|_{SS}\coloneqq\sum_j(jj|_{SS}\in\textbf{Eff}_{\mathbb{R}}(SS)$ for every system $S$, such that
\begin{equation}\label{eq:choi}
\scalebox{0.9}{\tikzfig{Proof-latent-classical/choi}}
\ .
\end{equation}
Note that neither of these is necessarily physical as they only need to live in the vector space spanned by the states/effects. In particular, even if they are in the cone of states/effects they are not necessarily normalised. We are now in position to prove the thesis by contradiction. Suppose that there exists an ontological model $\tilde{\xi}$ for the above-defined class of LCTs. Accordingly, by diagram preservation and linearity, from relation~\eqref{eq:jellyfish} one obtains
\begin{equation*}
\scalebox{0.9}{\tikzfig{Proof-latent-classical/jellyfish-ont}}\ ,
\end{equation*}
which, by exploiting relation~\eqref{eq:choi}, in turn implies
\begin{equation}\label{eq:pairing-choi}
\scalebox{0.8}{\tikzfig{Proof-latent-classical/pairing-choi}}
\ =\ 
\scalebox{0.8}{\tikzfig{Proof-latent-classical/pairing-choi-2}}
\ =\ 
\scalebox{0.8}{\tikzfig{Proof-latent-classical/null-choi}}
\!=\ 0.
\end{equation}
On the other hand, by relation~\eqref{eq:pairing} and probability preservation, one also has
\begin{equation*}
\scalebox{0.9}{\tikzfig{Proof-latent-classical/pairing-ont}}
\!\neq\ 0,
\end{equation*}
which is in contradiction with~\eqref{eq:pairing-choi}.
\endproof
Some comments are here in order. The above proof is based on more general techniques and results which will be presented in a forthcoming work~\cite{Erba2025genuine}. In particular, by generalising the proof technique exploited here, Theorem~\ref{thm:counterexample} can be shown to hold even in the case where the image of the ontological model is taken to be an \emph{infinite-dimensional classical theory} (provided that local tomography is satisfied)~\cite{Erba2025genuine}. This means that the result is not an artifact of considering ontological models into a finite-dimensional classical theory. In fact, the LCTs that we defined in Theorem~\ref{thm:counterexample} cannot be embedded in any locally tomographic theory~\cite{Erba2025genuine}. Remarkably, the example illustrated here is the first one (to our knowledge) of a theory exhibiting such a strong violation of local tomography, in the sense that it allows for the existence of a non-null bipartite effect annihilating every product state (as per Eq.~\eqref{eq:annihilating-effect}). By construction, the latter feature is common to LQTs~\cite{erba2024compositionrulequantumsystems} as well. 
\section{Conclusion}
\begin{figure}[htbp]
\centering
\includegraphics[width=1.0\textwidth]{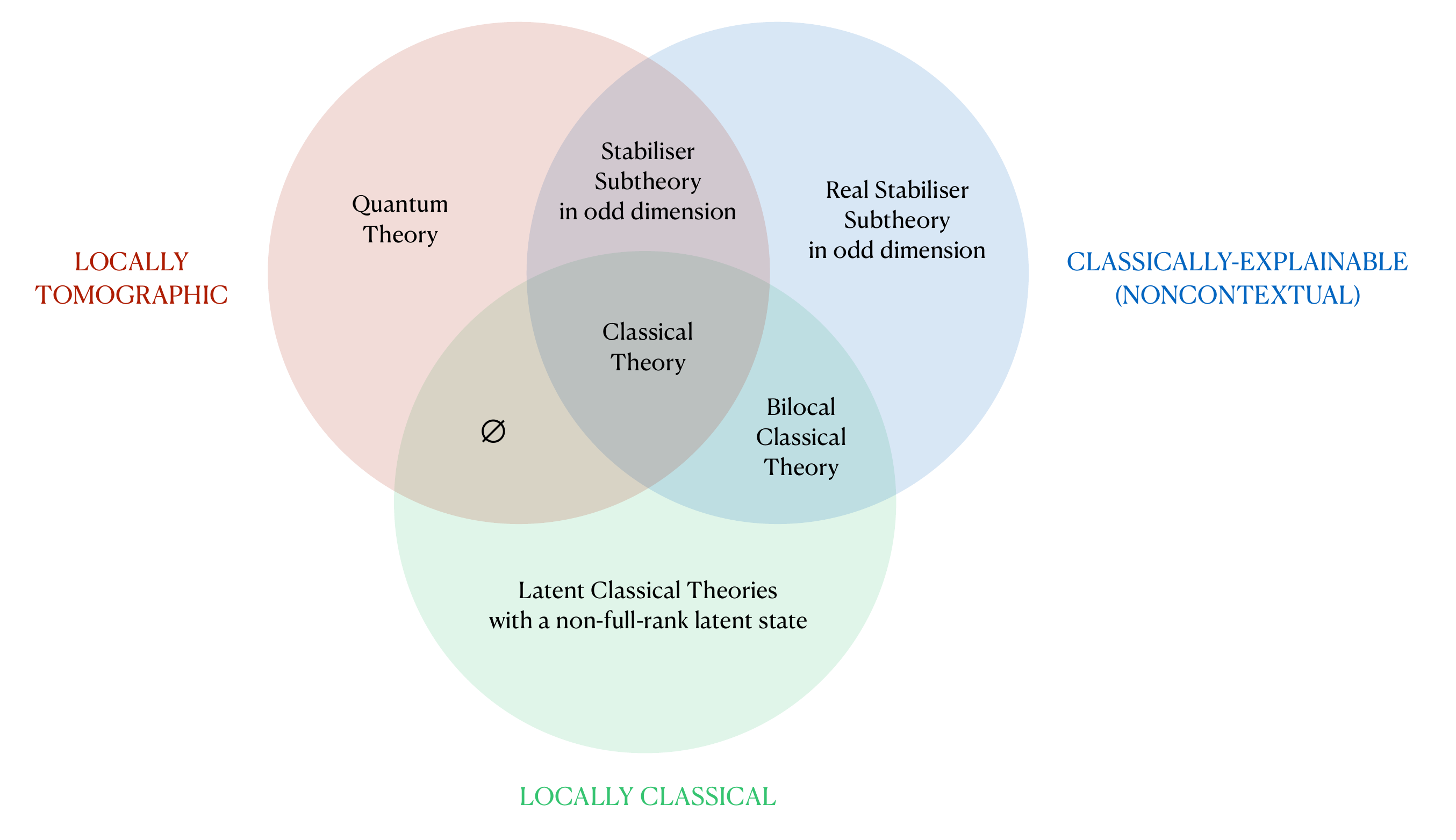}
\caption{
	Venn diagram incorporating our results. One example of a theory is included for each region, with the exception of a single region that is empty (since the only theory that is both locally tomographic and locally-classical is CT, which is, however, also classically-explainable). Our main results, Theorem~\ref{thm:BCT-NCOM} and Theorem~\ref{thm:counterexample}, allow us to populate the two regions of locally-classical theories which are non-locally tomographic.
}
\label{fig:venn-diagram}
\end{figure}
The existence of an ontological model for BCT refutes the conjecture of Ref.~\cite{d2020classicality} and shows that in general there is no tension between failures of local tomography and classical explainability (i.e., generalised noncontextuality). Indeed, the model constructed for BCT provides a way to locally account for non-locally tomographic degrees of freedom by locally adding extra ontic states. This matches with the results of Ref.~\cite{centeno2024twirled}, and can in fact be viewed as the completion of the argument regarding holism from that work from a prepare-and-measure scenario to a fully compositional theory. Moreover, it embodies a concrete example that the assumption of local tomography used in the structure theorem in Ref.~\cite{schmid2024structuretheorem} is indeed a genuine limitation of the theorem.

Our proof technique for demonstrating the validity of the ontological model for BCT (see Appendices) provides a concrete methodology to define fully-fledged ontological models for other theories existing in the literature. The explicit model constructed here is also likely to provide a useful tool for further research into locally-classical theories. On the one hand, it provides a convenient diagrammatic representation of BCT which will be useful for better understanding the features of this theory. On the other hand, it provides a convenient route towards generalising this theory, both to further locally-classical theories, but also to general methods for modifying the composition rule of a given OPT. In particular, it opens the door to understanding the relationship between BCT and the \emph{twirled} and \emph{swirled} \emph{worlds} introduced in Refs.~\cite{centeno2024twirled,ying2025quantumtheoryneedscomplex}.

We found that not every locally-classical theory is classically-explainable, as shown by the latent classical theories that we defined. Note that, for latent classical unipartite systems, every two-stage experiment where a state is prepared and a measurement is chosen out of a set of possible measurement settings (called a \emph{prepare-and-measure scenario}) is clearly classically-explainable. In fact, latent classical unipartite systems satisfy a stronger property: since they are strictly classical (therefore simplicial), they are also \emph{simplex-embeddable}~\cite{SchmidGPT} by construction---and then, in particular, also \emph{Kochen--Specker noncontextual}~\cite{KS,RevModPhys.94.045007}. Consequently, the failure to admit of an ontological model~\cite{schmid2024structuretheorem} is here due to how latent classical systems compose, essentially exhibiting a strong violation of local tomography---one that cannot be explained away due to the impossibility of embedding the theory into any locally tomographic one~\cite{Erba2025genuine}. Note that this notion of classical explainability is sensitive to this kind of failure of nonclassicality, whereas the traditional Kochen--Specker~\cite{KS,RevModPhys.94.045007} notion of noncontextuality is not, since the latter applies to single systems, whereas the generalised notion constrains all parts of a theory, including the theory's composition rule. Hence, our results, that are summarised in Figure~\ref{fig:venn-diagram}, demonstrate that there is no straightforward relationship between theories being locally-classical and being classically-explainable. It would therefore be an interesting direction to find some physical principle which singles out which locally-classical theories are classically-explainable. This also underscores the fact that considering how systems compose is an important part of understanding what it means to be classical, just as it is an important part of defining and understanding any fully-fledged theory.

\phantomsection
\addcontentsline{toc}{section}{Acknowledgments}
\section*{Acknowledgements}
The authors are grateful to Paweł Mazurek for his help in refining the title of the manuscript. SS, ME, DS, and JHS were supported by the
National Science Centre, Poland (Opus project, Categorical Foundations of the Non-Classicality of Nature,
project no.~2021/41/B/ST2/03149).
JHS conducted part of this research while visiting the Okinawa Institute
of Science and Technology (OIST) through the Theoretical Sciences Visiting Program (TSVP). DS was supported by Perimeter Institute for Theoretical Physics. Research at Perimeter Institute is supported in part by the Government of Canada through the Department of Innovation, Science and Economic Development and by the Province of Ontario through the Ministry of Colleges and Universities.

\bibliographystyle{quantum}
\phantomsection
\addcontentsline{toc}{section}{References}
\bibliography{context}

\begin{thebibliography}{10}

\bibitem{d2020classicality}
Giacomo~Mauro D'Ariano, Marco Erba, and Paolo Perinotti.
\newblock ``Classicality without local discriminability: Decoupling entanglement and complementarity''.
\newblock \href{https://dx.doi.org/10.1103/PhysRevA.102.052216}{Physical Review A {\bf 102}, 052216}~(2020).

\bibitem{schmid2024structuretheorem}
David Schmid, John~H. Selby, Matthew~F. Pusey, and Robert~W. Spekkens.
\newblock ``A structure theorem for generalized-noncontextual ontological models''.
\newblock \href{https://dx.doi.org/10.22331/q-2024-03-14-1283}{{Quantum} {\bf 8}, 1283}~(2024).

\bibitem{chiribella2010probabilistic}
Giulio Chiribella, Giacomo~Mauro D'Ariano, and Paolo Perinotti.
\newblock ``Probabilistic theories with purification''.
\newblock \href{https://dx.doi.org/10.1103/PhysRevA.81.062348}{Phys. Rev. A {\bf 81}, 062348}~(2010).

\bibitem{d2017quantum}
Giacomo~Mauro D'Ariano, Giulio Chiribella, and Paolo Perinotti.
\newblock ``Quantum theory from first principles: An informational approach''.
\newblock \href{https://dx.doi.org/10.1017/9781107338340}{Cambridge University Press}. ~(2017).

\bibitem{Rolino_2025}
Davide Rolino, Marco Erba, Alessandro Tosini, and Paolo Perinotti.
\newblock ``Minimal operational theories: classical theories with quantum features''.
\newblock \href{https://dx.doi.org/10.1088/1367-2630/ada850}{New Journal of Physics {\bf 27}, 023004}~(2025).

\bibitem{soltani2025noncontextualontologicalmodelsoperational}
Sina Soltani, Marco Erba, David Schmid, and John~H. Selby.
\newblock ``Noncontextual ontological models of operational probabilistic theories''~(2025).
\newblock  \href{http://arxiv.org/abs/2502.11842}{arXiv:2502.11842}.

\bibitem{erba2025categorical}
Marco Erba, Paolo Perinotti, and Giacomo~Mauro D'Ariano.
\newblock ``{Categorical Physics I – Review of the Operational Probabilistic Theories framework}''.
\newblock \emph{forthcoming}~(2025).

\bibitem{GPT_Barrett}
Jonathan Barrett.
\newblock ``Information processing in generalized probabilistic theories''.
\newblock \href{https://dx.doi.org/10.1103/PhysRevA.75.032304}{Phys. Rev. A {\bf 75}, 032304}~(2007).

\bibitem{PhysRevA.87.052131}
Peter Janotta and Raymond Lal.
\newblock ``Generalized probabilistic theories without the no-restriction hypothesis''.
\newblock \href{https://dx.doi.org/10.1103/PhysRevA.87.052131}{Phys. Rev. A {\bf 87}, 052131}~(2013).

\bibitem{Janotta_2014}
Peter Janotta and Haye Hinrichsen.
\newblock ``Generalized probability theories: what determines the structure of quantum theory?''.
\newblock \href{https://dx.doi.org/10.1088/1751-8113/47/32/323001}{Journal of Physics A: Mathematical and Theoretical {\bf 47}, 323001}~(2014).

\bibitem{lami2018nonclassicalcorrelationsquantummechanics}
Ludovico Lami.
\newblock ``Non-classical correlations in quantum mechanics and beyond''~(2018).
\newblock  \href{http://arxiv.org/abs/1803.02902}{arXiv:1803.02902}.

\bibitem{PLAVALA20231}
Martin Plávala.
\newblock ``{General probabilistic theories: An introduction}''.
\newblock \href{https://dx.doi.org/10.1016/j.physrep.2023.09.001}{Physics Reports {\bf 1033}, 1--64}~(2023).

\bibitem{werner2014comment}
Reinhard~F Werner.
\newblock ``{Comment on What Bell did}''.
\newblock \href{https://dx.doi.org/10.1088/1751-8113/47/42/424011}{J. Phys. A {\bf 47}, 424011}~(2014).

\bibitem{PhysRevLett.128.160402}
Guillaume Aubrun, Ludovico Lami, Carlos Palazuelos, and Martin Pl\'avala.
\newblock ``{Entanglement and Superposition Are Equivalent Concepts in Any Physical Theory}''.
\newblock \href{https://dx.doi.org/10.1103/PhysRevLett.128.160402}{Phys. Rev. Lett. {\bf 128}, 160402}~(2022).

\bibitem{Schmid2024reviewreformulation}
David Schmid.
\newblock ``A review and reformulation of macroscopic realism: resolving its deficiencies using the framework of generalized probabilistic theories''.
\newblock \href{https://dx.doi.org/10.22331/q-2024-01-03-1217}{{Quantum} {\bf 8}, 1217}~(2024).

\bibitem{Wooter_1990}
William~K. Wootters.
\newblock ``{Local Accessibility of Quantum States}''.
\newblock Pages 39--46.
\newblock CRC Press. Boca Raton~(1990).

\bibitem{hardy2012limited}
Lucien Hardy and William~K Wootters.
\newblock ``{Limited Holism and Real-Vector-Space Quantum Theory}''.
\newblock \href{https://dx.doi.org/10.1007/s10701-011-9616-6}{Found. Phys. {\bf 42}, 454--473}~(2012).

\bibitem{centeno2024twirled}
Daniel Centeno, Marco Erba, Thomas~D. Galley, David Schmid, John~H. Selby, Robert~W. Spekkens, Sina Soltani, Jacopo Surace, Alex Wilce, and Y\`{\i}l\`e Y\ifmmode~\bar{\imath}\else \={\i}\fi{}ng.
\newblock ``{Symmetry-induced failures of tomographic locality: Constructing foil theories by twirling}''.
\newblock \href{https://dx.doi.org/10.1103/hpmv-15sf}{Phys. Rev. A {\bf 112}, L030202}~(2025).

\bibitem{lismer2025experimentaltestprincipletomographic}
Tristan~S. Lismer, Kaleb~B. Felefele, Robert~W. Spekkens, and Kevin~J. Resch.
\newblock ``Experimental test of the principle of tomographic locality''~(2025).
\newblock  \href{http://arxiv.org/abs/2506.07775}{arXiv:2506.07775}.

\bibitem{baldijão2026tomographicallynonlocalentanglement}
Roberto~D. Baldijão, Marco Erba, David Schmid, John~H. Selby, and Ana~Belén Sainz.
\newblock ``Tomographically-nonlocal entanglement''~(2026).
\newblock  \href{http://arxiv.org/abs/2602.16280}{arXiv:2602.16280}.

\bibitem{SchmidGPT}
David Schmid, John~H. Selby, Elie Wolfe, Ravi Kunjwal, and Robert~W. Spekkens.
\newblock ``{Characterization of Noncontextuality in the Framework of Generalized Probabilistic Theories}''.
\newblock \href{https://dx.doi.org/10.1103/PRXQuantum.2.010331}{PRX Quantum {\bf 2}, 010331}~(2021).

\bibitem{schmid2020unscrambling}
David Schmid, John~H. Selby, and Robert~W. Spekkens.
\newblock ``Unscrambling the omelette of causation and inference: The framework of causal-inferential theories''~(2021).
\newblock  \href{http://arxiv.org/abs/2009.03297}{arXiv:2009.03297}.

\bibitem{muller2023testing}
Markus~P. M\"uller and Andrew J.~P. Garner.
\newblock ``Testing quantum theory by generalizing noncontextuality''.
\newblock \href{https://dx.doi.org/10.1103/PhysRevX.13.041001}{Phys. Rev. X {\bf 13}, 041001}~(2023).

\bibitem{selby2023accessible}
John~H. Selby, David Schmid, Elie Wolfe, Ana~Bel\'en Sainz, Ravi Kunjwal, and Robert~W. Spekkens.
\newblock ``Accessible fragments of generalized probabilistic theories, cone equivalence, and applications to witnessing nonclassicality''.
\newblock \href{https://dx.doi.org/10.1103/PhysRevA.107.062203}{Phys. Rev. A {\bf 107}, 062203}~(2023).

\bibitem{selbylinear}
John~H. Selby, Elie Wolfe, David Schmid, Ana~Bel\'en Sainz, and Vinicius~P. Rossi.
\newblock ``Linear program for testing nonclassicality and an open-source implementation''.
\newblock \href{https://dx.doi.org/10.1103/PhysRevLett.132.050202}{Phys. Rev. Lett. {\bf 132}, 050202}~(2024).

\bibitem{Schmid2025shadowssubsystemsof}
David Schmid, John~H. Selby, Vinicius~P. Rossi, Roberto~D. Baldij{\~{a}}o, and Ana~Bel{\'{e}}n Sainz.
\newblock ``Shadows and subsystems of generalized probabilistic theories: when tomographic incompleteness is not a loophole for contextuality proofs''.
\newblock \href{https://dx.doi.org/10.22331/q-2025-10-13-1880}{{Quantum} {\bf 9}, 1880}~(2025).

\bibitem{rossi2025typicalcontextuality}
Vinicius~P. Rossi, Beata Zjawin, Roberto~D. Baldijão, David Schmid, John~H. Selby, and Ana~Belén Sainz.
\newblock ``How typical is contextuality?''~(2025).
\newblock  \href{http://arxiv.org/abs/2510.20722}{arXiv:2510.20722}.

\bibitem{gencontext}
R.~W. Spekkens.
\newblock ``Contextuality for preparations, transformations, and unsharp measurements''.
\newblock \href{https://dx.doi.org/10.1103/PhysRevA.71.052108}{Phys. Rev. A {\bf 71}, 052108}~(2005).

\bibitem{harrigan2010einstein}
Nicholas Harrigan and Robert~W Spekkens.
\newblock ``Einstein, incompleteness, and the epistemic view of quantum states''.
\newblock \href{https://dx.doi.org/10.1007/s10701-009-9347-0}{Foundations of Physics {\bf 40}, 125--157}~(2010).

\bibitem{Leibniz}
Robert~W. {Spekkens}.
\newblock ``{The ontological identity of empirical indiscernibles: Leibniz's methodological principle and its significance in the work of Einstein}''~(2019).
\newblock  \href{http://arxiv.org/abs/1909.04628}{arXiv:1909.04628}.

\bibitem{negativity}
Robert~W. Spekkens.
\newblock ``{Negativity and Contextuality are Equivalent Notions of Nonclassicality}''.
\newblock \href{https://dx.doi.org/10.1103/PhysRevLett.101.020401}{Phys. Rev. Lett. {\bf 101}, 020401}~(2008).

\bibitem{Schmid2022Stabilizer}
David Schmid, Haoxing Du, John~H. Selby, and Matthew~F. Pusey.
\newblock ``Uniqueness of noncontextual models for stabilizer subtheories''.
\newblock \href{https://dx.doi.org/10.1103/PhysRevLett.129.120403}{Phys. Rev. Lett. {\bf 129}, 120403}~(2022).

\bibitem{shahandeh2021quantum}
Farid Shahandeh.
\newblock ``Quantum computational advantage implies contextuality''~(2021).
\newblock  url:~\url{https://arxiv.org/abs/2112.00024}.

\bibitem{schmid2018contextual}
David Schmid and Robert~W Spekkens.
\newblock ``Contextual advantage for state discrimination''.
\newblock \href{https://dx.doi.org/10.1103/PhysRevX.8.011015}{Phys. Rev. X {\bf 8}, 011015}~(2018).

\bibitem{flatt2021contextual}
Kieran Flatt, Hanwool Lee, Carles Roch~I Carceller, Jonatan~Bohr Brask, and Joonwoo Bae.
\newblock ``Contextual advantages and certification for maximum-confidence discrimination''.
\newblock \href{https://dx.doi.org/10.1103/PRXQuantum.3.030337}{PRX Quantum {\bf 3}, 030337}~(2022).

\bibitem{mukherjee2021discriminating}
Sumit Mukherjee, Shivam Naonit, and A.~K. Pan.
\newblock ``Discriminating three mirror-symmetric states with a restricted contextual advantage''.
\newblock \href{https://dx.doi.org/10.1103/PhysRevA.106.012216}{Phys. Rev. A {\bf 106}, 012216}~(2022).

\bibitem{Shin2021}
Jaehee Shin, Donghoon Ha, and Younghun Kwon.
\newblock ``Quantum contextual advantage depending on nonzero prior probabilities in state discrimination of mixed qubit states''.
\newblock \href{https://dx.doi.org/10.3390/e23121583}{Entropy{\bf 23}}~(2021).

\bibitem{Catani2023whyinterference}
Lorenzo Catani, Matthew Leifer, David Schmid, and Robert~W. Spekkens.
\newblock ``Why interference phenomena do not capture the essence of quantum theory''.
\newblock \href{https://dx.doi.org/10.22331/q-2023-09-25-1119}{{Quantum} {\bf 7}, 1119}~(2023).

\bibitem{catani2022reply}
Lorenzo Catani, Matthew Leifer, David Schmid, and Robert~W Spekkens.
\newblock ``Reply to ``comment on `why interference phenomena do not capture the essence of quantum theory'''''~(2022).
\newblock  url:~\url{https://arxiv.org/abs/2207.11791}.

\bibitem{catani2023aspects}
Lorenzo Catani, Matthew Leifer, Giovanni Scala, David Schmid, and Robert~W. Spekkens.
\newblock ``Aspects of the phenomenology of interference that are genuinely nonclassical''.
\newblock \href{https://dx.doi.org/10.1103/PhysRevA.108.022207}{Phys. Rev. A {\bf 108}, 022207}~(2023).

\bibitem{giordani2023experimental}
Taira Giordani, Rafael Wagner, Chiara Esposito, Anita Camillini, Francesco Hoch, Gonzalo Carvacho, Ciro Pentangelo, Francesco Ceccarelli, Simone Piacentini, Andrea Crespi, et~al.
\newblock ``Experimental certification of contextuality, coherence, and dimension in a programmable universal photonic processor''.
\newblock \href{https://dx.doi.org/10.1126/sciadv.adj4249}{Science Advances {\bf 9}, eadj4249}~(2023).

\bibitem{selby2023incompatibility}
John~H. Selby, David Schmid, Elie Wolfe, Ana~Bel\'en Sainz, Ravi Kunjwal, and Robert~W. Spekkens.
\newblock ``Contextuality without incompatibility''.
\newblock \href{https://dx.doi.org/10.1103/PhysRevLett.130.230201}{Phys. Rev. Lett. {\bf 130}, 230201}~(2023).

\bibitem{PhysRevA.109.022239}
Marco Erba, Paolo Perinotti, Davide Rolino, and Alessandro Tosini.
\newblock ``Measurement incompatibility is strictly stronger than disturbance''.
\newblock \href{https://dx.doi.org/10.1103/PhysRevA.109.022239}{Phys. Rev. A {\bf 109}, 022239}~(2024).

\bibitem{PhysRevResearch.2.013011}
Armin Tavakoli and Roope Uola.
\newblock ``Measurement incompatibility and steering are necessary and sufficient for operational contextuality''.
\newblock \href{https://dx.doi.org/10.1103/PhysRevResearch.2.013011}{Phys. Rev. Research {\bf 2}, 013011}~(2020).

\bibitem{catani2022nonclassical}
Lorenzo Catani, Matthew Leifer, Giovanni Scala, David Schmid, and Robert~W. Spekkens.
\newblock ``What is nonclassical about uncertainty relations?''.
\newblock \href{https://dx.doi.org/10.1103/PhysRevLett.129.240401}{Phys. Rev. Lett. {\bf 129}, 240401}~(2022).

\bibitem{contextmetrology}
Matteo Lostaglio.
\newblock ``{Certifying Quantum Signatures in Thermodynamics and Metrology via Contextuality of Quantum Linear Response}''.
\newblock \href{https://dx.doi.org/10.1103/PhysRevLett.125.230603}{Phys. Rev. Lett. {\bf 125}, 230603}~(2020).

\bibitem{comar2024contextuality}
Naim~Elias Comar, Danilo Cius, Luis~Felipe Santos, Rafael Wagner, and B{\'a}rbara Amaral.
\newblock ``Contextuality in anomalous heat flow''.
\newblock \href{https://dx.doi.org/10.1103/f68k-cjx4}{PRX Quantum {\bf 6}, 030359}~(2025).

\bibitem{lostaglio2018}
Matteo Lostaglio.
\newblock ``Quantum fluctuation theorems, contextuality, and work quasiprobabilities''.
\newblock \href{https://dx.doi.org/10.1103/PhysRevLett.120.040602}{Phys. Rev. Lett. {\bf 120}, 040602}~(2018).

\bibitem{AWV}
Matthew~F. Pusey.
\newblock ``{Anomalous Weak Values Are Proofs of Contextuality}''.
\newblock \href{https://dx.doi.org/10.1103/PhysRevLett.113.200401}{Phys. Rev. Lett. {\bf 113}, 200401}~(2014).

\bibitem{KLP19}
Ravi Kunjwal, Matteo Lostaglio, and Matthew~F. Pusey.
\newblock ``{Anomalous weak values and contextuality: Robustness, tightness, and imaginary parts}''.
\newblock \href{https://dx.doi.org/10.1103/PhysRevA.100.042116}{Phys. Rev. A {\bf 100}, 042116}~(2019).

\bibitem{rossi2023contextuality}
Vinicius~P. Rossi, David Schmid, John~H. Selby, and Ana~Bel\'en Sainz.
\newblock ``Contextuality with vanishing coherence and maximal robustness to dephasing''.
\newblock \href{https://dx.doi.org/10.1103/PhysRevA.108.032213}{Phys. Rev. A {\bf 108}, 032213}~(2023).

\bibitem{Wagner2024coherence}
Rafael Wagner, Anita Camillini, and Ernesto~F. Galv{\~{a}}o.
\newblock ``Coherence and contextuality in a {M}ach-{Z}ehnder interferometer''.
\newblock \href{https://dx.doi.org/10.22331/q-2024-02-05-1240}{{Quantum} {\bf 8}, 1240}~(2024).

\bibitem{wagner2024inequalities}
Rafael Wagner, Rui~Soares Barbosa, and Ernesto~F Galv{\~a}o.
\newblock ``Inequalities witnessing coherence, nonlocality, and contextuality''.
\newblock \href{https://dx.doi.org/10.1103/PhysRevA.109.032220}{Physical Review A {\bf 109}, 032220}~(2024).

\bibitem{baldijao2021noncontextuality}
Roberto~D. Baldij\~ao, Rafael Wagner, Cristhiano Duarte, B\'arbara Amaral, and Marcelo~Terra Cunha.
\newblock ``{Emergence of Noncontextuality under Quantum Darwinism}''.
\newblock \href{https://dx.doi.org/10.1103/PRXQuantum.2.030351}{PRX Quantum {\bf 2}, 030351}~(2021).

\bibitem{POM}
Robert~W. Spekkens, D.~H. Buzacott, A.~J. Keehn, Ben Toner, and G.~J. Pryde.
\newblock ``{Preparation Contextuality Powers Parity-Oblivious Multiplexing}''.
\newblock \href{https://dx.doi.org/10.1103/PhysRevLett.102.010401}{Phys. Rev. Lett. {\bf 102}, 010401}~(2009).

\bibitem{RAC}
Andre Chailloux, Iordanis Kerenidis, Srijita Kundu, and Jamie Sikora.
\newblock ``Optimal bounds for parity-oblivious random access codes''.
\newblock \href{https://dx.doi.org/10.1088/1367-2630/18/4/045003}{New J. Phys. {\bf 18}, 045003}~(2016).

\bibitem{RAC2}
Andris Ambainis, Manik Banik, Anubhav Chaturvedi, Dmitry Kravchenko, and Ashutosh Rai.
\newblock ``Parity oblivious d-level random access codes and class of noncontextuality inequalities''.
\newblock \href{https://dx.doi.org/10.1007/s11128-019-2228-3}{Quantum Inf. Process. {\bf 18}, 111}~(2019).

\bibitem{Saha_2019}
Debashis Saha, Pawe{\l} Horodecki, and Marcin Paw{\l}owski.
\newblock ``State independent contextuality advances one-way communication''.
\newblock \href{https://dx.doi.org/10.1088/1367-2630/ab4149}{New J. Phys. {\bf 21}, 093057}~(2019).

\bibitem{Yadavalli2020}
Shiv~Akshar Yadavalli and Ravi Kunjwal.
\newblock ``Contextuality in entanglement-assisted one-shot classical communication''.
\newblock \href{https://dx.doi.org/10.22331/q-2022-10-13-839}{{Quantum} {\bf 6}, 839}~(2022).

\bibitem{PhysRevLett.119.220402}
Alley Hameedi, Armin Tavakoli, Breno Marques, and Mohamed Bourennane.
\newblock ``Communication games reveal preparation contextuality''.
\newblock \href{https://dx.doi.org/10.1103/PhysRevLett.119.220402}{Phys. Rev. Lett. {\bf 119}, 220402}~(2017).

\bibitem{fonseca2024robustness}
Amanda~M. Fonseca, Vinicius~P. Rossi, Roberto~D. Baldij{\~a}o, John~H. Selby, and Ana~Bel{\'e}n Sainz.
\newblock ``Robustness of contextuality under different types of noise as quantifiers for parity-oblivious multiplexing tasks''.
\newblock \href{https://dx.doi.org/10.1103/PhysRevA.111.022217}{Phys. Rev. A {\bf 111}, 022217}~(2025).

\bibitem{cloningcontext}
Matteo Lostaglio and Gabriel Senno.
\newblock ``Contextual advantage for state-dependent cloning''.
\newblock \href{https://dx.doi.org/10.22331/q-2020-04-27-258}{Quantum {\bf 4}, 258}~(2020).

\bibitem{jokinen2024nobroadcasting}
Pauli Jokinen, Mirjam Weilenmann, Martin Pl{\'a}vala, Juha-Pekka Pellonp{\"a}{\"a}, Jukka Kiukas, and Roope Uola.
\newblock ``No-broadcasting characterizes operational contextuality''.
\newblock \href{https://dx.doi.org/10.1103/PhysRevLett.133.240201}{Phys. Rev. Lett. {\bf 133}, 240201}~(2024).

\bibitem{PP1}
M.~S. Leifer and Robert~W. Spekkens.
\newblock ``{Pre- and Post-Selection Paradoxes and Contextuality in Quantum Mechanics}''.
\newblock \href{https://dx.doi.org/10.1103/PhysRevLett.95.200405}{Phys. Rev. Lett. {\bf 95}, 200405}~(2005).

\bibitem{Roch2021}
Carles Roch~i Carceller, Kieran Flatt, Hanwool Lee, Joonwoo Bae, and Jonatan~Bohr Brask.
\newblock ``Quantum vs noncontextual semi-device-independent randomness certification''.
\newblock \href{https://dx.doi.org/10.1103/PhysRevLett.129.050501}{Phys. Rev. Lett. {\bf 129}, 050501}~(2022).

\bibitem{Leifer}
M.~S. Leifer and O.~J.~E. Maroney.
\newblock ``{Maximally Epistemic Interpretations of the Quantum State and Contextuality}''.
\newblock \href{https://dx.doi.org/10.1103/PhysRevLett.110.120401}{Phys. Rev. Lett. {\bf 110}, 120401}~(2013).

\bibitem{Wright2023invertible}
Victoria~J. Wright and M\'at\'e Farkas.
\newblock ``Invertible map between bell nonlocal and contextuality scenarios''.
\newblock \href{https://dx.doi.org/10.1103/PhysRevLett.131.220202}{Phys. Rev. Lett. {\bf 131}, 220202}~(2023).

\bibitem{operationalks}
Ravi Kunjwal and Robert~W. Spekkens.
\newblock ``{From the Kochen-Specker Theorem to Noncontextuality Inequalities without Assuming Determinism}''.
\newblock \href{https://dx.doi.org/10.1103/PhysRevLett.115.110403}{Phys. Rev. Lett. {\bf 115}, 110403}~(2015).

\bibitem{kunjwal2018from}
Ravi Kunjwal and Robert~W. Spekkens.
\newblock ``From statistical proofs of the kochen-specker theorem to noise-robust noncontextuality inequalities''.
\newblock \href{https://dx.doi.org/10.1103/PhysRevA.97.052110}{Phys. Rev. A {\bf 97}, 052110}~(2018).

\bibitem{Kunjwal16}
Ravi Kunjwal.
\newblock ``{Contextuality beyond the Kochen-Specker theorem}''~(2016).
\newblock  url:~\url{https://arxiv.org/abs/1612.07250}.

\bibitem{Kunjwal19}
Ravi Kunjwal.
\newblock ``{Beyond the {C}abello-{S}everini-{W}inter framework: {M}aking sense of contextuality without sharpness of measurements}''.
\newblock \href{https://dx.doi.org/10.22331/q-2019-09-09-184}{{Quantum} {\bf 3}, 184}~(2019).

\bibitem{Kunjwal20}
Ravi Kunjwal.
\newblock ``Hypergraph framework for irreducible noncontextuality inequalities from logical proofs of the {K}ochen-{S}pecker theorem''.
\newblock \href{https://dx.doi.org/10.22331/q-2020-01-10-219}{{Quantum} {\bf 4}, 219}~(2020).

\bibitem{specker}
Ravi Kunjwal.
\newblock ``Fine's theorem, noncontextuality, and correlations in specker's scenario''.
\newblock \href{https://dx.doi.org/10.1103/PhysRevA.91.022108}{Phys. Rev. A {\bf 91}, 022108}~(2015).

\bibitem{Gonda2018almostquantum}
Tom{\'{a}}{\v{s}} Gonda, Ravi Kunjwal, David Schmid, Elie Wolfe, and Ana~Bel{\'{e}}n Sainz.
\newblock ``Almost {Q}uantum {C}orrelations are {I}nconsistent with {S}pecker's {P}rinciple''.
\newblock \href{https://dx.doi.org/10.22331/q-2018-08-27-87}{{Quantum} {\bf 2}, 87}~(2018).

\bibitem{d2020classical}
Giacomo~Mauro D'Ariano, Marco Erba, and Paolo Perinotti.
\newblock ``Classical theories with entanglement''.
\newblock \href{https://dx.doi.org/10.1103/PhysRevA.101.042118}{Physical Review A {\bf 101}, 042118}~(2020).

\bibitem{erba2024compositionrulequantumsystems}
Marco Erba and Paolo Perinotti.
\newblock ``The composition rule for quantum systems is not the only possible one''~(2024).
\newblock  \href{http://arxiv.org/abs/2411.15964}{arXiv:2411.15964}.

\bibitem{PhysRevLett.104.140401}
H.~Barnum, S.~Beigi, S.~Boixo, M.~B. Elliott, and S.~Wehner.
\newblock ``{Local Quantum Measurement and No-Signaling Imply Quantum Correlations}''.
\newblock \href{https://dx.doi.org/10.1103/PhysRevLett.104.140401}{Phys. Rev. Lett. {\bf 104}, 140401}~(2010).

\bibitem{PhysRevLett.109.090403}
Gonzalo de~la Torre, Llu\'{\i}s Masanes, Anthony~J. Short, and Markus~P. M\"uller.
\newblock ``{Deriving Quantum Theory from Its Local Structure and Reversibility}''.
\newblock \href{https://dx.doi.org/10.1103/PhysRevLett.109.090403}{Phys. Rev. Lett. {\bf 109}, 090403}~(2012).

\bibitem{Chiribella_2014}
Giulio Chiribella.
\newblock ``Dilation of states and processes in operational-probabilistic theories''.
\newblock \href{https://dx.doi.org/10.4204/EPTCS.172.1}{Electronic Proceedings in Theoretical Computer Science {\bf 172}, 1--14}~(2014).

\bibitem{chiribella2016quantum}
Giulio Chiribella, Giacomo~Mauro D'Ariano, and Paolo Perinotti.
\newblock ``Quantum from principles''.
\newblock \href{https://dx.doi.org/10.1007/978-94-017-7303-4_6}{Pages 171--221}.
\newblock Springer Netherlands. Dordrecht~(2016).

\bibitem{Erba2025genuine}
Marco Erba.
\newblock ``{Genuine violations of local tomography: A challenge for hidden-variable theories and experimental tests}''.
\newblock \emph{forthcoming}~(2025).

\bibitem{ying2025quantumtheoryneedscomplex}
Yìlè Yīng, Maria~Ciudad Alañón, Daniel Centeno, Jacopo Surace, Marina~Maciel Ansanelli, Ruizhi Liu, David Schmid, and Robert~W. Spekkens.
\newblock ``On whether quantum theory needs complex numbers: the foil theories perspective''~(2025).
\newblock  \href{http://arxiv.org/abs/2506.08091}{arXiv:2506.08091}.

\bibitem{KS}
Simon Kochen and E.~P. Specker.
\newblock ``{The Problem of Hidden Variables in Quantum Mechanics}''.
\newblock \href{https://dx.doi.org/10.1007/978-94-010-1795-4_17}{Pages 293--328}.
\newblock Springer Netherlands. Dordrecht~(1975).

\bibitem{RevModPhys.94.045007}
Costantino Budroni, Ad\'an Cabello, Otfried G\"uhne, Matthias Kleinmann, and Jan-\AA{}ke Larsson.
\newblock ``{Kochen-Specker contextuality}''.
\newblock \href{https://dx.doi.org/10.1103/RevModPhys.94.045007}{Rev. Mod. Phys. {\bf 94}, 045007}~(2022).

\bibitem{maclane1965categorical}
Saunders MacLane.
\newblock ``Categorical algebra''.
\newblock \href{https://dx.doi.org/10.1090/S0002-9904-1965-11234-4}{Bulletin of the American Mathematical Society {\bf 71}, 40--106}~(1965).

\bibitem{carette2022propificationscalablecomonad}
Titouan Carette.
\newblock ``Propification and the scalable comonad''~(2022).
\newblock  \href{http://arxiv.org/abs/2205.07760}{arXiv:2205.07760}.

\end{thebibliography}

\phantomsection
\addcontentsline{toc}{section}{Appendix}
\addtocontents{toc}{\protect\setcounter{tocdepth}{0}}
\appendix

\section{Further preliminaries}\label{App:Prelim}

In this appendix we provide sufficient technical details on the background material such that this paper can be understood. However, we also provide references for more in depth and pedagogical introductions to these topics. 

\subsection{Operational probabilistic theories}\label{app:OPTs-introduction}
In this part of the appendix, we try to elaborate further on Section~\ref{Sec:OPTs} and briefly provide a more detailed review of the framework of operational probabilistic theories (OPTs), addressing additional preliminaries of the OPT framework that help clarify our main result\footnote{Or, more precisely, the framework of operational probabilistic theories which satisfy \emph{process tomography}, i.e., those OPTs that have been quotiented~\cite{soltani2025noncontextualontologicalmodelsoperational} relative to the operational equivalence relation induced by~\eqref{eqQa} below.}. For a more comprehensive review we refer readers to~\cite{d2017quantum} and the forthcoming work~\cite{erba2025categorical}. As mentioned in the main text, the primitives in the OPT framework are systems, transformations, probabilities, and \textit{instruments}.

An instrument specifies the set of possible transformations that may occur in a given probabilistic process.
whcich we denote it as $T^{A\to B}_{X}$, where $X$ represents the outcome space, i.e., the set of possible outcomes that may occur in the process. Each instrument $T^{A\to B}_{X}$ is an indexed family of transformations from $A$ to $B$, namely, 
\begin{equation}
	\label{tests}
	T^{A\to B}_{X}=[ t_{x} ]_{x\in X}; \;\;\;\forall x\in X,\;\;t_{x}\in
\textbf{Transf}(A\rightarrow B).
\end{equation} 
An instrument with a single outcome will be called a \textit{singleton
instrument}, and will have an outcome set denoted by $\star :=\{ \ast\}$.

We denote the sequential composition of two arbitrary transformations, $t_1\in\textbf{Transf}(A\to B)$ and $t_2\in\textbf{Transf}(B\to C)$, for all systems $A,B,C\in\textbf{Sys}(\Theta)$ as $t_2\circ t_1$, which is depicted as
\begin{equation}
\tikzfig{Diagramss/37_seqcomp1}=\tikzfig{Diagramss/37_seqcomp}\; .
\end{equation}
To allow the OPT framework to include the possibility of doing nothing to systems, we introduce an identity transformation, $I_{S}$, for each system $S\in\textbf{Sys}(\Theta)$, which we depict as
\begin{equation}
\tikzfig{Diagramss/38_id}\;.
\end{equation} 
Furthermore, since an identity transformation represents doing nothing to systems, we demand for an identity transformation to satisfy
\begin{equation}
I_{B}\circ t_{x}=t_{x}=t_{x}\circ I_{A},\;\;\;\forall t\in 
\textbf{Transf}(A\to B).
\end{equation}  
For all systems $A$ and $B$, in the OPT framework one can combine single systems to form a composite system $AB$, which we depict as
\begin{equation}
\tikzfig{Diagramss/38_id1}\;,
\end{equation}
where, when we use the trivial system $I \in \textbf{Sys}(\Theta)$, it satisfies $IA = A = AI$ for all $A \in \textbf{Sys}(\Theta)$.

Having composite systems allows one to compose two arbitrary transformations, $t_{1}\in\textbf{Transf}(A\to B)$ and $t_{2}\in\textbf{Transf}(C\to D)$, for all systems $A, B, C, D \in \textbf{Sys}(\Theta)$, in a parallel way, which we denote as $t_1\boxtimes t_2$, and depict as
\begin{equation}
\tikzfig{Diagramss/39_parcomp1}=\tikzfig{Diagramss/39_parcomp2}\;.
\end{equation} 
This has to be compatible with sequential composition, in the sense that 
  \[(t_2\circ t_1)\boxtimes (t_4\circ t_3) = (t_2\boxtimes t_4)\circ(t_1\boxtimes t_3).\] 
 
To describe the exchange of systems, an OPT requires the notion of swapping, that is, a family $\sigma$ of invertible transformations such that, for any two systems $A$ and $B$, their exchange is represented by $\sigma_{A,B}$, depicted as
\begin{equation}
\tikzfig{Diagramss/41_swap1}=\tikzfig{Diagramss/41_swap}\; ,
\end{equation}
and its inverse, denoted by $\sigma^{-1}_{A,B}$, is depicted as
\begin{equation}
\tikzfig{Diagramss/41_swap3}=\tikzfig{Diagramss/41_swap2}\; .
\end{equation}
Swapping is required to satisfy the sliding property, namely,
\begin{equation}
\tikzfig{Diagramss/41_swap4}=\tikzfig{Diagramss/41_swap5}\;,
\end{equation}
for all $A, B, C, D \in\textbf{Sys}(\Theta)$, and all $t_{1}\in \textbf{Transf}(A\to B)$, $t_{2}\in \textbf{Transf}(C\to D)$. In the case where $\sigma^{-1}_{A,B}= \sigma_{B,A}$, the OPT is said to be symmetric, where we represent $\sigma_{A,B}$ as
\begin{equation}
\tikzfig{Diagramss/41_swap1}=\tikzfig{Diagramss/41_swap6}\;.
\end{equation} 
Note that the compositional structure of an OPT, as described here, corresponds to the structure of a braided monoidal category.

Given a state $\rho$ for system $A$, followed by a transformation $t$ from $A$ to $B$, and then an effect $a$, an OPT $\Theta$ assigns a probability $p \in [0,1]$ to 
\begin{equation}
\label{probability}
\tikzfig{Diagramss/43_joinp}=\tikzfig{Diagramss/62_prob2}=p\;.
\end{equation}
This probability $p$ is the chance that $\rho$ occurs among all other states in its corresponding state instrument, $t$ among all transformations in its instrument, and $a$ among all outcomes in its corresponding effect instrument. The assignment thus defines a probability distribution over all transformations in the respective instruments. We define parallel composition of scalars as
\begin{equation}
p\boxtimes q:=pq,\;\;\;\forall p,q\in \textbf{Transf}(I\to I),
\end{equation}
which is simply the multiplication of real numbers. We can also define \textit{multiplication by scalars}, that is, for every $p\in\textbf{Transf}(I\to I)$ and every $t\in\textbf{Transf}(A\to B)$, one has
\begin{equation}
\label{scmul}
p\cdot\tikzfig{Diagramss/36_event_t}:=\tikzfig{Diagramss/42_scaler}=\tikzfig{Diagramss/42_scaler1}\;\;.
\end{equation} 
an OPT $\Theta$ is required to be \emph{tomographic} for transformations. That is, one can use the states and effects within an OPT to generate statistics to distinguish them or to identify their equivalence. In particular, for every $t_{1}, t_{2} \in \textbf{Transf}(A \to B)$ one has $t_{1} = t_{2}$ if and only if they yield the same probabilities, i.e.,
\begin{equation}
\label{eqQa}
\tikzfig{Diagramss/44_eq}=\tikzfig{Diagramss/44_eq1}
\end{equation}
for every $E \in \textbf{Sys}(\Theta)$, every state $\rho \in \textbf{Transf}(I \to AE)$, and every effect $a \in \textbf{Transf}(BE \to I)$. Note that in the full framework of OPTs this is in fact only required to be an equivalence relation rather than an equality, however, one can always quotient with respect to this relation and obtain an OPT of the sort we have defined here. 

Note that in an OPT, as can be seen from~\eqref{eqQa} that $\textbf{St}(A) := \textbf{Transf}(I\to A)$ can be identified with a set of functionals from $\textbf{Eff}(A):=\textbf{Transf}(A \to I)$ to the real interval $\left[0, 1\right]$, and vice versa. Moreover, the set of states separates the set of effects, and the set of effects separates the set of states. This means that for any pair of distinct states $\vert\rho)_{A}, \vert\sigma)_{A} \in \textbf{St}(A)$, there exists an effect $(a\vert_{A} \in \textbf{Eff}(A)$ such that $(a\vert \rho)_{A} \neq (a\vert \sigma)_{A}$ (and similarly for any pair of distinct effects), where one can show~\cite{chiribella2010probabilistic,Chiribella_2014,chiribella2016quantum,d2017quantum} that these two properties imply a linear structure, as described in the main text. Similarly, transformations from $A$ to $B$ can be regarded as functionals that assign a probability to each state–effect pair. As such, they also form a vector space, whose linear structure is discussed in the main text.

Also, note that, as a consequence of~(\ref{eqQa}), any transformation $t \in \textbf{Transf}(A\to B )$ naturally gives rise to a linear map $\hat{t}_{E}$ from $\textbf{St}_\mathbb{R}(AE)$ to $\textbf{St}_\mathbb{R}(BE)$ for every system $E$. Thus, each transformation $t$ is fully specified by the collection of such linear maps $\lbrace \hat{t}_{E} \rbrace_{E \in \textbf{Sys}(\Theta)}$. In particular, two transformations $t_1$ and $t_2$ from $A$ to $B$ are equal if and only if
\begin{equation}
\label{eq:local-tomography-no-ancilla}
t_1 = t_2 \iff \hat{t}_{1,E} = \hat{t}_{2,E},\quad \forall E \in \textbf{Sys}(\Theta).
\end{equation}
In general, the full infinite family $\lbrace \hat{t}_{E}\rbrace_{ E \in \text{Sys}(\Theta)}$ over all systems $E$ may be needed to completely determine a transformation. However, in some cases, this is not required. For instance, in ``locally tomographic'' theories, each transformation is uniquely determined by its action on the system alone~\cite{schmid2024structuretheorem}, meaning the map with trivial ancilla $E$ suffices.

In a locally tomographic OPT, the complete specification of any state is achievable using only the statistics of local measurements. That is, two states are identical if they yield the same statistics for all local measurements. An OPT $\Theta$ is locally tomographic if and only if the dimension condition $D_{AB} = D_A D_B$ holds for all systems $A, B \in \textbf{Sys}(\Theta)$~\cite{d2017quantum}. Under this condition, for each composite system $AB$, the real vector space $\textbf{St}_{\mathbb{R}}(AB)$ is spanned by product states. Moreover, to fully determine a transformation $t \in \textbf{Transf}(A \to B)$ it suffices to examine its action on the system alone, using only the trivial system as the ancilla. 

The notion of local tomography can be generalised to \textit{$n$-local tomography}~\cite{d2020classicality, hardy2012limited}, wherein a theory exhibits $n$-local tomography if the state of a composite system can be completely determined using measurement statistics from subsystems of up to $n$ components—that is, from $1$-component, $2$-component, ..., up to $n$-component measurements. For the special case $n=2$, the theory is referred to as a \textit{bilocally tomographic} theory. In such a theory, both local and bipartite measurements—as well as conical combinations of their parallel compositions—are required to fully characterise a multipartite state. An example of such theory is BCT reviewed in Sec.~\ref{Sec:BCT}.

A set of deterministic (i.e., normalised) states $\lbrace\vert\rho_{i} )\rbrace_{i=1}^{n}$ for a system $A$ is said to be jointly perfectly discriminable if there exists an observation instrument $\lbrace ( a_{i}\vert _{\text{A}} \rbrace _{i=1}^{n}$ such that
\begin{equation}
\label{eq:perfectly-discriminable}
\left(a_{i}\vert \rho_{i^{'}}\right)=\delta_{i, i^{'}},\;\;\;\forall i, i^{'}\in\lbrace1,2,\ldots,n\rbrace.
\end{equation} 

An agent may choose to conflate outcomes within any subset of a instrument's outcome space, that is, ignore distinctions between transformations in that subset. To formalise this in the OPT framework, we define coarse-graining for an instrument $T^{A\to B}_{X} = [t_{x}]_{x\in X}$ and any subset $Y \subseteq X$ as
\begin{equation}
t_{Y}:=\sum_{y\in Y} t_y,
\end{equation}
where both sequential and parallel composition distribute over summation.

Every deterministic transformation (see~\eqref{eq:normalisation}) 
$t \in \textbf{Transf}(A \to B)$ defines a singleton instrument such that $T_{\star}=\lbrace t \rbrace$. which implies that that the transformation occurs with certainty, having a marginal probability of 1. Consequently, since singleton instruments correspond to deterministic transformations, for any instrument $T_{X}^{A\to B}=\lbrace t_x\rbrace_{x\in X}\in \textbf{Instr}(A\to B)$, the full coarse-graining, i.e., $t_X=\sum_{x\in X} t_x$, is a deterministic transformation. Deterministic transformations are referred to as \textit{channels}, while deterministic extremal states are known as \textit{pure states}. 

A transformation $\mathscr{R}\in\textbf{Transf}(A\to B)$ is called \textit{reversible}, if there exists a transformation $\mathscr{R}^{-1}\in\textbf{Sys}(A\to B)$ such that $\mathscr{R}^{-1}\mathscr{R}=\mathcal{I}_A$ and $\mathscr{R}\mathscr{R}^{-1}=\mathcal{I}_B$.

To account for zero probabilities within the framework, we assume $0 \in \textbf{Transf}(I \to I)$. From scalar multiplication~\eqref{scmul}, it follows that for any $t \in \textbf{Transf}(A \to B)$ and any systems $A, B \in \textbf{Sys}(\Theta)$, one can define a \textit{null} transformation $\varepsilon_{A \to B} := 0 \cdot t \in \textbf{Transf}(A \to B)$. This null transformation satisfies the condition that for all $E \in \textbf{Sys}(\Theta)$, $\vert \rho)_{AE} \in \textbf{St}(AE)$, and $(a\vert_{BE} \in \textbf{Eff}(BE)$, one has
\begin{equation}
\tikzfig{Diagramss/64_epsilon}=0\;.
\end{equation}
In the OPT framework, one may also consider \textit{conditional instruments}~\cite{chiribella2010probabilistic}, where the selection of an instrument depends on the outcome of a previously performed instrument. That is, based on the result of one instrument—i.e., an element of the outcome space—a subsequent instrument, conditioned on that outcome, is applied in sequence. This notion enables the definition of a stricter form of causality, known as \textit{strong causality}. An OPT is said to be strongly causal if every conditional instrument is itself an instrument of the theory. In other words, implementing a conditional instrument must result in a valid instrument within the theory. The reason this property is referred to as strong causality is that, in any OPT satisfying it, the ability to implement arbitrary conditional instruments implies the uniqueness of the deterministic effect for every system—i.e., standard causality. Hence, causality is a necessary condition for strong causality. However, the converse does not necessarily hold. 
In this paper, we do not attempt to formalise the notion of strong causality, as its technical details are not directly relevant to our purposes. 
\begin{remark}
A key feature of the OPT framework is that the primitive objects are instruments. In this formulation, the category of transformations is induced by the category of instruments, rather than the other way around. However, for our purposes, we are primarily interested in transformations rather than instruments, and we do not investigate the full implications of treating instruments as primitive. This is because our goal is to construct, or show the nonexistence of, an ontological model for specific OPTs. Ref.~\cite{soltani2025noncontextualontologicalmodelsoperational} showed that, for strongly causal OPTs, including the OPTs of interest in this paper, the construction of an ontological model can be formulated at the level of transformations without the need to include instrument structure. Our analysis of OPTs will therefore primarily focus on transformations.
\end{remark} 

Finally, it is intended that the concrete models of OPTs presented in the following all satisfy the property of \emph{existence and uniqueness of system-decomposition}~\cite[\S6, Def.~14]{Rolino_2025}, that is: every system can be \emph{uniquely} decomposed as a parallel composition of non-trivial elementary systems. This property, which was already exploited for the construction of \emph{latent theories} in Ref.~\cite{erba2024compositionrulequantumsystems}, allows for the concrete definition of parallel composition rules by specifying the least number of equalities between different decompositions of arbitrary systems. More specifically, one may start by defining all of the elementary systems for a theory, so that the composite ones are defined by strings on the alphabet of the elementary ones. For instance, finite-dimensional quantum theory may be defined by setting qudits $Q_d$, for every integer $d\geq 2$, to be the (non-trivial) elementary quantum systems; thus, instead of strict equalities, one generally has just isomorphisms of the kind
\begin{equation*}
Q_l\otimes Q_{m}\cong Q_m\otimes Q_{l},\quad Q_l\otimes Q_{m}\cong Q_{lm},\quad 
Q_l\otimes Q_{mn}\cong Q_m\otimes Q_{ln}
.
\end{equation*}
Accordingly, it will be sufficient to specify the elementary systems of an OPT together with a parallel composition rule for them: arbitrary systems will be then given by the monoid that is freely generated by 
the elementary systems. From a category-theoretic viewpoint, an OPT satisfying unique system-decomposition is a \emph{coloured PROP}~\cite{maclane1965categorical,carette2022propificationscalablecomonad}.

\subsection{Bilocal classical theory}\label{App:BCT}

The action of a local pure effect on a bipartite pure state, as well as the action of a local pure state on a bipartite pure effect, is given by
\begin{equation}
\tikzfig{BCT-diagrams/13_BCT3}=\delta_{i,i'}\tikzfig{BCT-diagrams/13_BCT4}\; ,
\end{equation}
\begin{equation}
\tikzfig{BCT-diagrams/13_BCT5}=\frac{1}{2}\delta_{i,i'}\tikzfig{BCT-diagrams/13_BCT6}\; .
\end{equation}

Sequential composition of atomic transformations is also an atomic transformation, given by
\begin{equation}
\label{seq-comp-atomic}
\tikzfig{BCT-diagrams/13_BCT23} = \tikzfig{BCT-diagrams/13_BCT24} \;,
\end{equation}
for all systems $n$, $q$, and $m$, all $1 \leq i_0 \leq n$, $1 \leq l \leq q$, $1 \leq i'_0 \leq q$, $1 \leq l' \leq m$, all $\tau, \tau' \in \lbrace 0,1\rbrace$, and for all $\lambda, \lambda' \in [0,1]$, where $\tilde{\tau} := \tau \oplus \tau'$ and $\tilde{\lambda} := \delta_{l,i'_0} \lambda \lambda'$. In the case of $m=1$, the corresponding effect on the right-hand side will also be an atomic effect. Using~\eqref{atomic-comp}, It can also be shown that the sequential composition of atomic transformations on composite systems results in another atomic transformation. However, in the case of parallel composition of atomic transformations, the resulting transformation is no longer atomic. The same holds for the parallel composition of atomic effects, where the resulting effect is no longer atomic.

Swapping $\sigma_{n,m}$ between two arbitrary systems $n$ and $m$ is defined as a transformation in BCT that satisfies
\begin{equation}
\label{swap}
\tikzfig{BCT-diagrams/6_SwapOn3ParS}=\tikzfig{BCT-diagrams/6_SwapOn3ParS1}\;,
\end{equation}
for every $1\leq i \leq n$, $1\leq j \leq m$, $1\leq k\leq n_E$, and every $s,t\in\lbrace 0,1\rbrace$. Swapping in BCT satisfies $\sigma_{n,m}^{-1} = \sigma_{m,n}$ for every system $m$ and $n$. Hence, BCT is symmetric.

\subsubsection{Proof of Proposition~\ref{uniqueness}}
\label{app:uniqueness-proof}
Consider two possible conical combinations of normalised atomic transformations for an arbitrary transformations $t$:
\begin{equation}
	\label{two-conical-combination}
	\tikzfig{BCT-diagrams/15_ea}=\sum_{i_0,l,\tau}C(i_0,l,\tau)\tikzfig{BCT-diagrams/15_ea2}=\sum_{i_0,l,\tau}C'(i_0,l,\tau)\tikzfig{BCT-diagrams/15_ea2}\;.
\end{equation}
This implies that for the null transformation from system $n$ to $m$, one has
\begin{equation}
	\label{null}
	\tikzfig{BCT-diagrams/13_BCT25}=\sum_{i_0,l,\tau}\left( C(i_0,l,\tau)-C'(i_0,l,\tau) \right)\tikzfig{BCT-diagrams/15_ea2}=\sum_{i_0,l,\tau} a(i_0,l,\tau)\tikzfig{BCT-diagrams/15_ea2}\;,
\end{equation}
where $a(i_0,l,\tau):= C(i_0,l,\tau)-C'(i_0,l,\tau)$. Now, consider an arbitrary bipartite state $ \rho $ for a bipartite system $ n \boxtimes E $, and an arbitrary bipartite effect $ a $ for a bipartite system $ m \boxtimes E $. Since $\varepsilon$ is a null transformation from system $n$ to $m$, for any such $\rho$, effect $a$, and any ancillary system $E$, one has
\begin{equation}
	\tikzfig{BCT-diagrams/13_BCT26}=0\;.
\end{equation}
If we choose
\begin{equation}
	\tikzfig{BCT-diagrams/13_BCT27}=\tikzfig{BCT-diagrams/13_BCT28}\;,
\end{equation}
for some $1 \leq i'_0 \leq n$, $1 \leq j' \leq n_E$ (where $n_E$ is the dimension of the ancillary system $E$), and some $s' \in \lbrace 0,1 \rbrace$, and also choose
\begin{equation}
	\tikzfig{BCT-diagrams/13_BCT29}=\tikzfig{BCT-diagrams/13_BCT30}\;,
\end{equation}
for some $1 \leq l' \leq m$, and some $\tau' \in \lbrace 0,1 \rbrace$,  
then using~\eqref{null} implies
\begin{align}
	\tikzfig{BCT-diagrams/13_BCT26}&=\sum_{i_0,l,\tau}a(i_0,l,\tau)\tikzfig{BCT-diagrams/13_BCT31}\nonumber\\
	&=\sum_{i_0,l,\tau}a(i_0,l,\tau)\delta_{i_0,i'_0}\tikzfig{BCT-diagrams/13_BCT32}\nonumber\\
	&=\sum_{i_0,l,\tau}a(i_0,l,\tau)\delta_{i_0,i'_0}\delta_{l,l'}\delta_{\tau,\tau'}=a(i'_0,l',\tau')=0\;,
\end{align}
where by appropriate choice of $\rho$ and $a$, we can show that all the coefficients $a(i_0,l,\tau)$, for every $i_0$, $l$, and $\tau$, are zero as well. This concludes the proof as it implies that $C(i_0,l,\tau) = C'(i_0,l,\tau)$. 

\vspace{3mm}

\subsection{Maps between different OPTs}\label{app:GNC}
In this section, we introduce the general notion of a map between transformations of two operational probabilistic theories (OPTs). For the purposes of this paper, we adapt the framework of~\cite{soltani2025noncontextualontologicalmodelsoperational}—without loss of generality—to our setting.

\begin{definition}[Maps between two OPTs] A map between two OPTs $\tikzfig{Diagramss/51_mapopt}$, takes every system $\text{A} \in \textbf{Sys}(\Theta)$ to a system $\phi(\text{A}) \in \textbf{Sys}(\Phi)$ and maps every transformation $t$ of $\Theta $ to a transformation $\phi{(t)}$ of $\Phi$, depicted as
\begin{equation}
	\phi::\tikzfig{Diagramss/36_test}\mapsto \tikzfig{Diagramss/51_mapopt2}\;,
\end{equation}
where for the trivial system, one has $\phi(I)=I$.
\end{definition}
Note that this definition implies that a map $\phi$, takes every state in $\Theta$ to a state in $\Phi$, every effect in $\Theta$ to an effect in $\Phi$, and every scalar in $\Phi$ to a scalar in $\Theta$. This definition also implies that every instrument $\Theta$ is mapped to an instrument in $\Phi$, where each transformation of the instrument in $\Phi$, is simply given by the action of the map on the corresponding transformation in $\Theta$.

A map $\phi$ is said to be \textit{linear} if for every $r_{x}\in\mathbb{R}$, every system $A$ and $B\in\textbf{Sys}(\Theta)$, and every transformation $t_{x}\in\textbf{Transf}(A\rightarrow B)$, such that $\sum_{x}r_{x}t_{x}\in\textbf{Transf}(A\rightarrow B )$, one has
\begin{equation}
\phi{\left(\sum_{x}r_{x}t_{x}\right)}=\sum_{x} r_{x}\phi{\left(t_{x}\right)}.
\end{equation}

A map is said to be \textit{diagram-preserving} $\tikzfig{Diagramss/52_dp}$ if composing transformations before or after applying the map remains equivalent. For example,
\begin{align}
\label{dpexample}
\tikzfig{Diagramss/52_dp2}=\tikzfig{Diagramss/52_dp3}\;\;.
\end{align}

Diagram-preserving maps also preserve the identities and swaps, that is
\begin{equation}
\label{identity-swap}\tikzfig{Diagramss/52_dp8}= \tikzfig{Diagramss/52_dp9}\;\; ,\;\; \tikzfig{Diagramss/52_dp10}=\tikzfig{Diagramss/52_dp11}.
\end{equation}

A map is said to be \textit{probability-preserving} $\tikzfig{Diagramss/59_pp}$, if for every scalar $p\in [0,1]$, one has
\begin{equation}
\tikzfig{Diagramss/59_pp5}=\tikzfig{Diagramss/59_pp6}\;.
\end{equation}
A map is said to be \textit{determinacy preserving} if it maps every deterministic transformation to a deterministic transformation in the codomain OPT. For the unique deterministic effect in causal OPTs, this implies
\begin{equation}
\tikzfig{BCT-diagrams/sina}=\tikzfig{BCT-diagrams/sina2}\;.
\end{equation}

\begin{remark}\label{rem:ont-models-structure} 
Note that our definition of a general map between OPTs acts only on transformations. Although one could instead define such maps, including ontological models, at the level of instruments, Ref.~\cite{soltani2025noncontextualontologicalmodelsoperational} shows that this is unnecessary for strongly causal quotiented of OPTs, including BCT and latent classical theories. In these cases, noncontextual ontological models are fully determined at the level of transformations, as in Def.~\ref{def:ONT}.

In general, however, for OPTs that are not strongly causal, this definition may be inadequate, since the instrument structure may still induce a contextual representation of ontological models even at the level of quotiented OPTs. Consequently, the mere existence of ontological models defined only at the level of transformations, as in Def.~\ref{def:ONT}, would not necessarily imply classical explainability (see Ref.~\cite{soltani2025noncontextualontologicalmodelsoperational} for more details).
\end{remark}
\begin{remark}\label{rem:ont-models-structure2}
Once instrument structure is ignored, a quotiented OPT can be regarded as a particular kind of strict symmetric monoidal category enriched in finite-dimensional real vector spaces, together with a specification of which hom-vectors represent physical transformations and which are merely linear combinations of them. Somewhat more specifically, the physical transformations in each hom-vector space can be characterized by the intersection of an appropriate cone with a set of linear inequality constraints. From this perspective, a noncontextual ontological model can be viewed as a linear strict monoidal functor $\tikzfig{Diagramss/47_ontmodelunq}$ that maps physical transformations to physical transformations. More explicitly, diagram preservation supplies the strict monoidal functorial structure, linearity ensures linearity on the hom-vector spaces, and determinism preservation ensures that deterministic transformations are mapped to deterministic transformations. Probability preservation follows from these assumptions, as noted in the main text. 
\end{remark}

\section{Relating single- and multi-system definitions}\label{App:MultiVsSingle}

In the main text we define BCT and its ontological model using multi-system transformations, states, and effects. This was a choice for simplicity of presentation, and here we will show how to relate the single-system and multi-system definitions via explicit merging and splitting maps. 

\subsection{In bilocal classical theory}

To do this, we first introduce a invertible merging transformations for each pair of systems, denoted by $\nu$ where the inverse $\nu^{-1}$ is then the splitting transformation. These satisfy
\begin{equation}
	\label{nu0}
	\tikzfig{BCT-diagrams/11_nu}=\tikzfig{BCT-diagrams/11_nu1},
\end{equation}
for every system $n_1$ and $n_2$. In particular, these map pure bipartite states to  pure states of the corresponding single system, such that
\begin{equation}
	\label{nu1}
	\tikzfig{BCT-diagrams/11_nu7}=\tikzfig{BCT-diagrams/11_nu8}\;,
\end{equation}
for all systems $n_1$, $n_2$, and for all $1\leq i\leq n_1$, $1\leq j\leq n_2$, and $s\in\lbrace 0,1\rbrace$, where we define
\begin{equation}
	Q(i, j, s) := 2n_1(i - 1) + (2j + s - 1).
\end{equation} 
In other words, $\nu$ maps each bipartite pure state with subsystems $n_1$ and $n_2$ to a pure state of a single system of size $2n_1n_2$. For each combination of $i$, $j$, and $s$, it assigns the pure state labeled by $Q(i, j, s)$, where $1 \leq Q(i, j, s) \leq 2n_1n_2$.

In order to fully specify $\nu$ we must specify its action on arbitrary multipartite pure states of any composite system in BCT, in particular we have that
\begin{equation}
	\label{nu2}
	\tikzfig{BCT-diagrams/12_mud25}=\tikzfig{BCT-diagrams/12_mud26}\;,
\end{equation}
where the empty triangle  hereon  represents $\nu$. In other words, to a pure state of a composite system with $p$ single subsystems, it assigns another pure composite state, where one has $p-1$ single subsystems, as described above. For example, for an arbitrary tripartite pure state, one has
\begin{equation}
	\tikzfig{BCT-diagrams/11_nu4}=\tikzfig{BCT-diagrams/11_nu5}\; ,
\end{equation}
where applying $\nu$ again yields the corresponding single-system pure state. More generally, for an arbitrary composite system consisting of single subsystems $n_1, n_2, n_3, \ldots, n_p$, and an arbitrary pure state $\rho$, using $\nu$ consecutively $p-1$ times as
\begin{equation}
	\tikzfig{BCT-diagrams/13_BCT18}\;,
\end{equation}
results in the corresponding single-system pure state with the dimension $2^{p-1}n_1n_2n_3\ldots n_{p-1}n_p$, which we denote it as $\mathcal{Q}_{p}\left[\rho\right]$. For example, for the cases of bipartite, tripartite and quadripartite pure states one has
\begin{align}
	\mathcal{Q}_{2}\left[ (ij)_s \right] 
	&= Q\left( i, j, s \right) , \nonumber \\
	\mathcal{Q}_{3}\left[ \big( (ij)_s \, k \big)_t \right] 
	&= Q\big( Q\left( i, j, s \right), k, t \big) ,\nonumber \\
	\mathcal{Q}_{4}\left[ \Big( \big( (ij)_s \, k \big)_t \, l \Big)_u \right] 
	&= Q\Big( Q\big( Q\left( i, j, s \right), k, t \big), l, u \Big).
\end{align}
And for an arbitrary pure multipartite state, it can be depicted as
\begin{equation}
	\label{nu-on-multipartite-state}
	\tikzfig{BCT-diagrams/12_mud9}:=\tikzfig{BCT-diagrams/13_BCT18}\; ,
\end{equation}
where $N=2^{p-1}n_1n_2n_3\ldots n_{p-1}n_p$. Similarly, for a multipartite pure effect one has
\begin{equation}
	\label{nu-on-multipartite-effect}
	\tikzfig{BCT-diagrams/12_mud42}:=\tikzfig{BCT-diagrams/12_mud43}\;,
\end{equation}
where the empty triangles represent $\nu^{-1}$. The same logic can be used for normalised atomic transformations. Consider an arbitrary normalised atomic transformation on composite systems
\begin{equation}
	\tikzfig{BCT-diagrams/13_BCT33}\; .
\end{equation}
We can associate an effect $a_l$ to the lower indices as 
\begin{equation}
	i_{1},i_{2},\cdots i_{p},s_{1},s_{2}\cdots,s_{p-1} \;\;\longrightarrow \;\; a_{l}= \left(\left(\cdots\left(\left(i_{1}i_{2}\right)_{s_{1}}i_{3}\right)_{s_{2}}\cdots i_{p-1}\right)_{s_{p-2}}i_{p}\right)_{s_{p-1}},
\end{equation}
and a state $\rho_u $ to the upper indices as
\begin{equation}
	l_{1},l_{2},\cdots,l_{q},t_{1},t_{2}\cdots t_{q-1}\;\;\longrightarrow \;\; \rho_u= \left(\left(\cdots\left(\left(l_{1}l_{2}\right)_{t_{1}}l_{3}\right)_{t_{2}}\cdots l_{p-1}\right)_{t_{p-2}}l_{p}\right)_{t_{q-1}}.
\end{equation}
Using $a_l$ and $\rho_u$, for an arbitrary normalised atomic transformation on composite systems, one has
\begin{align}
	\label{nu-on-multipartite-atomic}
	&\hspace{4.3cm}\tikzfig{BCT-diagrams/13_BCT33}\nonumber\\
	&=\tikzfig{BCT-diagrams/12_mud37}\;.
\end{align}

Since $\nu$ is a transformation in BCT, it can be expressed as a conical combination of normalised atomic transformations (note that in App.~\ref{atomiccomp} we show why Proposition~\ref{uniqueness} also holds for transformations on composite systems such as $\nu$), given by
\begin{equation}
	\label{nu-atomic}
	\tikzfig{BCT-diagrams/11_nu10}=\sum_{i,j,s,l,\tau}C\left(i,j,s,l,\tau\right)\cdot\tikzfig{BCT-diagrams/12_mud1}\; ,
\end{equation}

which, when applied to an arbitrary pure state $\left(\left(i' j' \right)_{s'} k' \right)_{t'}$, results in
\begin{equation}
\begin{split}
	\tikzfig{BCT-diagrams/11_nu9}
    &=\tikzfig{BCT-diagrams/12_mud2}=
    \\ &=\sum_{i,j,s,l,\tau}C\left(i,j,s,l,\tau\right)\delta_{i,i^{'}}\delta_{j,j^{'}}\delta_{s,s^{'}}\tikzfig{BCT-diagrams/12_mud3}\; .
\end{split}
\end{equation}
This, as a consequence of the uniqueness of the conical combination of normalised atomic transformations established in Proposition~\ref{uniqueness}, uniquely determines the conical coefficients as
\begin{equation}
	C\left(i,j,s,l,\tau\right)=\delta_{l,Q\left(i,j,s\right)}\delta_{\tau,0}.
\end{equation}
Substituting the above equation into \eqref{nu-atomic} implies
\begin{equation}
	\label{nu-atomic2}
	\tikzfig{BCT-diagrams/11_nu10}=\sum_{i,j,s}\tikzfig{BCT-diagrams/12_mud4}\; .
\end{equation}
Similarly, using the same logic, one has
\begin{equation}
	\tikzfig{BCT-diagrams/11_nu11}=\sum_{i,j,s}\tikzfig{BCT-diagrams/11_nu12}\; .
\end{equation}

\subsection{In the ontological model}

We map the splitting and merging transformations of BCT, namely $\nu$ and its inverse as introduced in~\eqref{nu0},~\eqref{nu1}, and~\eqref{nu2}, to define splittings and mergings for the ontological model, namely $\mu$ and its inverse $\mu^{-1}$. That is,
\begin{equation}
\label{eq:mu}
	\tikzfig{BCT-diagrams/12_mud15}\; ,\;\;\tikzfig{BCT-diagrams/12_mud16}\; ,
\end{equation}
where applying the ontological model on the both sides of \eqref{nu-atomic2} implies
\begin{equation}
	\label{mu-derivation}
	\tikzfig{BCT-diagrams/12_mud6}=\sum_{i,j,s}\tikzfig{BCT-diagrams/12_mud5}=\tikzfig{BCT-diagrams/4_OntMapOfStJ15}.
\end{equation}
Here, $\otimes$ is defined as the merging gate
\begin{equation}
	\tikzfig{BCT-diagrams/4_OntMapOfStJ16}\coloneq
	\tikzfig{BCT-diagrams/4_OntMapOfStJ17}\; .
\end{equation}
The second equality in \eqref{mu-derivation} can be verified when we apply both sides to the orthonormal basis of the corresponding classical space. Similarly, for $\mu^{-1}$, one has
\begin{equation}
	\tikzfig{BCT-diagrams/12_mud7}=  \tikzfig{BCT-diagrams/12_mud41}\; ,
\end{equation}
wherein $\tikzfig{BCT-diagrams/5_mui1}$ is defined as the inverse of the merging gate $\otimes$
\begin{equation}
	\tikzfig{BCT-diagrams/4_OntMapOfStJ19}\coloneq\tikzfig{BCT-diagrams/4_OntMapOfStJ20}\; .
\end{equation}
Having defined $\nu$ and consequently $\mu$, we can alternatively define the ontological model only for single-system entities, namely, single-system pure states, single-system pure effects, and single-system normalised atomic transformations. This, in turn, enables us to derive their multipartite counterparts, which we will show to be consistent with the direct definitions we have given for multipartite entities in the ontological model.

Consider a pure bipartite state \eqref{nu1}. If we apply the ontological model on the both sides, we have
\begin{equation}
	\label{mu-on-bipartie}
	\tikzfig{BCT-diagrams/12_mud30}=\tikzfig{BCT-diagrams/12_mud13}=\tikzfig{BCT-diagrams/12_mud14}\;.
\end{equation}
Note that in the first equality, we use diagram preservation. This indeed holds in this particular case which we show in App.~\ref{diagram-preservation-of-mu}.   

Now, having derived $\mu$ and its inverse, from \eqref{mu-on-bipartie}, one has
\begin{equation}
	\tikzfig{BCT-diagrams/12_mud17}=\tikzfig{BCT-diagrams/12_mud18}\;.
\end{equation}
In other words, this implies that for a bipartite pure state, instead of applying the ontological model directly to the pure bipartite state, we can first apply the ontological model to the corresponding pure state of the corresponding single system, and then apply $\mu^{-1}$, which entails
\begin{equation}
	\tikzfig{BCT-diagrams/12_mud17}=\tikzfig{BCT-diagrams/12_mud19}=\tikzfig{BCT-diagrams/4_OntMapOfStJ25}\;,
\end{equation}
which is in agreement with the multi-system definition in Eq.~\eqref{ont-model-on-composite-state}.
This can be generalised to a pure multipartite state~\eqref{nu-on-multipartite-state} as 
\begin{equation}
	\label{composite-satate-as-single-states}
	\tikzfig{BCT-diagrams/12_mud20}=\tikzfig{BCT-diagrams/12_mud21}\; ,
\end{equation}
where empty triangles represent $\mu^{-1}$. When clear from context, we use empty triangles to represent both $\mu$ and its inverses. For a pure multipartite effect~\eqref{nu-on-multipartite-effect}, one has
\begin{equation}
	\label{composite-effects-as-single-effects}
	\scalebox{0.9}{\tikzfig{BCT-diagrams/12_mud22}}
    =
    \scalebox{0.9}{\tikzfig{BCT-diagrams/12_mud23}}\; .
\end{equation}
Note that to derive both~\eqref{composite-satate-as-single-states} and~\eqref{composite-effects-as-single-effects}, we assume diagram preservation for the ontological model, where in App.~\ref{diagram-preservation-of-mu} we show why it holds for both of these cases.

For multipartite normalised atomic transformations~\eqref{nu-on-multipartite-atomic}, one has
\begin{align}
	\label{comp-atomic-mu}
	&\hspace{4.9cm}\tikzfig{BCT-diagrams/3_CompAtomicTrans5}\nonumber\\
	&=\scalebox{0.85}{\tikzfig{BCT-diagrams/12_mud24}}\; .
\end{align}
Once again, to derive the equation above, diagram preservation is required, which we show why it holds for this specific case in App.~\ref{diagram-preservation-of-mu}. 
To better understand how this works, consider a normalised atomic transformation where both the input and the output are bipartite systems. Applying the ontological model as described in~\eqref{comp-atomic-mu} to such a transformation yields
\begin{align}
	&\tikzfig{BCT-diagrams/4_Bipar}=\tikzfig{BCT-diagrams/12_mud27}\nonumber\\
	=&\tikzfig{BCT-diagrams/12_mud28}\;\;,
\end{align}
which simplifies to
\begin{equation}
	\tikzfig{BCT-diagrams/4_Bipar}=\tikzfig{BCT-diagrams/4_Bipar8}\;\;,
\end{equation}
which is in agreement with the multi-system definition in Eq.~\eqref{ont-model-on-composite-atomics}.

\subsubsection{Diagram preservation of $\mu$}
\label{diagram-preservation-of-mu}
For a pure bipartite state followed by $\nu$, diagram preservation implies that
\begin{equation}
	\tikzfig{BCT-diagrams/12_mud30}=\tikzfig{BCT-diagrams/12_mud13}=\tikzfig{BCT-diagrams/12_mud14}\;.
\end{equation}
This can be justified in this specific case, because applying the ontological model—regardless of whether diagram preservation is assumed, as in the leftmost diagram—must yield the same result as the rightmost diagram. The latter can be shown to coincide with the case where diagram preservation is assumed, that is, by first applying the ontological model to the bipartite pure state given by~\eqref{ont-model-on-composite-state}, and then applying $\mu$, as shown in the middle diagram. This confirms that diagram preservation holds in this instance.

For multipartite pure states, diagram preservation implies
\begin{align}
	\tikzfig{BCT-diagrams/12_mud32}&=\tikzfig{BCT-diagrams/12_mud33}\nonumber\\[0.4cm]
	&=\tikzfig{BCT-diagrams/12_mud31}\;.
\end{align}
The first diagram corresponds to the use of the ontological model, regardless of whether or not the model satisfies diagram preservation. This is equal to the final diagram—i.e., when the ontological model is used on the pure state resulting from the action of $\nu$. On the other hand, according to our definition of the ontological model on a pure composite states in~\eqref{ont-model-on-composite-state}, one can show that in the case where diagram preservation holds—as in the second diagram in the equation above—if one first applies the ontological model to the original composite pure state, and then applies $\mu$, the result will match that of the last diagram. Therefore, diagram preservation holds in this specific instance, and by recursively applying this reasoning, one can derive~\eqref{composite-satate-as-single-states}. The same argument applies to a composite pure effect in equation~\eqref{composite-effects-as-single-effects}, where using what we define in~\eqref{ont-model-on-composite-effect} one can show that diagram diagram preservation holds when one has
\begin{align}
	\tikzfig{BCT-diagrams/12_mud34}&=\tikzfig{BCT-diagrams/12_mud35}\nonumber\\[0.4cm]
	&=\tikzfig{BCT-diagrams/12_mud36}\;.
\end{align}
Using the above equation recursively, one is able to derive~\eqref{composite-effects-as-single-effects}.

Using the same logic, for a normalised atomic transformation on composite systems, diagram preservation implies
\begin{align}
	\tikzfig{BCT-diagrams/12_mud38}&=\tikzfig{BCT-diagrams/12_mud39}\nonumber\\[0.4cm]
	&=\tikzfig{BCT-diagrams/12_mud40}\;.
\end{align}
Which using~\eqref{ont-model-on-composite-atomics} ensures that diagram preservation indeed holds. Using the equation above recursively, one can obtain~\eqref{comp-atomic-mu}.

\vspace{3mm}

\section{Atomic transformations on composite systems}
\label{atomiccomp}
An atomic transformation on composite systems~\eqref{atomic-comp} is atomic, if and only if, for every ancillary system $E$ with dimension $n_E$. one has
\begin{align}
	\label{atomic-on-composite-systems}
	&\tikzfig{BCT-diagrams/13_BCT35}\nonumber\\
	=&\lambda\delta_{i'_1,i_1}\delta_{i'_2,i_2}\cdots\delta_{i'_p,i_p}\delta_{s'_1,s_1}\delta_{s'_2,s_2}\delta_{s'_{p-1},s_{p-1}}\tikzfig{BCT-diagrams/13_BCT36}\;,
\end{align}
Where $1\leq j_E \leq n_E$, and $s_E\in\lbrace0,1\rbrace$. For example, for an atomic transformation with a bipartite system input and a bipartite system output, one has
\begin{equation}
	\label{bipartite-atomic}
	\tikzfig{BCT-diagrams/13_BCT13}=\lambda\delta_{i,i_1}\delta_{j,i_2}\delta_{s,s_1}\tikzfig{BCT-diagrams/13_BCT14}\; .
\end{equation}
To show why a transformation $\mathscr{A}$ on composite systems is atomic if and only if~\eqref{atomic-on-composite-systems} holds, recall~\eqref{nu-on-multipartite-atomic} (originally stated for normalised atomic transformations, but it can be straightforwardly extended to atomic transformations by multiplying both sides by a scalar factor $\lambda$, which does not affect the argument). In this setting, $\mathscr{A}$ is expressed in terms of an atomic transformation $\mathscr{A}_s$ on the corresponding single system, and for clarity we denote this as $\mathscr{A} = \nu_p \circ \mathscr{A}_s \circ \nu^{-1}_{q}$, where $\nu_p$ denotes the consecutive application of $\nu$ for $p$ times, and $\nu^{-1}_{q}$ denotes the consecutive application of $\nu^{-1}$ for $q$ times.

To prove that $\mathscr{A}$ is atomic if and only if it satisfies~\eqref{atomic-on-composite-systems}, is equivalent of proving, $\mathscr{A}$ is atomic if and only if $\mathscr{A}_s$ is atomic. To see this equivalence, consider a transformation $\mathscr{A}$ on a composite system that satisfies~\eqref{atomic-on-composite-systems}, without assuming atomicity. Then, by~\eqref{atomic-on-composite-systems}, there exists a single-system transformation $\mathscr{A}_s$ such that $\mathscr{A}$ can be expressed in terms of $\mathscr{A}_s$ as $\mathscr{A} = \nu_p \circ \mathscr{A}_s \circ \nu^{-1}_{q}$. Since $\mathscr{A}_s$ satisfies~\eqref{atomic-trans}, it is atomic. Therefore, to conclude that $\mathscr{A}$ is atomic, it suffices to show that the atomicity of $\mathscr{A}_s$ implies the atomicity of $\mathscr{A}$. Also, to show the other direction, that an atomic transformation $\mathscr{A}$ satisfies~\eqref{atomic-on-composite-systems}, we can define a transformation $\mathscr{A}_s$ such that one can rewrite $\mathscr{A}$ in terms of $\mathscr{A}_s$ as $\mathscr{A} = \nu_p \circ \mathscr{A}_s \circ \nu^{-1}_{q}$. If we can show that the atomicity of $\mathscr{A}$ implies the atomicity of $\mathscr{A}_s$, then it follows that $\mathscr{A}_s$ satisfies~\eqref{atomic-trans}, which ultimately implies that $\mathscr{A}$ satisfies~\eqref{atomic-on-composite-systems}. Hence, to prove that $\mathscr{A}$ is atomic if and only if it satisfies~\eqref{atomic-on-composite-systems}, one must prove $\mathscr{A}$ is atomic if and only if $\mathscr{A}_s$ is atomic.

This indeed follows from the properties of $\nu$. We can use proof by contradiction to show that the atomicity of $\mathscr{A}_s$ implies the atomicity of $\mathscr{A}$. For that, suppose $\mathscr{A}_s$ is atomic, but $\mathscr{A}$ is not. This means there exist two transformations $\mathscr{A}_1$ and $\mathscr{A}_2$ such that $\mathscr{A} = \mathscr{A}_1 + \mathscr{A}_2$ and $\mathscr{A}_1 \not\propto \mathscr{A}_2$. But this leads to contradiction, since one has $\mathscr{A}_s = \nu^{-1}_q \circ \mathscr{A}_1 \circ \nu_p + \nu^{-1}_q \circ \mathscr{A}_2 \circ \nu_p$, and using the properties of $\nu$ implies that $\mathscr{A}_s$ is not atomic, because $\nu^{-1}_q \circ \mathscr{A}_1 \circ \nu_p \not\propto \nu^{-1}_q \circ \mathscr{A}_2 \circ \nu_p$. Hence, $\mathscr{A}$ is atomic. The same logic can be applied to show the other direction. That is, the atomicity of $\mathscr{A}$ implies the atomicity of $\mathscr{A}_s$. If one assumes the contrary, that is $\mathscr{A}_s$ is not atomic, then one would be led to a contradiction of having $\mathscr{A}$ as a non atomic transformation. Therefore, $\mathscr{A}$ is atomic if and only if $\mathscr{A}_s$ is atomic. Moreover, one can show that every transformation $t$ on composite systems can be uniquely expressed as a conical combination of atomic transformations on composite systems. This follows directly from Proposition~\ref{uniqueness}, since any transformation on composite systems can be mapped to a transformation on single systems via $\nu$. Consequently, one obtains a unique conical combination of atomic transformations $\mathscr{A}_s$ on single systems, which in turn implies that the conical combination of the corresponding composite normalised atomic transformations $\mathscr{A}$ is also unique for the original transformation.

\section{Consistency checks proofs}\label{App:Consistency}
\subsection{Diagram preservation proof}
\label{dpproof}
In App.~\ref{diagram-preservation-of-mu}, we have shown that the ontological model is diagram-preserving when applying $\mu$. Here, we aim to generalise this property and demonstrate that diagram preservation holds in every arbitrary scenario, not just in the specific cases we previously examined.

Since the ontological model is linear and every transformation in BCT can be expressed as a conical combination of atomic transformations, verifying that the ontological model preserves diagrams under sequential composition reduces to checking this property specifically for the sequential composition of atomic transformations. Recall that the sequential composition of atomic transformations \eqref{seq-comp-atomic} is given by
\begin{equation}
	\tikzfig{BCT-diagrams/13_BCT23}=\tikzfig{BCT-diagrams/13_BCT24}\; ,
\end{equation}
where $\tilde{\tau} =\tau\oplus \tau'$ and $\tilde{\lambda}=\delta_{l,i'_{0}}\lambda\lambda '$. If diagram preservation holds for the sequential composition of atomic transformations, using linearity, one must have
\begin{equation}
	\tikzfig{BCT-diagrams/14_dp}=\delta_{l,i'_{0}}\tikzfig{BCT-diagrams/14_dp1}\;,
\end{equation}
for all systems $n$, $q$, and $m$, all 
$1 \le i_0 \le n$, $1 \le l \le q$, $1 \le i'_0 \le q$, $1 \le l' \le m$, 
and all $\tau,\tau' \in \{0,1\}$. Using~\eqref{ont-model-on-single-atomics}, this can be shown to hold, as one has
\begin{equation}
	\tikzfig{BCT-diagrams/14_dp2}=\delta_{l,i'_{0}}\tikzfig{BCT-diagrams/14_dp3}\;.
\end{equation}
Similarly, using $\mu$ in the form it appears in~\eqref{comp-atomic-mu}, along with the above result, ensures that diagram preservation holds for the sequential composition of atomic transformations when systems are composite. Hence, diagram preservation also holds for the sequential composition of arbitrary transformations when systems are composite.

For parallel composition, first we can rewrite parallel composition of two arbitrary transformations as
\begin{equation}
	\label{dp1}
	\tikzfig{BCT-diagrams/9_GenDiag}=\tikzfig{BCT-diagrams/9_GenDiag1}=\tikzfig{BCT-diagrams/9_GenDiag2}\;,
\end{equation}
where we used diagram preservation for sequential composition. If one has
\begin{equation}
	\label{dp2}
\begin{split}
	\tikzfig{BCT-diagrams/9_GenDiag2}&=\tikzfig{BCT-diagrams/9_GenDiag3}=
    \\[1ex]&=
    \tikzfig{BCT-diagrams/9_GenDiag4}=\tikzfig{BCT-diagrams/9_GenDiag5}\;.
\end{split}
\end{equation}
Then diagram preservation holds for parallel composition. Hence, to verify the above equality, we must check diagram preservation for an arbitrary bipartite system in which one subsystem undergoes a transformation and the other remains unchanged, while also verifying preservation of the swap between subsystems and of the identity on each subsystem. 

Since the transformation in question can be rewritten as a conical combination of normalised atomic transformations, and since the ontological model is linear, it suffices to check diagram preservation for normalised atomic transformations.
\begin{equation}
	\label{diagram-preservation-atomic}
	\tikzfig{BCT-diagrams/4_OntMapping}=\tikzfig{BCT-diagrams/4_OntMapping1}=\tikzfig{BCT-diagrams/4_OntMapping6}\;.
\end{equation}
For every system $n_1$, $n_2$, $m$, and for every $1\leq i_0 \leq n_1$,  $1\leq l \leq n_2$, $\sigma\in\lbrace 0,1\rbrace$.
To verify this, one can write
\begin{equation}
	\label{atranslinebelow}
	\tikzfig{BCT-diagrams/1_AtTranWithLineBelow}=\sum_{i_{1},i_{2},s_{1},l_{1},l_{2},t_{1},\tau}C_{i_0}^{l,\sigma}\left( i_{1},i_{2},s_{1},l_{1},l_{2},t_{1},\tau\right)\tikzfig{BCT-diagrams/14_dp4}\;,
\end{equation}
which is the conical combination of normalised atomic transformations for the transformation on the left-hand side of the equation above. Applying the left-hand side to an arbitrary pure tripartite state with the ancillary system $E$ gives
\begin{equation}
	\label{LHS}
\begin{split}
	\tikzfig{BCT-diagrams/3_AtTranOn3ParS}&=\tikzfig{BCT-diagrams/3_AtTranOn3ParS1}=
    \\[1ex]&=
    \delta_{i,i_{0}}\tikzfig{BCT-diagrams/3_AtTranOn3ParS2}=\delta_{i,i_{0}}\tikzfig{BCT-diagrams/3_AtTranOn3ParS3}\;.
\end{split}
\end{equation}
For the first and the last equality, we have used the relabeling of tripartite pure states $\left(\left(i j \right)_{s}k\right)_{t}=\left(i\left(jk\right)_{s \oplus t}\right)_{s}$, as defined for BCT in~\cite[Eq.~38]{d2020classicality}, where it is shown that all relevant consistency checks for this relabeling are satisfied in BCT. Applying the right-hand side of~\eqref{atranslinebelow} to the same pure tripartite state gives
\begin{align}
	\label{RHS}
	&\sum_{i_{1},i_{2},s_{1},l_{1},l_{2},t_{1},\tau}C_{i_0}^{l,\sigma}\left( i_{1},i_{2},s_{1},l_{1},l_{2},t_{1},\tau\right)\tikzfig{BCT-diagrams/3_AtTranOn3ParS4}\nonumber\\[1ex]
	=&\sum_{i_{1},i_{2},s_{1},l_{1},l_{2},t_{1},\tau}C_{i_0}^{l,\sigma}\left( i_{1},i_{2},s_{1},l_{1},l_{2},t_{1},\tau\right)\delta_{i,i_{1}}\delta_{j,i_{2}}\delta_{s,s_{1}}\tikzfig{BCT-diagrams/3_AtTranOn3ParS5}\;,
\end{align}
which by putting \eqref{LHS} and \eqref{RHS} equal, we have
\begin{equation}
	C_{i_0}^{l,\sigma}\left( i_{1},i_{2},s_{1},l_{1},l_{2},t_{1},\tau\right)=\delta_{\sigma,\tau}\delta_{i_2,l_2}\delta_{l_1,l}\delta_{i_{1},i_{0}}\delta_{t_1,s_{1}\oplus\tau}\;.
\end{equation}
Substituting the above coefficients in (\ref{atranslinebelow}) gives us
\begin{equation}
	\label{AtLineBelow}
	\tikzfig{BCT-diagrams/1_AtTranWithLineBelow}=\sum_{l_2,s_{1}}\tikzfig{BCT-diagrams/2_BiPartAtTrans1}\; .
\end{equation}
Now, to verify \eqref{diagram-preservation-atomic}, we must also examine identity preservation. To see this, first note that in BCT, identity, for every system $m$, is given by
\begin{equation}
	\tikzfig{BCT-diagrams/8_PresOfId1}=\sum_{i}\tikzfig{BCT-diagrams/8_PresOfId} \;.
\end{equation}
This can be verified by applying the both sides to an arbitrary pure bipartite state with the subsystem $m$ and an ancillary system. By applying the ontological model to both sides of the equation above, one has
\begin{equation}
	\tikzfig{BCT-diagrams/8_PresOfId6}=\sum_{i}\tikzfig{BCT-diagrams/8_PresOfId2}=\sum_{i}\tikzfig{BCT-diagrams/14_dp5}=\tikzfig{BCT-diagrams/14_dp6}=\tikzfig{BCT-diagrams/14_dp7}\;.
\end{equation}
Hence, the ontological model is identity preserving. Having identity preservation, if we apply the ontological model to the left-hand side of~\eqref{atranslinebelow}, then, assuming diagram preservation holds, one must have
\begin{equation}
	\label{DiagPresAtLine}
	\tikzfig{BCT-diagrams/4_OntMapping}=\tikzfig{BCT-diagrams/4_OntMapping1}=\tikzfig{BCT-diagrams/4_OntMapping6}=\tikzfig{BCT-diagrams/14_dp10}\;.
\end{equation}
On the other hand, by applying the ontological model to the right-hand side of \eqref{AtLineBelow}, regardless of whether the ontological models is diagram preserving, one has
\begin{equation}
	\sum_{l_2,s_1}\tikzfig{BCT-diagrams/4_OntMapping3}=\sum_{l_2,s_1}\tikzfig{BCT-diagrams/14_dp8}=\tikzfig{BCT-diagrams/14_dp10}\;,
\end{equation}
where the last equality can be verified when we apply both sides to the orthonormal basis of the corresponding classical space. Hence, it coincides with~\eqref{DiagPresAtLine}, which implies that the ontological model is indeed digram preserving for~\eqref{DiagPresAtLine}, which ultimately implies diagram preservation for an arbitrary bipartite system where one subsystem undergoes a transformation while the other remains unchanged, i.e.,
\begin{equation}
	\label{transformation-and-identity}
	\tikzfig{BCT-diagrams/14_dp16}=\tikzfig{BCT-diagrams/14_dp17}\;.
\end{equation}

The final step that must be checked for parallel composition, is the swap preservation. This is analogous to what we have done previously, which is to write the swap in terms of conical combinations of bipartite normalised atomic transformations, for every system $n$ and $m$, as
\begin{equation}
	\label{SwapAtom}
	\tikzfig{BCT-diagrams/6_SwapBCT}=\sum_{i_{1},i_{2},s_{1},l_{1},l_{2},t_{1},\tau}C\left( i_{1},i_{2},s_{1},l_{1},l_{2},t_{1},\tau\right)\tikzfig{BCT-diagrams/14_dp12}\;.
\end{equation}
Recalling~\eqref{swap}, by applying the left-hand side of the above equation to an arbitrary pure tripartite state with an ancillary system $E$, one obtains
\begin{equation}
	\tikzfig{BCT-diagrams/6_SwapOn3ParS}=\tikzfig{BCT-diagrams/6_SwapOn3ParS1}\;,
\end{equation}
and by applying the right-hand side, one has
\begin{align}
	&\sum_{i_{1},i_{2},s_{1},l_{1},l_{2},t_{1},\tau}C\left( i_{1},i_{2},s_{1},l_{1},l_{2},t_{1},\tau\right)\tikzfig{BCT-diagrams/14_dp14}\nonumber\\[1ex]
	=&\sum_{i_{1},i_{2},s_{1},l_{1},l_{2},t_{1},\tau}C\left( i_{1},i_{2},s_{1},l_{1},l_{2},t_{1},\tau\right)\delta_{i,i_1}\delta_{j,i_2}\delta_{s,s_1}\tikzfig{BCT-diagrams/14_dp13}\;,
\end{align}
which gives
\begin{equation}
	C\left( i_{1},i_{2},s_{1},l_{1},l_{2},t_{1},\tau\right)=\delta_{i_{2},l_2}\delta_{l_2,i_1}\delta_{t_1,s_1}\delta_{\tau,s_1} .
\end{equation}
Substituting these conical coefficients in~\eqref{SwapAtom} results in 
\begin{equation}
	\tikzfig{BCT-diagrams/6_SwapBCT}=\sum_{l_1,l_2,t_1}\tikzfig{BCT-diagrams/6_SwapOn3ParS3}\;.
\end{equation}
Now we are in a position to see if the ontological model preserves the swap, and for this one needs to apply the ontological model to the both sides of equation above, which gives
\begin{equation}
\begin{split}
	\tikzfig{BCT-diagrams/7_OntMapOfSwap}&=\sum_{l_1,l_2,t_1}\tikzfig{BCT-diagrams/7_OntMapOfSwap1}=
    \\[1ex]&=\sum_{l_1,l_2,t_1}\tikzfig{BCT-diagrams/14_dp15}=\tikzfig{BCT-diagrams/7_OntMapOfSwap6}\;.
\end{split}
\end{equation}

Hence, the ontological model is swapping preserving. Swapping preservation, together with~\eqref{transformation-and-identity}, implies that~\eqref{dp2} holds. Therefore, the ontological model is diagram preserving for the parallel composition of arbitrary transformations. Similarly, using $\mu$ the way it has been used in~\eqref{comp-atomic-mu}, one can show that diagram preservation also holds for the parallel composition of arbitrary transformations when systems are composite.

Note that here we have only shown diagram preservation in the case where every system, whether a single system or a composite system, is non-trivial. Nevertheless, since our ontological model defines how pure states, pure effects, normalised atomic transformations, and scalar $1$ are mapped, one can use the same reasoning to show that diagram preservation also holds when at least one system is trivial. In other words, in such cases, it is possible to breakdown an arbitrary composition into a conical combination of normalised atomic transformations, pure effects, pure states, and scalar $1$. Then, by the linearity of the ontological model, one can verify diagram preservation for arbitrary compositions involving at least one trivial system.

\subsection{Determinacy preservation proof}
\label{detproof}
Unique deterministic effect in BCT for a system $n$ is given by
\begin{equation}
	\tikzfig{BCT-diagrams/18_dis}=\sum_{i=1}^{n}\tikzfig{BCT-diagrams/13_BCT15}\;,
\end{equation}
Using the definition of the ontological model for effects in \eqref{effects}, one can verify that the ontological model preserves the unique deterministic effect on each system $n$, since
\begin{equation}\label{eq:deterministic-effect-preservation}
	\tikzfig{BCT-diagrams/2_Discarding}=\sum_{i}\tikzfig{BCT-diagrams/1_1ParSt5}=\sum_{i}\tikzfig{BCT-diagrams/10_effects3}=\tikzfig{BCT-diagrams/10_effects8}=\tikzfig{BCT-diagrams/2_Discarding1}\;,
\end{equation}
where the penultimate equality follows from the fact that the sum over all pure effects of system $n$ in the ontological model gives the unique deterministic effect for system $n$. Since the ontological model is digram preserving, for an arbitrary deterministic transformation $t$ from system $n$ to $m$~\eqref{eq:normalisation}, i.e.,
\begin{equation}
	\tikzfig{BCT-diagrams/10_effects4}=\tikzfig{BCT-diagrams/10_effects5}\;,
\end{equation}
Applying the ontological model to both sides, and using diagram preservation along with the uniqueness of the deterministic effect, implies that the corresponding transformation in the ontological model is also deterministic, i.e.,
\begin{equation}
	\tikzfig{BCT-diagrams/10_effects6}=\tikzfig{BCT-diagrams/10_effects7}\;,
\end{equation}
which, using $\mu$~\eqref{eq:mu}, one can show that it also holds true for transformations involving composite systems. Hence, the ontological model of BCT is determinacy-preserving. This ensures that the ontological model maps every arbitrary instrument in BCT to a valid instrument in classical theory as well, which follows from coarse-graining preservation. Since the full coarse-graining of all transformations within a given instrument yields a deterministic transformation, linearity of the ontological model and determinacy preservation imply that the coarse-graining of the corresponding transformations in classical theory is also deterministic. Therefore, the corresponding instrument in classical theory is itself a valid instrument.

\section{Reversible transformations}
\label{App:rev-trans}
\subsection{In bilocal classical theory}
\label{App:rev-trans-bct}
For all systems $n\neq 1$, a transformation $\mathscr{R}$ is reversible in BCT (see Postulate~3 and Proposition~1 in~\cite{d2020classicality}) if and only if, there exists a permutation $\pi$ of $n$ elements and a binary variable $\sigma_i \in \lbrace 0,1 \rbrace$, such that for all non-trivial ancillary systems $E$, and all bipartite pure states $\vert (ij)_{s} )$ of the system of interest and ancilla, one has
\begin{equation}
	\label{reversible-transformations}
	\tikzfig{BCT-diagrams/16_rev11}=\tikzfig{BCT-diagrams/1_RevTransBCT1}\;.
\end{equation} 
Hence, we can denote a reversible transformation on system $n$ as
\begin{equation}
	\tikzfig{BCT-diagrams/1_RevTrans}\;.
\end{equation}
\subsection{In the ontological model}
\label{App:rev-trans-ont}
Here we derive the ontological representation of reversible transformations in BCT. This is not strictly necessary for any of our results, but shows example calculations that can be performed using the ontological representation, and provides a tool which will be useful for future research on applications of BCT. 

Writing an arbitrary reversible transformation on system $n$ as a conical combination of normalised atomic transformations, one has
\begin{equation}
	\label{single-rev}
	\tikzfig{BCT-diagrams/1_RevTrans}=\sum_{i_{0},l,\tau}C^{\sigma}_{\pi}(i_{0},l,\tau)\tikzfig{BCT-diagrams/15_ea2}\;.
\end{equation}
After applying the both sides of the equation above to an arbitrary bipartite pure state of system $n$ and ancilla $E$, one has
\begin{equation}
	\tikzfig{BCT-diagrams/1_RevTransBCT}=\sum_{i_{0},l,\tau}C^{\sigma}_{\pi}(i_{0},l,\tau)\tikzfig{BCT-diagrams/2_RevTransAtomic1}=\sum_{i_{0},l,\tau}C^{\sigma}_{\pi}(i_{0},l,\tau)\delta_{i,i_{0}}\tikzfig{BCT-diagrams/2_RevTransAtomic2}\;,
\end{equation}
which, when put equal to the right-hand side of equation~\eqref{reversible-transformations}, one has
\begin{equation}
	C^{\sigma}_{\pi}(i_{0},l,\tau)=\delta_{\pi\left(i_{0}\right),l}\delta_{\sigma_{i_{0}},\tau}.
\end{equation}
Substituting the above coefficients in~\eqref{single-rev} and applying the ontological model, one has
\begin{equation}
	\label{RevSing}
	\tikzfig{BCT-diagrams/5_RevElemOnt}=\sum_{i_0}\tikzfig{BCT-diagrams/5_RevElemOnt1}=\sum_{i_0}\tikzfig{BCT-diagrams/16_rev}=\tikzfig{BCT-diagrams/16_rev1}\;,
\end{equation}
where
\begin{equation}
	\tikzfig{BCT-diagrams/16_rev2}:=\tikzfig{BCT-diagrams/16_rev3}\;,\; \tikzfig{BCT-diagrams/16_rev4}:=\tikzfig{BCT-diagrams/16_rev5}\;.
\end{equation}
The subscript $S$ denotes that $\sigma_S$ is acting upon the single system $n$. The last equality in~\eqref{RevSing} can be verified when one applies both sides to the orthonormal basis of the corresponding classical space. 

Using $\mu$~\eqref{mu-derivation}, we can also derive how the ontological model maps reversible transformations on composite systems. For example, for a bipartite reversible transformation $\mathscr{R}$, one has
\begin{equation}
	\tikzfig{BCT-diagrams/5_RevElemOnt4}=\tikzfig{BCT-diagrams/16_rev6}\;,
\end{equation}
where $\mathscr{R}s$ denotes the corresponding reversible transformation on a single system. Before examining how the ontological model maps a bipartite reversible transformation, we first need to specify how to denote a permutation of a pure bipartite state in BCT. To this end, consider applying the bipartite reversible transformation to a pure tripartite state, with an arbitrary ancillary system $E$:
\begin{align}
	\tikzfig{BCT-diagrams/16_rev13}
    &=\tikzfig{BCT-diagrams/16_rev14}
    =
    \nonumber\\[1ex]&=\tikzfig{BCT-diagrams/16_rev15}
	=
    \nonumber\\[1ex]&=\tikzfig{BCT-diagrams/16_rev16}\;,
\end{align}
where $\sigma_{Q(i,j,s)}\in \lbrace 0,1\rbrace$, and $\pi\left(Q\left(i,j,s\right)\right)$ denotes a permutation of $2nm$ elements. Note that we have assumed that if a reversible transformation $\mathscr{R}$ on bipartite systems can be expressed in terms of a transformation $\mathscr{R}_s$ on the corresponding single system as $\mathscr{R}=\nu\circ\mathscr{R}_s\circ\nu^{-1}$, then $\mathscr{R}_s$ is reversible itself, where we used the reversibility of $\mathscr{R}_s$ in the last equality of the equation above. This indeed holds, as the existence of the inverse of $\mathscr{R}$ implies that its corresponding single-system transformation is the inverse of $\mathscr{R}_s$. Therefore, $\mathscr{R}_s$ is indeed reversible.

Since $\pi\left(Q\left(i,j,s\right)\right)$ is also representing a permutation of a bipartite pure state, we define
\begin{equation}
	\tikzfig{BCT-diagrams/16_rev17}:= \tikzfig{BCT-diagrams/16_rev18}\;.
\end{equation}
In other words, the permuted bipartite pure state corresponding to the permuted single pure state $\pi(Q(i,j,s))$ can be denoted as $\vert (i'j')_{s'} )$, where one has $1 \leq i' \leq n$, $1 \leq j' \leq m$, and $s' \in \{0,1\}$, each being a function of the original indices of the bipartite state, which we denote this functional relation as $i' = \pi_n(i,j,s)$, $j' = \pi_m(i,j,s)$, and $s' = \pi_B(i,j,s)$.  Therefore, one has 
\begin{equation}
	\tikzfig{BCT-diagrams/16_rev13}=    \tikzfig{BCT-diagrams/16_rev19}\;,
\end{equation} 
Where $\sigma_{i,j,s}=\sigma_{Q(i,j,s)}$. Hence, we can denote a reversible transformation on a bipartite system $n\boxtimes m$ as $\mathscr{R}^{\sigma}_{\pi_n,\pi_m,\pi_B}$, and its corresponding reversible transformation on a single system as $\mathscr{R}s_{\pi_Q}^{\sigma_Q}$. Now, if we apply the ontological model, we have
\begin{align}
	\label{bipartite-reversible-transformation}
	\tikzfig{BCT-diagrams/16_rev20}&=\tikzfig{BCT-diagrams/16_rev21}\nonumber\\
	=& \tikzfig{BCT-diagrams/16_rev22}\nonumber\\
	\vspace{0.6pt}\nonumber\\
	=&\tikzfig{BCT-diagrams/16_rev23}\;,
\end{align}
wherein one has
\begin{equation}
	\scalebox{0.8}{\tikzfig{BCT-diagrams/16_rev25}}=\scalebox{0.8}{\tikzfig{BCT-diagrams/16_rev26}}\;,
    \scalebox{0.8}{\tikzfig{BCT-diagrams/16_rev27}}=\scalebox{0.8}{\tikzfig{BCT-diagrams/16_rev28}}\;,
    \scalebox{0.8}{\tikzfig{BCT-diagrams/16_rev29}}=\scalebox{0.8}{\tikzfig{BCT-diagrams/16_rev30}}\;,
\end{equation}
and
\begin{equation}
	\tikzfig{BCT-diagrams/16_rev9}:=\tikzfig{BCT-diagrams/16_rev12}\;.
\end{equation}
Alternatively,  we can express any reversible transformation on bipartite systems as a conical combination of normalised atomic transformations on bipartite systems. By determining the explicit form of the conical coefficients and applying the ontological model, one derives the same result as in~\eqref{bipartite-reversible-transformation}.

\end{document}